\documentclass[a4paper,10pt]{article}
\pdfoutput=1

\usepackage[utf8]{inputenc}
\usepackage[T1]{fontenc}

\usepackage{a4wide}

\usepackage{amsmath,amssymb}
\usepackage{amsthm}
\usepackage{enumerate} 

\usepackage{authblk}

\usepackage[svgnames]{xcolor}
\definecolor{highlightNEW}{named}{black}

\usepackage{hyperref}
\hypersetup{
    colorlinks=true,
    linkcolor=Navy,
    citecolor=Navy,
    urlcolor=Navy
}

\usepackage{graphicx}
\graphicspath{{fig/}}
\usepackage{epstopdf}
\usepackage{subcaption}

\usepackage{booktabs}
\usepackage{multirow}

\usepackage[absolute,overlay,showboxes]{textpos}
\setlength{\TPHorizModule}{1in}
\setlength{\TPVertModule}{1in}
\setlength{\TPboxrulesize}{0.8pt}
\TPMargin{2mm}
\textblockrulecolour{Navy}

\usepackage[round]{natbib}
\bibpunct[, ]{(}{)}{;}{a}{}{,}

\setlength{\bibsep}{0pt}

\newtheorem{theorem}{Theorem}[section] 
\newtheorem{corollary}[theorem]{Corollary} 
\newtheorem{example}[theorem]{Example} 
\newtheorem{lemma}[theorem]{Lemma} 
\newtheorem{proposition}[theorem]{Proposition} 
\newtheorem{remark}[theorem]{Remark}

\newcommand{\doi}[1]{DOI~\href{\detokenize{http://dx.doi.org/#1}}{\detokenize{#1}}}

\renewcommand{\d}{\,\mathrm{d}}
\newcommand{\e}{\mathrm{e}}
\newcommand{\E}{\mathbb{E}}
\newcommand{\F}{\mathcal{F}}

\newcommand{\R}{\mathbb{R}}

\newcommand{\1}{\mathbf{1}}

\newcommand{\la}{\!\left\langle}
\newcommand{\ra}{\right\rangle}
\newcommand{\p}{\partial}

\newcommand*{\Cdot}[1][1.25]{%
  \mathpalette{\CdotAux{#1}}\cdot%
}
\newdimen\CdotAxis
\newcommand*{\CdotAux}[3]{%
  {%
    \settoheight\CdotAxis{$#2\vcenter{}$}%
    \sbox0{%
      \raisebox\CdotAxis{%
        \scalebox{#1}{%
          \raisebox{-\CdotAxis}{%
            $\mathsurround=0pt #2#3$%
          }%
        }%
      }%
    }%
    \dp0=0pt %
    \sbox2{$#2\bullet$}%
    \ifdim\ht2<\ht0 %
      \ht0=\ht2 %
    \fi
    \sbox2{$\mathsurround=0pt #2#3$}%
    \hbox to \wd2{\hss\usebox{0}\hss}%
  }%
}
\makeatletter
\def\mathcolor#1#{\@mathcolor{#1}}
\def\@mathcolor#1#2#3{%
  \protect\leavevmode
  \begingroup
    \color#1{#2}#3%
  \endgroup
}
\makeatother

\newcommand{\NEW}[1]{\mathcolor{highlightNEW}{#1}}
\let\oldalpha\alpha
\renewcommand{\alpha}{\mathcolor{highlightNEW}{\oldalpha}}
\newcommand{\ccode}[2]{\par
        \vspace*{8pt}
        {{\leftskip18pt\rightskip\leftskip
        \noindent{\it #1}\/: #2\par}}\par}
\newcommand{\keywords}[1]{\ccode{Keywords}{#1}}
\newcommand{\email}[1]{\href{mailto:#1}{#1}}

\def\received#1{Received~#1\par}


\newcommand{\jpTitle}{Decomposition formula for rough Volterra stochastic volatility models}
\newcommand{\jpAuthors}{R. Merino, J. Posp\'{\i}\v{s}il, T. Sobotka, T. Sottinen and J. Vives}
\newcommand{\jpKeywords}{Volterra stochastic volatility; rough volatility; rough Bergomi model; option pricing; decomposition formula}
\newcommand{\jpMSC}{60G22; 91G20; 91G60}
\newcommand{\jpJEL}{G12; C58; C63}
\newcommand{\jpDateReceived}{1 August 2019}
\newcommand{\jpDate}{}

\author[1,4]{Ra\'{u}l Merino} 
\author[2]{Jan Posp\'{\i}\v{s}il\thanks{Corresponding author, \email{honik@kma.zcu.cz}}} 
\author[2]{Tom\'{a}\v{s} Sobotka} 
\author[3]{\authorcr Tommi Sottinen} 
\author[1]{Josep Vives} 
\affil[1]{Facultat de Matem\`{a}tiques i Inform\`{a}tica, Universitat de Barcelona, \authorcr Gran Via 585, 08007 Barcelona, Spain,\vspace*{3pt}}
\affil[2]{NTIS - New Technologies for the Information Society, Faculty of Applied Sciences, \authorcr University of West Bohemia, Univerzitn\'{\i} 2732/8, 301 00 Plze\v{n}, Czech Republic,\vspace*{3pt}}
\affil[3]{Department of Mathematics and Statistics, University of Vaasa, \authorcr P.O. Box 700, FIN-65101 Vaasa, Finland,\vspace*{3pt}}
\affil[4]{VidaCaixa S.A., Investment Risk Management Department, \authorcr C/Juan Gris, 2-8, 08014 Barcelona, Spain.}

\ifpdf
\hypersetup{
  pdftitle={\jpTitle},
  pdfauthor={\jpAuthors},
  pdfkeywords={\jpKeywords},
  pdfinfo={
      MSC={\jpMSC},
      JEL={\jpJEL}
  }
}
\fi

\title{\textcolor{Navy}{\textsc{\jpTitle}}}
\date{\jpDate}

\begin{document}

\maketitle

\begin{center}
\received{\jpDateReceived}
\end{center}

\begin{abstract}
The research presented in this article provides an alternative option pricing approach for a class of rough fractional stochastic volatility models. These models are increasingly popular between academics and practitioners due to their surprising consistency with financial markets. However, they bring several challenges alongside. Most noticeably, even simple non-linear financial derivatives as vanilla European options are typically priced by means of Monte-Carlo (MC) simulations which are more computationally demanding than similar MC schemes for standard stochastic volatility models.

In this paper, we provide a proof of the prediction law for general Gaussian Volterra processes. The prediction law is then utilized to obtain an adapted projection of the future squared volatility -- a cornerstone of the proposed pricing approximation. Firstly, a decomposition formula for European option prices under general Volterra volatility models is introduced. Then we focus on particular models with rough fractional volatility and we  derive an explicit semi-closed approximation formula. Numerical properties of the approximation for a popular model -- the rBergomi model -- are studied and we propose a hybrid calibration scheme which combines the approximation formula alongside MC simulations. This scheme can significantly speed up the calibration to financial markets as illustrated on a set of AAPL options.

\end{abstract}

\keywords{\jpKeywords}
\ccode{MSC classification}{\jpMSC}
\ccode{JEL classification}{\jpJEL}

\clearpage

\section{Introduction}\label{sec:introduction}

It is well known that the main issue of the Black-Scholes model lies in its assumptions about volatility of the modelled asset. Opposed to the model assumptions, the realized volatility time series tend to cluster, depend on the spot asset level and certainly they do not take a constant value within a reasonable time-frame (see e.g. \cite{Cont01}).

To deal with the aforementioned inconsistencies, stochastic volatility (SV) models have been proposed originally by \cite{HullWhite87} and later e.g. by \cite{Heston93}. These models do not only assume that the asset price follows a specific stochastic process, but also that the instantaneous  volatility of asset returns is of random nature as well. Especially, the latter approach by Heston became popular in the eyes of both practitioners and academics. Several modifications of this model have been proposed over the last 20 years: models with jump-diffusion dynamics \citep{Bates96,Duffie00}, with time-dependent parameters \citep{MikhailovNogel03,Elices08,Benhamou10}, with fractionally scaled volatility \citep{ComteRenault98, Alos07, ElEuchRosenbaum19} and models with several aspects combined \citep{PospisilSobotka16amf,BaustianMrazekPospisilSobotka17asmb}.

The original pricing approach of \cite{Heston93} was several times revisited, e.g by \cite{Lewis00}, \cite{Attari04} and \cite{Albrecher07} with focus on semi-closed form Fourier transform solutions by \cite{KahlJackel06}, \cite{Alfonsi10} with respect to Monte-Carlo simulation techniques and last but not least by \cite{Alos12} who introduced an analytical approach to option pricing approximation. This approach improves on the techniques introduced by \cite{HullWhite87} and  shows how an adaptive projection of future volatility can be used to price European options under \cite{Heston93} model. Many other papers generalized this idea, see e.g. \cite{Alos15, MerinoVives15}, or recently also a paper by \cite{MerinoPospisilSobotkaVives18ijtaf}. In this article, we revisit the latter approach and we come up with the approximation technique for SV models with volatility process driven by the fractional Brownian motion. This includes exponential rough fractional volatility models introduced by \cite{Gatheral14,Gatheral18}.
 
Although many SV models have been proposed since the original \cite{HullWhite87} model, it seems that none of them can be considered as the universal best market practice approach. Several models might perform well for calibration to complex volatility surfaces, but can suffer from over-fitting or they might not be robust in the sense described by \cite{PospisilSobotkaZiegler18ee}. Also a model with a good fit to implied volatility surface might not be in-line with the observed time-series properties.

Pioneers of the fractional SV models -- \cite{ComteRenault98}, see also \cite{Comte12} -- assumed the so-called Hurst parameter \footnote{Named after a hydrologist Harold Edwin Hurst, for more information on fractional Brownian motion, see Section \ref{ssec:fBm} and e.g. the article by \cite{Mandelbrot68}} ranged within $H \in (1/2,1)$ which implies that the spot variance evolution is represented by a persistent process, i.e. it would have a long-memory property. In \cite{Alos07}, a mean reverting fractional stochastic volatility model with $H\in (0,1)$ was presented. \cite{Gatheral18} and \cite{Bayer16} came up with a more detailed analysis of rough fractional volatility models which should be consistent with market option prices \citep{Bayer16}, with realized volatility time series and also they could provide superior volatility prediction results to several other models \citep{Bennedsen17}. An approach considering a two factor fractional volatility model, combining a rough term ($H<\frac{1}{2}$) and a persistent one ($H>\frac{1}{2}$), was presented in \cite{Funahashi17}. Recently, also an approximation for target-volatility options under log-normal fractional SABR model was studied by \cite{Alos19} who use the Malliavin caluculus techniques to derive the decomposition formula. In parallel, short-term at-the-money asymptotics for a class of stochastic volatility models were studied by \cite{ElEuchEtAl19} who use the Edgeworth expansion of the density of an asset price.

A typical problem of rough fractional models lies in their tractability -- only cumbersome simulation techniques seem to be available for vanilla European options at the time of writing this article. This serves as a motivation to develop a pricing approximation based on the works of \cite{Alos12}, whose approach was further generalised by \cite{MerinoVives15} and \cite{MerinoPospisilSobotkaVives18ijtaf}. Without using the Malliavin calculus we derive a decomposition and approximation formula for European option prices under general Volterra volatility models and in particular under a class of models with rough fractional volatility. Obtained results hence give a better understanding of the whole surface of prices than results focusing only on at-the-money volatility skew.

The structure of the paper is the following. Section~\ref{sec:preliminaries} is devoted to preliminaries. In Section~\ref{sec:generic_deco} we present a generic decomposition formula of the vanilla European option fair value. In Setion~\ref{sec:model} we present Volterra volatility models. In particular in Section~\ref{ssec:genVolterra} we prove the prediction law for general Gaussian Volterra processes, in Section~\ref{ssec:expVolterra} we obtain the decomposition formula for the exponential Volterra volatility model including the error bound for the approximation formula and in Section~\ref{ssec:fBm} we provide particular results for exponential fracional volatility models (including the standard Brownian motion case). In Section~\ref{sec:numerics}, we provide a numerical comparison of approximated prices against Monte Carlo simulations. All obtained results are concluded in Section~\ref{sec:conclusion}.

\section{Preliminaries and notation}\label{sec:preliminaries}

Let $S=(S_{t}, t\in[0,T])$ be a strictly positive asset price process under a market chosen risk neutral probability measure $\mathcal{P} $ that follows the model:  

\begin{eqnarray}\label{model}
\d S_{t}= r S_{t} \d t + \sigma_{t}S_{t}  \left(\rho \d W_{t}  + \sqrt{1-\rho^{2}}\d \tilde{W}_{t} \right),
\end{eqnarray}
where $S_{0}$ is the current price, $r\geq 0$ is the interest rate, $W_{t}$ and $\tilde{W}_{t}$ are independent standard Wiener processes defined on a probability space $(\Omega, \mathcal{F}, \mathcal{P})$ and $\rho \in \left(-1,1\right)$. In the following, we will denote by $\mathcal{F}^{W}$ and $\mathcal{F}^{\tilde{W}}$ the filtrations generated by $W$ and $\tilde{W}$ respectively. Moreover, we define $\mathcal{F}:=\mathcal{F}^{W} \vee \mathcal{F}^{\tilde{W}}$. The volatility process $\sigma_{t}$ is a square-integrable process assumed to be adapted to the filtration generated by $W$ and its trajectories are assumed to be a.s. c\`adl\`ag and stricly positive a.e.. Note that $\rho$ is the correlation between the price and the volatility processes.

Without any loss of generality, it will be convenient in the following sections to make the change of variable $X_{t} = \log S_{t}, t\in[0,T]$, and write

\begin{eqnarray}\label{log-model}
\d X_{t}= \left(r -\frac{1}{2}\sigma^{2}_{t}\right)\d t + \sigma_{t} \left(\rho \d W_{t}  + \sqrt{1-\rho^{2}} \d \tilde{W}_{t} \right).
\end{eqnarray}

Recall that $Z:= \rho  W  + \sqrt{1-\rho^{2}} \tilde{W}$ is a standard Wiener process.

The following notation will be used throughout the paper:

\begin{itemize}

\item We will denote by $BS(t,x,y)$ the price of a plain vanilla European call option under the classical Black-Scholes model with constant volatility $y$, current log stock price $x$, time to maturity $\tau=T-t$, strike price $K$ and interest rate $r$. In this case, 
\begin{eqnarray}
\nonumber BS\left(t,x, y\right)= e^{x} \Phi(d_{+}) - K e^{-r\tau} \Phi(d_{-}),
\end{eqnarray}
where $\Phi(\cdot)$ denotes the cumulative distribution function of the standard normal law and 
\begin{eqnarray}
\nonumber d_{\pm}(y) = \frac{x-\ln K + (r \pm \frac{y^{2}}{2})\tau}{y\sqrt{\tau}}.
\end{eqnarray}

\item In our setting, the call option price is given by 

$$V_{t}=e^{-r\tau}{\mathbb E}_t [(e^{X_{T}}-K)^+]$$

where ${\mathbb E}_t$ is the conditional expectation respect to the $\sigma-$algebra $\mathcal{F}_t$.

\item  Recall that from the Feynman-Kac formula for the model \eqref{log-model}, the operator 
\begin{eqnarray}{\label{FK}}
\mathcal{L}_{y}:= {\partial}_t + \frac{1}{2}y^{2} {\partial^{2}_{x}} 
					+ \left(r -\frac{1}{2}y^2 \right) {\partial}_{x}-r
\end{eqnarray}
satisfies ${\mathcal L}_{y}BS(t,x,y)=0$.

\item We define the operators $\Lambda:=\partial_x$, $\Lambda^{n}:=\partial^{n}_x$, $\Gamma:=\left(\partial^{2}_{x}-\partial_{x}\right)$ and $\Gamma^{2}=\Gamma\circ \Gamma.$ In particular, for the Black-Scholes formula, using straightforward calculations, we get:
\begin{eqnarray*}
\Gamma BS(t,x,y)&=& \frac{e^{x}}{y\sqrt{2\pi \tau}}\exp\left(-\frac{d_+^2(y)}{2}\right),\\
\Lambda \Gamma BS(t,x,y)&=& \frac{e^{x}}{y \sqrt{2\pi \tau}}\exp\left(-\frac{d_+^2(y)}{2}\right)\left(1-\frac{d_+(y)}{y \sqrt{\tau}}\right),\\
\Gamma^2 BS(t,x,y)&=& \frac{e^{x}}{y \sqrt{2\pi \tau}}\exp \left(-\frac{d_+^2(y)}{2}\right)\frac{d_+^2(y)-y d_+(y) \sqrt{\tau}-1}{y^2\tau}.
\end{eqnarray*}

\item We define
$$R_{t}:=\frac{1}{8}\E_t\left[\int^{T}_{t}\d\la M,M\ra_{u}\right]$$
and 

$$U_{t}:=\frac{\rho}{2}\E_t\left[\int^{T}_{t}\sigma_{u}\d\la M,W\ra_{u}\right],$$
where $\la\cdot, \cdot\ra$ denotes the quadratic covariation process and $M$ is the $\mathcal{F}-$martingale defined by
\begin{equation}\label{definition_M}
M_{t}:=\int^{T}_{0}\E_t\left[\sigma^{2}_{s}\right]\d s.
\end{equation}
\end{itemize}

\section{A generic decomposition formula}\label{sec:generic_deco}

In this section, we provide an insight on a generic decomposition formula based on the work of \cite{Alos12}, \cite{MerinoVives15} and \cite{MerinoPospisilSobotkaVives18ijtaf}. In particular, we recover the results for a generic stochastic volatility model presented in \cite{MerinoPospisilSobotkaVives18ijtaf}.

It is well known that if the stochastic volatility process is independent of the price process, the pricing formula of a vanilla European call option is given by 
$$V_{t}=\E_t [BS(t,X_{t},{\bar \sigma}_{t})]$$  
where ${\bar \sigma}^2_{t}$ is the so called \emph{average future variance} that is defined by 
\begin{eqnarray}
\nonumber {\bar\sigma}^2_{t}:=\frac{1}{T-t} \int^{T}_{t}\sigma^{2}_{u}\d u.
\end{eqnarray}
Naturally, ${\bar\sigma}_{t}$ is called the \emph{average future volatility}.   

We consider the adapted projection of the future variance
\begin{equation}\label{e:a}
a_t^2 := \int^{T}_{t}\E_t[\sigma^{2}_{u}]\d u
\end{equation}
and the average future variance as
\begin{equation*}
v^2_{t}:=\E_t({\bar\sigma}^2_{t}) = \frac{a_t^2}{T-t} 
\end{equation*}
to obtain a decomposition of $V_{t}$ in terms of $v_{t}.$ This idea switches an anticipative problem related to the anticipative process ${\bar\sigma}_{t}$ into a 
non-anticipative one related to the adapted process $v_{t}$. 

Taking into account $M$ defined in \eqref{definition_M}, we can write
$$\d v^{2}_{t}= \frac{1}{T-t}\left[\d M_{t}+\left(v^{2}_{t}-\sigma^{2}_{t}\right)\d t\right].$$

In this paper, we will utilize the following lemma which is proved in \cite{Alos12}, p.~406.
\begin{lemma}\label{lemaclau} 
Let $0\leq t\leq u\leq T$ and $\mathcal{G}_{t}:=\mathcal{F}_{t}\vee
\mathcal{F}_{T}^{W}.$ For every $n\geq 0,$ there exists $C=C(n) $ such that
\begin{equation*}
\left\vert \mathbb{E} \left( \left. {\Lambda^{n}\Gamma BS}\left( u,X_{u},v_{u}\right)
\right\vert \mathcal{G}_{t}\right) \right\vert 
\leq C (a_u^2)^{-\frac12(n+1)},
\end{equation*}
where $a_t^2$ is defined by \eqref{e:a}.
\end{lemma}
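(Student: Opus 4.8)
The plan is to condition on $\mathcal{G}_t$ and exploit that this $\sigma$-algebra reveals the entire volatility trajectory while leaving only a single Gaussian degree of freedom in $X_u$. First I would observe that, because $\sigma$ is adapted to $\mathcal{F}^W$ and $W\perp\tilde W$, the conditional expectations in the definition of $a_u^2=\int_u^T\E_u[\sigma_s^2]\,\d s$ reduce to $\E_u[\sigma_s^2]=\E[\sigma_s^2\mid\mathcal{F}_u^W]$, which is $\mathcal{F}_u^W$-measurable. Since $u\le T$ we have $\mathcal{F}_u^W\subseteq\mathcal{F}_T^W\subseteq\mathcal{G}_t$, so $a_u^2$ and hence $v_u^2=a_u^2/(T-u)$ are $\mathcal{G}_t$-measurable and may be treated as constants under $\E(\,\cdot\mid\mathcal{G}_t)$. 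Writing $\mathcal{G}_t=\mathcal{F}_T^W\vee\mathcal{F}_t^{\tilde W}$, the only part of $X_u$ not $\mathcal{G}_t$-measurable is the integral $\sqrt{1-\rho^2}\int_t^u\sigma_s\,\d\tilde W_s$; as the increments of $\tilde W$ on $[t,u]$ are independent of $\mathcal{G}_t$ and $\sigma$ is $\mathcal{G}_t$-measurable, it follows that, conditionally on $\mathcal{G}_t$, $X_u$ is Gaussian with a known mean $m_u$ and variance $\Sigma_u^2:=(1-\rho^2)\int_t^u\sigma_s^2\,\d s\ge 0$.

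Next I would record the algebraic fact that, as a function of the spatial variable $x$, $\Gamma BS(u,x,y)$ is a constant multiple of a Gaussian density. Completing the square in the explicit expression $\Gamma BS(u,x,y)=\frac{e^{x}}{y\sqrt{2\pi\tau}}\exp(-d_+^2(y)/2)$ rewrites it as $\frac{e^{r\tau}}{K}\,\frac{1}{\sqrt{2\pi y^2\tau}}\exp\!\big(-(x-\mu_0)^2/(2y^2\tau)\big)$ for some center $\mu_0$ that will be irrelevant to the bound. The decisive point is that the variance of this Gaussian equals $y^2\tau$, which for $y=v_u$ and $\tau=T-u$ is exactly $v_u^2(T-u)=a_u^2$. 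Hence $\Lambda^{n}\Gamma BS(u,x,v_u)=\partial_x^{n}\Gamma BS(u,x,v_u)$ is $\frac{e^{r\tau}}{K}$ times the $n$-th $x$-derivative of a Gaussian density of variance $a_u^2$, i.e. $(a_u^2)^{-n/2}$ times a Hermite polynomial in $(x-\mu_0)/\sqrt{a_u^2}$ times that same density.

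With these two reductions, the conditional expectation becomes a one-dimensional integral of $\partial_x^{n}g$ against the conditional Gaussian law of $X_u$, where $g=g_{\mu_0}$ is the Gaussian density of variance $a_u^2$ and center $\mu_0$. I would evaluate it through the elementary identity $\partial_x^{n}g_{\mu_0}(x)=(-1)^{n}\partial_{\mu_0}^{n}g_{\mu_0}(x)$, which shifts the derivatives onto the center and reduces the quantity to $(-1)^{n}\partial_{\mu_0}^{n}F(\mu_0)$ with $F(\mu_0):=\E\big[g_{\mu_0}(X_u)\mid\mathcal{G}_t\big]$. As the overlap of one Gaussian density against another, $F$ is itself Gaussian in $\mu_0$ with variance $a_u^2+\Sigma_u^2$. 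Differentiating this Gaussian $n$ times produces a factor $(a_u^2+\Sigma_u^2)^{-(n+1)/2}$ — one half-power from the normalising constant and $n$ half-powers from the derivatives — multiplied by a Hermite-polynomial-times-Gaussian term whose supremum is a finite constant $c_n$ depending only on $n$. Since $\Sigma_u^2\ge 0$ forces $a_u^2+\Sigma_u^2\ge a_u^2>0$ (positivity from the a.e. positivity of $\sigma$), and $\tau=T-u\le T$ with $r\ge 0$ bounds the prefactor by $e^{rT}/K$, this yields the asserted estimate, uniformly in $t$, $u$ and in the volatility path.

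The part requiring the most care is the bookkeeping of the scaling exponent together with the uniformity of the bound: one must check that adding the extra conditional variance $\Sigma_u^2$ can only increase the total variance — so that replacing $a_u^2+\Sigma_u^2$ by $a_u^2$ is a legitimate upper bound for the negative power — and that the residual Hermite-times-Gaussian factor is controlled by a constant independent of the otherwise path-dependent mean $m_u$ and of $\Sigma_u^2$. Both hold because $z\mapsto|H_n(z)|e^{-z^2/2}$ is globally bounded, so the center of the conditioning Gaussian never enters the estimate. The genuinely conceptual step is the realisation that conditioning on $\mathcal{G}_t$ simultaneously freezes $v_u$ and Gaussianises $X_u$, and that $\Gamma BS$ is a Gaussian kernel of variance precisely $a_u^2$; once these are in place, the whole bound collapses to elementary Gaussian calculus, giving $C(n)=\frac{e^{rT}}{K}\,\frac{c_n}{\sqrt{2\pi}}$ with $K$, $r$, $T$ fixed.
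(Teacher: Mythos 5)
Your proposal is correct and takes essentially the same route as the proof the paper relies on (it cites Al\`os 2012, p.~406): condition on $\mathcal{G}_t$, under which $v_u$ and $a_u^2$ are measurable and $X_u$ is Gaussian with variance $(1-\rho^2)\int_t^u\sigma_s^2\,\d s$, recognize $\Gamma BS$ as a Gaussian kernel of variance exactly $v_u^2(T-u)=a_u^2$, and convolve, so that the $n$ derivatives yield the factor $\left(a_u^2+\Sigma_u^2\right)^{-\frac12(n+1)}\leq (a_u^2)^{-\frac12(n+1)}$ uniformly in the path. One harmless slip in your bookkeeping: from $e^{x}\Phi'(d_+)=Ke^{-r\tau}\Phi'(d_-)$ the Gaussian-kernel prefactor of $\Gamma BS$ is $Ke^{-r\tau}\leq K$, not $e^{r\tau}/K$, so your explicit constant should be $C(n)=K\,c_n/\sqrt{2\pi}$ rather than $e^{rT}c_n/(K\sqrt{2\pi})$ — this does not affect the validity of the lemma, which only asserts existence of $C(n)$.
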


\begin{remark}
It is easy to see that the previous Lemma holds for put options and for several other non-path dependent options as well (e.g. Gap options).
\end{remark}

Now we use a generic decomposition formula proved in \cite{MerinoPospisilSobotkaVives18ijtaf}.\par

\allowdisplaybreaks

\begin{theorem}[Generic decomposition formula]\label{Generic Deco}

Let $B_{t}$ be a continuous semimartingale with respect to the filtration $\mathcal{F}^W$, let $A(t,x,y)$ be a $C^{1,2,2} ([0,T]\times [0,\infty)\times [0,\infty))$ function and let $v^2_t$ and $M_t$ be defined as above. Then we are able to formulate the expectation of $e^{-rT}A(T,X_{T},v^{2}_{T})B_{T}$
in the following way:
\begin{eqnarray*}
&&{\mathbb{E}_{t}}\left[e^{-r(T-t)}A(T,X_{T},v^{2}_{T})B_{T}\right]=A(t,X_{t},v^{2}_{t})B_{t}\\
&+&\mathbb{E}_{t}\left[\int^{T}_{t}e^{-r(u-t)}\partial_{y}A(u,X_{u},v^{2}_{u})B_{u}\frac{1}{T-u}\left(v^{2}_{u}-\sigma^{2}_{u}\right)\d u\right]\\
&+&\mathbb{E}_{t}\left[\int^{T}_{t}e^{-r(u-t)}A(u,X_{u},v^{2}_{u})\d B_{u}\right]\\ 
&+&\frac{1}{2}\mathbb{E}_{t}\left[\int^{T}_{t}e^{-r(u-t)}\left(\partial^{2}_{x}-\partial_{x}\right)A(u,X_{u},v^{2}_{u})B_{u}\left(\sigma^{2}_{u}- v^{2}_{u}\right)\d u\right]\\ 
&+&\frac{1}{2}\mathbb{E}_{t}\left[\int^{T}_{t}e^{-r(u-t)}\partial^{2}_{y}A(u,X_{u},v^{2}_{u})B_{u}\frac{1}{(T-u)^{2}} \d\la M_{\Cdot},M_{\Cdot}\ra_{u}\right]\\ 
&+&\rho \mathbb{E}_{t}\left[\int^{T}_{t}e^{-r(u-t)}\partial^{2}_{x,y}A(u,X_{u},v^{2}_{u})B_{u}\frac{\sigma_{u}}{T-u}\d\la W_{\Cdot},M_{\Cdot}\ra_{u}\right]\\ 
&+&\rho \mathbb{E}_{t}\left[\int^{T}_{t}e^{-r(u-t)}\partial_{x}A(u,X_{u},v^{2}_{u})\sigma_{u} \d\la W_{\Cdot},B_{\Cdot}\ra_{u}\right]\\ 
&+&\mathbb{E}_{t}\left[\int^{T}_{t}e^{-r(u-t)}\partial_{y}A(u,X_{u},v^{2}_{u})\frac{1}{T-u} \d\la M_{\Cdot},B_{\Cdot}\ra_{u}\right].
\end{eqnarray*}
\end{theorem}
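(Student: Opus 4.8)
The plan is to obtain the identity from a single application of It\^o's formula to the discounted process $\Theta_u:=e^{-r(u-t)}A(u,X_u,v^2_u)$ together with the product rule for the continuous semimartingale $\Theta_u B_u$, and then to take the conditional expectation $\E_t[\cdot]$. Writing $Z=\rho W+\sqrt{1-\rho^2}\tilde W$ for the Brownian motion driving $X$, I would start from the two sets of dynamics already recorded in the text, namely $\d X_u=(r-\tfrac12\sigma^2_u)\d u+\sigma_u\d Z_u$ and $\d v^2_u=\frac{1}{T-u}[\d M_u+(v^2_u-\sigma^2_u)\d u]$, and expand
\[
\d(\Theta_u B_u)=\Theta_u\,\d B_u+B_u\,\d\Theta_u+\d\langle\Theta,B\rangle_u .
\]
The term $\Theta_u\,\d B_u$ reproduces the third summand of the statement directly.

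Next I would compute $\d\Theta_u$ by the two-dimensional It\^o formula in the variables $(x,y)=(X_u,v^2_u)$. Its second-order part yields the fifth and sixth summands: $\tfrac12\partial^2_yA$ multiplies $\d\langle v^2,v^2\rangle_u=(T-u)^{-2}\d\langle M,M\rangle_u$, while $\partial^2_{xy}A$ multiplies $\d\langle X,v^2\rangle_u=\rho\sigma_u(T-u)^{-1}\d\langle W,M\rangle_u$. Both reductions rest on the independence structure: since $\sigma$ and hence $M$, as well as $B$, are $\F^W$-adapted whereas $\tilde W\perp W$, one has $\d\langle Z,M\rangle_u=\rho\,\d\langle W,M\rangle_u$ and $\d\langle Z,B\rangle_u=\rho\,\d\langle W,B\rangle_u$. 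Applying the latter to $\d\langle\Theta,B\rangle_u$, whose $\Theta$-martingale part equals $e^{-r(u-t)}[\sigma_u\partial_xA\,\d Z_u+(T-u)^{-1}\partial_yA\,\d M_u]$, produces the seventh and eighth summands.

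The main step is the treatment of the $\d u$-drift coming from $B_u\,\d\Theta_u$. Collecting it gives $B_ue^{-r(u-t)}$ times
\[
\partial_uA-rA+\big(r-\tfrac12\sigma^2_u\big)\partial_xA+\tfrac12\sigma^2_u\,\partial^2_xA+\tfrac{1}{T-u}\big(v^2_u-\sigma^2_u\big)\partial_yA .
\]
I would now add and subtract $\tfrac12v^2_u$ in the coefficients of $\partial_xA$ and $\partial^2_xA$ so as to reconstitute the Black--Scholes operator $\mathcal L_{v_u}$ of \eqref{FK}; the left-over is precisely $\tfrac12(\sigma^2_u-v^2_u)(\partial^2_x-\partial_x)A=\tfrac12(\sigma^2_u-v^2_u)\Gamma A$, which is the fourth summand, while the $\partial_yA$ piece is the second summand. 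The obstacle I expect is exactly this algebraic regrouping: it collapses only because $A$ is a Black--Scholes solution, so that $\mathcal L_{v_u}A(u,X_u,v^2_u)=0$ and the reconstituted operator term vanishes; this is where the Feynman--Kac identity $\mathcal L_yBS(t,x,y)=0$ enters.

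Finally I would integrate over $[t,T]$ and apply $\E_t[\cdot]$. The two genuine stochastic integrals generated by $B_u\,\d\Theta_u$, namely those against $\d Z_u$ and $\d M_u$, are local martingales with $\F_t$-conditional mean zero; to upgrade this to an honest vanishing one needs an integrability/true-martingale argument, which for the intended choice $A=BS$ is furnished by the bounds of Lemma~\ref{lemaclau}. What survives after discarding these terms is exactly the right-hand side of the statement, which completes the proof.
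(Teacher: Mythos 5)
Your proposal is correct and follows essentially the same route as the proof the paper defers to (Theorem 3.1 of \cite{MerinoPospisilSobotkaVives18ijtaf}): It\^o's formula plus the product rule applied to $e^{-r(u-t)}A(u,X_u,v_u^2)B_u$, reduction of the brackets via $\d\la Z,M\ra_u=\rho\,\d\la W,M\ra_u$ and $\d\la Z,B\ra_u=\rho\,\d\la W,B\ra_u$ (using that $M$ and $B$ are $\mathcal{F}^W$-adapted and $\tilde W\perp W$), regrouping of the drift around the operator $\mathcal{L}_{v_u}$, and taking $\E_t$ so that the local martingale parts drop out.

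One remark is worth keeping. You correctly observe that the regrouped drift leaves the term $\mathcal{L}_{v_u}A(u,X_u,v_u^2)$, which vanishes only if $A$ solves the Black--Scholes equation $\mathcal{L}_{\sqrt{y}}A(t,x,y)=0$; this hypothesis is implicit in the statement as printed, since for a generic $C^{1,2,2}$ function $A$ the right-hand side would need the additional summand $\E_t\left[\int_t^T e^{-r(u-t)}\mathcal{L}_{v_u}A(u,X_u,v_u^2)B_u\,\d u\right]$ (take $A(t,x,y)=x$, $B\equiv 1$, $\rho=0$ and constant $\sigma$ to see the formula fails without it). Since every choice of $A$ used downstream ($BS$, $\Lambda\Gamma BS$, $\Gamma^2 BS$) satisfies the PDE --- $\mathcal{L}_y$ has $x$-independent coefficients, hence commutes with $\partial_x$ --- your version covers all of the paper's applications, and your closing point that the vanishing of the stochastic integrals requires an integrability upgrade (supplied for $A=BS$ by Lemma~\ref{lemaclau}) is the same gap the cited proof also treats only implicitly.
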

\begin{proof}
See the proof of Theorem 3.1 in \cite{MerinoPospisilSobotkaVives18ijtaf}.
\end{proof}

Using the previous decomposition formula, we find that 
\begin{corollary}[BS decompostion formula]\label{BS Deco}
Under assumptions of the Theorem \ref{Generic Deco}, we can obtain a decomposition of European option price $V_t$ as:
\begin{align*}
V_{t} &= BS(t,X_{t},v_{t})\\
&\quad+ \frac{\rho}{2}\E_{t}\left[\int^{T}_{t}e^{-r(u-t)}\Lambda\Gamma BS(u,X_{u},v_{u})\sigma_{u} \d\la W_{\Cdot},M_{\Cdot}\ra_{u}\right]\\
&\quad+ \frac{1}{8}\E_{t}\left[\int^{T}_{t}e^{-r(u-t)}\Gamma^{2} BS(u,X_{u},v_{u}) \d\la M_{\Cdot},M_{\Cdot}\ra_{u}\right] \\
&= BS(t,X_{t},v_{t}) +(I)+(II)
\end{align*}
\end{corollary}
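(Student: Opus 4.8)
The plan is to specialise the Generic Decomposition Formula (Theorem~\ref{Generic Deco}) to the auxiliary data $A(t,x,y):=BS(t,x,\sqrt{y})$, so that the third argument records the \emph{variance} rather than the volatility, together with the constant semimartingale $B_t\equiv 1$. With this choice $B_T=1$, $\d B_u=0$ and all quadratic covariations involving $B$ vanish, so the three $B$-dependent lines of Theorem~\ref{Generic Deco} drop out at once. Since $BS(T,x,v)=(e^{x}-K)^{+}$ is the terminal payoff, the left-hand side becomes $\E_t\bigl[e^{-r(T-t)}(e^{X_T}-K)^{+}\bigr]=V_t$, while the first term on the right-hand side is $A(t,X_t,v_t^2)=BS(t,X_t,v_t)$, the desired leading order. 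It then remains to show that, of the four surviving integral terms, two cancel and the other two reproduce $(I)$ and $(II)$.

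The engine of the computation is the classical Black--Scholes \emph{Vega--Gamma} identity $\partial_v BS(t,x,v)=(T-t)\,v\,\Gamma BS(t,x,v)$, which follows from the explicit Greeks listed in Section~\ref{sec:preliminaries} (equivalently from $\mathcal{L}_v BS=0$ together with $e^{x}\phi(d_+)=Ke^{-r\tau}\phi(d_-)$). Writing $y=v^2$ and $\tau=T-u$, the chain rule gives $\partial_y A=\tfrac{1}{2v}\partial_v BS=\tfrac{T-u}{2}\,\Gamma BS$. Because $\Gamma=\partial_x^2-\partial_x$ commutes with $\partial_v$, the identity iterates, $\partial_y(\Gamma BS)=\tfrac{1}{2v}\Gamma\,\partial_v BS=\tfrac{T-u}{2}\,\Gamma^2 BS$, whence $\partial_y^2 A=\tfrac{(T-u)^2}{4}\,\Gamma^2 BS$ and $\partial_{x,y}^2 A=\tfrac{T-u}{2}\,\Lambda\Gamma BS$. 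Substituting these into Theorem~\ref{Generic Deco}, the first integral term becomes $\tfrac12\E_t[\int_t^T e^{-r(u-t)}\Gamma BS\,(v_u^2-\sigma_u^2)\,\d u]$ and the $(\partial_x^2-\partial_x)$ term becomes $\tfrac12\E_t[\int_t^T e^{-r(u-t)}\Gamma BS\,(\sigma_u^2-v_u^2)\,\d u]$; these are negatives of one another and cancel. The factor $(T-u)^2$ in $\partial_y^2 A$ exactly offsets the $1/(T-u)^2$ weight of the $\d\la M,M\ra$ term, leaving $(II)$, and the factor $T-u$ in $\partial_{x,y}^2 A$ cancels the $1/(T-u)$ weight of the $\d\la W,M\ra$ term, leaving $(I)$. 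This delivers the claimed decomposition.

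The only genuine difficulty is analytic rather than algebraic. The map $y\mapsto\sqrt{y}$ is not differentiable at the origin, and the Greeks $\Gamma BS$, $\Lambda\Gamma BS$, $\Gamma^2 BS$ blow up as $u\uparrow T$, so $A$ is not literally $C^{1,2,2}$ up to the boundary and the integrands are singular at maturity. I would therefore apply Theorem~\ref{Generic Deco} on $[t,T-\varepsilon]$, using the a.e. strict positivity of $\sigma$ (hence of $v_u^2$ and $a_u^2$) to stay away from $y=0$, and then let $\varepsilon\downarrow 0$. The integrability needed to pass to the limit is exactly what Lemma~\ref{lemaclau} supplies: its bounds $|\E(\Lambda^n\Gamma BS\mid\mathcal{G}_t)|\le C\,(a_u^2)^{-(n+1)/2}$ control the growth of the three Greeks in $(I)$ and $(II)$ against the vanishing covariation measures near $u=T$, so the limit exists and the boundary term $BS(T-\varepsilon,\cdot)\to(e^{X_T}-K)^{+}$ is recovered. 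Everything else is the routine Greek bookkeeping sketched above, so the cancellation step and the two weight-matchings are where I expect the substance of the argument to lie.
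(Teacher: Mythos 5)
Your proposal is correct and takes essentially the same route as the paper: the paper's proof is the single line ``apply Theorem~\ref{Generic Deco} with $A(t,X_{t},v^{2}_{t})=BS(t,X_{t},v_{t})$ (i.e.\ $A(t,x,y)=BS(t,x,\sqrt{y})$) and $B\equiv 1$,'' and your vega--gamma identity, the cancellation of the two $(v_u^2-\sigma_u^2)$ terms, and the $(T-u)$-weight matchings yielding $(I)$ and $(II)$ are exactly the omitted routine details. Your $\varepsilon$-truncation near maturity justified via Lemma~\ref{lemaclau} is a careful treatment of the regularity issue that the paper leaves implicit in the word ``straightforward.''
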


\begin{proof}
Using Theorem \ref{Generic Deco} with $A(t,X_{t},v^{2}_{t})=BS(t,X_{t},v_{t})$ and $B\equiv 1$,  the proof follows in a straightforward way.
\end{proof}

The terms $(I)$ and $(II)$ are not easy to evaluate. Therefore, it becomes important to find simpler approximations to $(I)$ and $(II)$ and estimate the error terms.  In order to find these approximations, we are going to apply Theorem \ref{Generic Deco} to find a decomposition formula for the terms $(I)$ and $(II)$. Using 
$$A(t,X_{t}, v^{2}_{t})=\Lambda\Gamma BS(t,X_{t},v_{t})$$ and $$B_{t}=U_{t}=\frac{\rho}{2}\E_{t}\left[\int^{T}_{t}\sigma_{u} \d\la W_{\Cdot},M_{\Cdot}\ra_{u}\right]$$
a decomposition of the term $(I)$ can be found, and using
$$A(t,X_{t}, v^{2}_{t})=\Gamma^{2} BS(t,X_{t},v_{t})$$ and $$B_{t}=R_{t}=\frac{1}{8}\E_{t}\left[\int^{T}_{t} \d\la M_{\Cdot},M_{\Cdot}\ra_{u}\right]$$
a decomposition of the term $(II)$ is obtained.

After that process, we can approximate the price of a call option by 

\begin{align*}
V_{t} &= BS(t,X_{t},v_{t})\\
&\quad +\Lambda\Gamma BS(t,X_{t},v_{t})U_{t}\\
&\quad +\Gamma^{2} BS(t,X_{t},v_{t})R_{t} \\
&\quad + \epsilon_{t}.
\end{align*}
where $\epsilon_{t}$ denotes error terms. Terms of $\epsilon_{t}$ under a general setting for $\sigma_t$ are provided in Appendix \ref{sec:error_terms}. We note that the error term will depend on the assumed volatility dynamics.

\section{Volterra volatility models}\label{sec:model}

\subsection{General Volterra volatility model}\label{ssec:genVolterra}
In this section, we apply the generic decomposition formula to model \eqref{log-model} with \emph{general Volterra volatility process} defined as
\begin{equation} \label{e:volprocess}
\sigma_t := g(t, Y_t), \quad t\geq 0,
\end{equation}
where $g: [0,+\infty) \times \mathbb{R} \mapsto [0,+\infty)$ is a deterministic function such that $\sigma_t$ belongs to $L^{1}(\Omega \times [0,+\infty))$ and $Y = (Y_t, t\geq 0)$ is the Gaussian Volterra process
\begin{equation} \label{e:Volterra}
Y_t = \int_0^t K(t,s)\,\d W_s,
\end{equation}
where $K(t,s)$ is a kernel such that for all $t>0$
\begin{equation}
\int\limits_0^t K^2(t,s) \d s < \infty, \tag{A1}\label{A1}
\end{equation}
and
\begin{equation}
\F^Y_t = \F^W_t. \tag{A2}\label{A2}
\end{equation}
Let 
\begin{equation}\label{e:r_ts}
r(t,s) := \E[Y_t Y_s], \quad t,s\geq 0, 
\end{equation}
denote the autocovariance function of process $Y_t$ and 
\begin{equation}\label{e:r}
r(t) := r(t,t) = \E[Y_t^2], \quad t\geq 0, 
\end{equation}
be the variance function (i.e. the second moment).

Extending the Theorem 3.1 in \cite{SottinenViitasaari2017} enables us to rephrase the adapted projection of the future squared volatility.

\begin{theorem}[Prediction law for Gaussian Volterra processes]\label{t:prediction_law}
Let $(Y_t, t\geq 0)$ be the Gaussian Volterra process \eqref{e:Volterra} satisfying assumptions \eqref{A1} and \eqref{A2}. 
Then, the conditional process $(Y_u|\F_t, 0\leq t\leq u)$ is Gaussian with $\F_{u}$-measurable mean function
\begin{align}
\hat{m}_{t}(u) &:= \E_{t}[Y_u] = \int_0^t K(u,s)\,\d W_s, \\
\intertext{and deterministic covariance function }
\hat{r}(u_1,u_2|t) &:= \E_{t}\left[ \left(Y_{u_1} - \hat{m}_t(u_1)\right) \left(Y_{u_2} - \hat{m}_t(u_2)\right)\right] \notag \\
&= r(u_1,u_2) - \int_0^t K(u_1,v) K(u_2,v)\,\d v 
\end{align}
for $u_1,u_2\geq t$. 
\end{theorem}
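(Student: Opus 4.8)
The plan is to exploit the explicit structure of the Volterra integral by splitting it at the conditioning time $t$. For $u\geq t$ I would write
$$Y_u = \int_0^t K(u,s)\,\d W_s + \int_t^u K(u,s)\,\d W_s =: \hat m_t(u) + N_t(u).$$
Assumption \eqref{A1} guarantees that $s\mapsto K(u,s)$ is square-integrable on $[0,u]$, so both pieces are well-defined Wiener integrals of finite variance; in particular $\int_0^t K^2(u,s)\,\d s \leq \int_0^u K^2(u,s)\,\d s <\infty$. The first term $\hat m_t(u)$ is a Wiener integral of a deterministic integrand over $[0,t]$, hence lies in the first chaos generated by $\{W_s : s\leq t\}$ and is $\F^W_t$-measurable, thus $\F_t$-measurable (and a fortiori $\F_u$-measurable).

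First I would establish that the residual process $(N_t(u))_{u\geq t}$ is independent of $\F_t$. Each $N_t(u)$ integrates a deterministic kernel against the increments of $W$ on $(t,u]$, so the whole family is a measurable functional of $(W_s - W_t)_{s\geq t}$, which is a Brownian motion independent of $\F^W_t$ by the independent-increments property and independent of $\F^{\tilde W}$ since $W\perp \tilde W$. Hence $(N_t(u))_{u\geq t}$ is independent of $\F_t = \F^W_t \vee \F^{\tilde W}_t$. Since $\hat m_t(\cdot)$ is $\F_t$-measurable while $N_t(\cdot)$ is a centred Gaussian process independent of $\F_t$, the conditional law of $(Y_u)_{u\geq t}$ given $\F_t$ is the law of the deterministic shift $\hat m_t(\cdot)$ of the Gaussian process $N_t(\cdot)$. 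It is therefore Gaussian, with conditional mean $\E_t[Y_u]=\hat m_t(u)$ (the residual being centred and independent) and conditional covariance equal to the unconditional covariance of $N_t$.

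It then remains only to compute that covariance. By the It\^o isometry and independence,
$$\hat r(u_1,u_2\mid t) = \E_t[N_t(u_1)N_t(u_2)] = \E[N_t(u_1)N_t(u_2)] = \int_t^{u_1\wedge u_2} K(u_1,s)K(u_2,s)\,\d s.$$
Writing the autocovariance as $r(u_1,u_2)=\int_0^{u_1\wedge u_2} K(u_1,s)K(u_2,s)\,\d s$ (again by the isometry) and using $t\leq u_1\wedge u_2$, this equals $r(u_1,u_2)-\int_0^t K(u_1,v)K(u_2,v)\,\d v$, which is the claimed expression.

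The algebra is routine; the delicate part will be the justification of conditional Gaussianity, where one must argue that the \emph{entire} residual process, not merely each marginal $N_t(u)$, is independent of $\F_t$, so that all finite-dimensional conditional distributions of $(Y_u)_{u\geq t}$ are genuinely Gaussian. This is exactly the point at which assumptions \eqref{A1}, ensuring integrability and well-definedness of every integral involved, and \eqref{A2}, ensuring $\F^Y_t=\F^W_t$ so that conditioning on the available information is compatible with the Volterra filtration and matches the framework of Theorem~3.1 in \cite{SottinenViitasaari2017} that we extend, enter the argument.
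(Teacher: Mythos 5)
Your proof is correct and follows essentially the same route as the paper: both split the Volterra integral at the conditioning time $t$, identify $\hat m_t(u)=\int_0^t K(u,s)\,\d W_s$ as the $\F_t$-measurable part, and compute the conditional covariance of the residual $\int_t^{u}K(u,s)\,\d W_s$ via the It\^o isometry on $[t,u_1\wedge u_2]$. The only difference is that you make explicit the independence of the entire residual process from $\F_t=\F^W_t\vee\F^{\tilde W}_t$ (hence conditional Gaussianity of all finite-dimensional distributions), a point the paper's proof uses implicitly when passing the conditional expectation inside the product of future Wiener integrals.
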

\begin{proof}
Let $0\leq t\leq u$. Then
\begin{align*}
\hat{m}_{t}(u) &= \E_{t}[Y_u] 
= \E\Biggl[ \int_0^u K(u,s)\,\d W_s \Bigg| \F_t^W \Biggr] 
= \int_0^t K(u,s)\,\d W_s
\intertext{and}
\hat{r}(u_1,u_2|t) 
&= \E\left[ \left(Y_{u_1}  - \hat{m}_t(u_1)\right) \left(Y_{u_2} - \hat{m}_t(u_2)\right)\right | \F_t^W ] \\
&= \E\Biggl[ \left( \int_0^{u_1} K(u_1,v_1) \d W_{v_1} - \int_0^t K(u_1,v_1)\,\d W_{v_1}\right) \\
&\quad\quad \cdot\left( \int_0^{u_2} K(u_2,v_2) \d W_{v_2} - \int_0^t K(u_2,v_2)\,\d W_{v_2}\right) \Bigg| \F_t^W \Biggr] \\
&= \E\Biggl[ \int_t^{u_1} K(u_1,v_1) \d W_{v_1} \int_t^{u_2} K(u_2,v_2) \d W_{v_2} \Bigg| \F_t^W \Biggr] \\
&= \int_t^{u_1\wedge u_2} K(u_1,v) K(u_2,v)\,\d v \\
&= r(u_1,u_2) - \int_0^t K(u_1,v) K(u_2,v)\,\d v.
\end{align*}
\end{proof}

In the upcoming sections, we will denote $\hat{r}(u|t):=\hat{r}(u,u|t)$.\\
Under the general volatility process \eqref{e:volprocess}, we have
\begin{eqnarray*}
v^{2}_{t}=\frac{1}{T-t}\int^{T}_{t} \E_{t}\left[g^{2}(u,Y_u)\right]\,\d u
\end{eqnarray*}
and the martingale
\begin{eqnarray*}
M_t=\int_0^T \E_t\left[g^{2}(u,Y_u)\right]\,\d u.
\end{eqnarray*}

In the upcoming lemma, we express the conditional expectation of the future squared volatility in terms of the mean function $\hat{m}_t(u)$.

Let us denote: 
\begin{equation*}
F(t,\hat{m}_{t}(u)):=\E_{t}\left[g^{2}(u,Y_u)\right],
\end{equation*}

\begin{lemma}[Auxiliary terms in the decomposition formula for the general volatility model]
Let $0\leq t\leq u$ and $F(t,\hat{m}_{t}(u))=\E_{t}\left[g^{2}(u,Y_u)\right]$, then 
\begin{align}
\d F(t,\hat{m}_{t}(u)) &= \left(\partial_1 F(t,\hat{m}_{t}(u)) + \frac12 \partial_{22} F(t,\hat{m}_{t}(u))\,K^{2}(u,t)\right)\d t \notag \\
&\quad+\partial_2 F(t,\hat{m}_{t}(u)) \d\hat{m}_{t}(u), \label{e:dF} \\
\d\la M_{\Cdot},W_{\Cdot}\ra_{t} &= \int^{T}_{0} \partial_{2} F(t,\hat{m}_{t}(u)) K(u,t) \d u \d t,\\
\d\la M_{\Cdot},M_{\Cdot}\ra_{t} &= \int^{T}_{0} \int^{T}_{0} \partial_2 F(t,\hat{m}_{t}(u_1)) \partial_2 F(t,\hat{m}_{t}(u_2)) \cdot \notag \\
&\qquad\qquad \cdot K(u_1,t)  K(u_2,t) \d u_1 \d u_2 \d t.
\end{align}
\end{lemma}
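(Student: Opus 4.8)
The plan is to use the prediction law (Theorem~\ref{t:prediction_law}) to reduce the lemma to an application of Itô's formula followed by the stochastic Fubini theorem. The key structural fact is that, conditionally on $\F_t$, the variable $Y_u$ is Gaussian with \emph{random} mean $\hat{m}_t(u)=\int_0^t K(u,s)\,\d W_s$ and \emph{deterministic} variance $\hat{r}(u|t)=r(u)-\int_0^t K^2(u,v)\,\d v$. Consequently $\E_t[g^2(u,Y_u)]$ depends on $\F_t$ only through $\hat{m}_t(u)$, and on $t$ only through $\hat{r}(u|t)$, which justifies the notation $F(t,\hat{m}_t(u))$ with the explicit representation $F(t,m)=\int_{\R} g^2\!\big(u,\,m+\sqrt{\hat{r}(u|t)}\,z\big)\,\phi(z)\,\d z$, where $\phi$ is the standard normal density. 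Moreover $\hat{m}_{\Cdot}(u)$ is a continuous $\F^W$-martingale with $\d\hat{m}_t(u)=K(u,t)\,\d W_t$, hence $\d\la\hat{m}_{\Cdot}(u),\hat{m}_{\Cdot}(u)\ra_t=K^2(u,t)\,\d t$.

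First I would apply Itô's formula to $F(t,\hat{m}_t(u))$, with $u$ held fixed as a parameter. Assuming $F\in C^{1,2}$ in its two arguments, this gives $\d F=\partial_1 F\,\d t+\partial_2 F\,\d\hat{m}_t(u)+\tfrac12\partial_{22}F\,\d\la\hat{m}_{\Cdot}(u),\hat{m}_{\Cdot}(u)\ra_t$, and substituting the quadratic variation yields \eqref{e:dF}. As a consistency check, differentiating the Gaussian integral in $\hat{r}(u|t)$ gives the heat-type identity $\partial_1 F=-\tfrac12 K^2(u,t)\,\partial_{22}F$ (using $\partial_t\hat{r}(u|t)=-K^2(u,t)$), so the drift in \eqref{e:dF} actually vanishes; this matches the fact that $M$ is an $\F$-martingale, but only the martingale part of \eqref{e:dF} is needed below.

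For the two covariation formulas I would split $M_t=\int_0^t g^2(u,Y_u)\,\d u+\int_t^T F(t,\hat{m}_t(u))\,\d u$ and differentiate in $t$. The Leibniz boundary terms at $u=t$ cancel, since $F(t,\hat{m}_t(t))=\E_t[g^2(t,Y_t)]=g^2(t,Y_t)$ because $\hat{r}(t|t)=0$, so the finite-variation contribution of the first integral is exactly offset. What survives is the stochastic part of $\int_t^T \d F(t,\hat{m}_t(u))\,\d u$; by \eqref{e:dF} and the stochastic Fubini theorem its Brownian component is $\big(\int_t^T \partial_2 F(t,\hat{m}_t(u))\,K(u,t)\,\d u\big)\,\d W_t$, which equals the stated $\int_0^T$ because the Volterra kernel satisfies $K(u,t)=0$ for $u<t$. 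Taking the quadratic covariation with $W$ (using $\d\la W_{\Cdot},W_{\Cdot}\ra_t=\d t$) gives the second formula, and squaring the $\d W_t$-coefficient gives the double-integral expression for $\d\la M_{\Cdot},M_{\Cdot}\ra_t$.

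The routine parts are the Itô expansion and the algebra of squaring the integrand. The genuine work lies in two technical justifications, which I expect to be the main obstacle. First, the smoothness $F\in C^{1,2}$ and the interchange of differentiation with the Gaussian integral require mild growth and integrability conditions on $g$ so that $\partial_1 F,\partial_2 F,\partial_{22}F$ exist and are continuous; these are the analytic assumptions implicitly carried by the $C^{1,2,2}$ framework underlying Theorem~\ref{Generic Deco}. Second, the stochastic Fubini theorem must be applied to $\int_0^T \partial_2 F(t,\hat{m}_t(u))K(u,t)\,\d u$, which needs the standard joint integrability of the integrand; this is controlled using \eqref{A1} together with the square-integrability of $\sigma$.
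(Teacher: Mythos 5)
Your proposal is correct and follows essentially the same route as the paper: both use the prediction law to write $\E_t[g^2(u,Y_u)]$ as a Gaussian smoothing $F(t,\hat{m}_t(u))$ with martingale $\d\hat{m}_t(u)=K(u,t)\,\d W_t$, apply It\^o's formula to obtain \eqref{e:dF}, and then interchange covariation and the $\d u$-integral to read off $\d\la M,W\ra_t$ and $\d\la M,M\ra_t$. Your only deviations are cosmetic --- splitting $M_t$ into realized and forward parts with cancelling boundary terms (the paper instead applies $\d\la\int_0^T F(\cdot,\hat m_{\Cdot}(u))\,\d u, \cdot\ra_t=\int_0^T\d\la F,\cdot\ra_t\,\d u$ directly), and noting explicitly that the drift in \eqref{e:dF} vanishes --- and both are consistent with the paper's conventions, including $K(u,t)=0$ for $u<t$.
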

\begin{proof}
Let $0\leq t\leq u$ and
\begin{equation}\label{eq:X1}
X_t(u) = \E_t\left[ g^{2}(u,Y_u) \right].
\end{equation}
Theorem \ref{t:prediction_law} implies that
\[
X_t(u) = \int_{\R} g^{2}(u,z) \hat{\varphi}_t(u,z)\,\d z,
\]
where 
\begin{equation}\label{eq:varphi}
\hat\varphi_t(u,z) = \frac{1}{\sqrt{2\pi\hat r(u|t)}} \exp\left\{-\frac12\frac{(z-\hat m_t(u))^2}{\hat r(u|t)}\right\}
\end{equation}
is a Gaussian density function with stochastic mean $\hat{m}_{t}(u)$ and deterministic variance $\hat{r}(u|t)$. 
To calculate the quadratic variation, we note that
\begin{equation}\label{eq:varphi-f}
\hat\varphi_t(u,z) = f(t,\hat m_t(u)),
\end{equation}
where
\begin{equation}\label{eq:f}
f(t,m) = \frac{1}{\sqrt{2\pi\hat r(u|t)}}
\exp\left\{-\frac12\frac{(z-m)^2}{\hat r(u|t)}\right\}.
\end{equation}
Since
\begin{equation*}
\d\la\hat m_{\cdot}(u)\ra_t = K^2(u,t)\, \d t,
\end{equation*}
we retrieve the following expression by It\^o's formula,
\begin{align*}
\d f(t,\hat m_{t}(u)) =
&\left(\p_1 f(t,\hat m_t(u))+\frac12 \p_{22} f(t,\hat m_t(u))K^2(u,t)\right)\d t \\
&+ \p_2 f(t,\hat m_t(u))\d\hat m_t(u),
\end{align*}
and consequently,
\begin{equation*}
\d\la f(\cdot,\hat m_{\cdot}(u))\ra_t 
=
\left(\p_{2}f(t,\hat m_t(u)) K(u,t)\right)^2 \d t.
\end{equation*}
Due to:
\begin{equation*}
\p_2 f(t,\hat m_t(u)) = \frac{z - \hat m_t(u)}{\hat r(u|t)}\hat\varphi_t(u,z), 
\end{equation*}
we obtain
\begin{equation*}
\d\la \hat \varphi_{\Cdot}(u,z)\ra_t = 
\left(\frac{z - \hat m_t(u)}{\hat r(u|t)}\hat\varphi_t(u,z)K(u,t)\right)^2 \d t.
\end{equation*}
More generally, we have
\begin{equation*}
\d\la \hat \varphi_{\Cdot}(u,z_1), \varphi_{\Cdot}(u,z_2) \ra_t = 
\frac{(z_1 - \hat m_t(u))(z_2 - \hat m_t(u))}{\hat r^2(u|t)}\hat\varphi_t(u,z_1)
\hat\varphi_t(u,z_2)K^2(u,t) \d t,
\end{equation*}
and consequently,
\begin{eqnarray*}
\d\la X_{\Cdot}(u)\ra_t 
&=&
\d\la \int_\R g^{2}(u,z_1)\hat\varphi(u,z_1)\d z_1,
\int_\R g^{2}(u,z_2)\hat\varphi(u,z_2)\d z_2 \ra_t \\
&=&
\iint_{\R^2} g^{2}(u,z_1)g^{2}(u,z_2) \d\la \hat \varphi(u,z_1), \hat \varphi(u,z_2)\ra_t  
\d z_1 \d z_2 \\
&=&
\iint_{\R^2} g^{2}(u,z_1)g^{2}(u,z_2) 
(\hat m_t(u)-z_1)(\hat m_t(u)-z_2) \cdot \\
&& \quad\cdot \hat\varphi_t(u,z_1) \hat\varphi_t(u,z_2)
\left(\frac{K(u,t)}{\hat r(u|t)}\right)^2 \d z_1 \d z_2\, \d t. 
\end{eqnarray*}

Let $F=F(t,m)$ be a $\mathcal{C}^{1,2}$-function of time $t$ and ``spot'' $m=\hat m_t(u)$ of the prediction martingale. Because of the filtrations are the same, we have, in general,
\begin{equation*}
\d\la F(\cdot, \hat m_{\Cdot}(u)), W_{\Cdot}\ra_t
=
\p_2 F(t, \hat m_t(u)) K(u,t)\, \d t.
\end{equation*}

Now we set
\[
F(t,\hat{m}_{t}(u)) = X_{t}(u)=\int_{\mathbb{R}} g^{2}(u,z) \hat{\varphi}_t(u,z)\,\d z,
\]
and applying the Itô formula, we obtain \eqref{e:dF}. Moreover, 
\begin{eqnarray*}
\d\la M_{\Cdot}, W_{\Cdot}\ra_t
&=&
\d\la \int_0^T F(\cdot,\hat m_{\Cdot}(u))\d u, W_{\Cdot}\ra_t  \\
&=&
\int_0^T \d\la F(\cdot,\hat m_{\Cdot}(u)), W_{\Cdot}\ra_t \d u \\
&=&
\int_0^T \p_2 F(t, \hat m_t(u)) K(u,t)\d u \,\, \d t
\end{eqnarray*}
and
\begin{eqnarray*}
\d\la M_{\Cdot}, M_{\Cdot}\ra_t
&=&
\d\la \int_0^T F(\cdot,\hat m_{\Cdot}(u))\d u, \int_0^T F(\cdot,\hat m_{\Cdot}(u))\d u \ra_t  \\
&=&
\int_0^T\!\!\!\int_0^T \d\la F(\cdot,\hat m_{\Cdot}(u_{1})), F(\cdot,\hat m_{\Cdot}(u_{2})) \ra_t \,\d u_{1} \d u_{2} \\
&=&
\int_0^T\!\!\!\int_0^T \p_2 F(t,\hat m_t(u_{1}))\p_2 F(t,\hat m_t(u_{2})) K(u_{1},t)K(u_{2},t) \d u_{1} \d u_{2} \,\d t. \\
\end{eqnarray*}
\end{proof}

\subsection{Exponential Volterra volatility model}\label{ssec:expVolterra}

Assume now that $X_t$ is the log-price process \eqref{log-model} with $\sigma_t$ being the \emph{exponential Volterra volatility process} 
\begin{equation}\label{e:expVolterra}
\sigma_t = g(t,Y_t) = \sigma_{0}\exp\left\{\xi Y_t \NEW{- \frac12 \alpha\xi^2 r(t)} \right\}, \quad t\geq 0,
\end{equation}
where $(Y_t,t\geq 0)$ is the Gaussian Volterra process \eqref{e:Volterra} satisfying assumptions \eqref{A1} and \eqref{A2}, $r(t)$ is its autocovariance function \eqref{e:r}, and $\sigma_0>0$, $\xi>0$ and $\alpha\in[0,1]$ are model parameters.

\begin{lemma}[Auxiliary terms in the decomposition formula for the exponential Volterra volatility model]\label{l:expVolterra}
Let $\sigma_t$ be as in \eqref{e:expVolterra} and $0\leq t\leq u$.
Then
\begin{align}
F(t,\hat{m}_{t}(u)) &= \sigma^{2}_{0}\exp \left\lbrace 2\xi \hat{m}_t(u) + 2\xi^2\hat{r}(u|t) \NEW{- \alpha\xi^2 r(u)}\right\rbrace,\\
\partial_2 F(t,\hat{m}_t(u)) &= 2\xi F(t,\hat{m}_t(u)),\\
\d\la M_{\Cdot},W_{\Cdot}\ra_{t}&= 2\sigma^{2}_{0}\xi \int^{T}_{0} \exp \left\{ 2\xi\hat m_t(u) + 2\xi^{2}\hat{r}(u|t) \NEW{-\alpha\xi^2 r(u)} \right\} K(u,t) \d u\, \d t,\\
d\la M_{\Cdot},M_{\Cdot}\ra_{t}&= 4\sigma^{4}_{0}\xi^2 \int_0^T \int_0^T \exp\left\{2\xi \left(\hat m_t(u_1)+\hat m_t(u_2)\right)\right\} \cdot\notag \\
&\qquad\qquad \cdot \exp\left\{2\xi^{2} \left(\hat{r}(u_1|t) + \hat{r}(u_2|t)\right)\right\} \cdot\notag \\
&\qquad\qquad \cdot \NEW{ \exp\left\{-\alpha\xi^2 \left( r(u_1) + r(u_2)\right) \right\} } \cdot\notag \\
&\qquad\qquad \cdot K(u_1,t)K(u_2,t) \d u_1 \d u_2 \, \d t.
\end{align} 
\end{lemma}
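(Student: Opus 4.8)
The plan is to specialize the general-model results to the exponential specification \eqref{e:expVolterra}, so that the three bracket formulas all follow by substitution once $F$ and its derivative $\partial_2 F$ are known in closed form. The preceding Lemma (for the general volatility model) already expresses $\d\la M_{\Cdot},W_{\Cdot}\ra_t$ and $\d\la M_{\Cdot},M_{\Cdot}\ra_t$ purely in terms of $\partial_2 F(t,\hat m_t(\cdot))$ and the kernel $K$, so the entire task reduces to two elementary computations followed by bookkeeping.

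First I would compute $F(t,\hat m_t(u)) = \E_t[g^2(u,Y_u)]$. For the exponential model one has $g^2(u,Y_u) = \sigma_0^2 \exp\{2\xi Y_u - \alpha\xi^2 r(u)\}$, and the factor $\exp\{-\alpha\xi^2 r(u)\}$ is deterministic, so it pulls out of the conditional expectation. By Theorem \ref{t:prediction_law}, conditionally on $\F_t$ the variable $Y_u$ is Gaussian with mean $\hat m_t(u)$ and variance $\hat r(u|t)$; applying the Gaussian moment generating function $\E_t[e^{2\xi Y_u}] = \exp\{2\xi \hat m_t(u) + 2\xi^2 \hat r(u|t)\}$ then yields the first claimed identity. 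The main (and essentially only) thing to get right here is this generating-function evaluation together with the observation that the compensator $-\alpha\xi^2 r(u)$ is non-random and therefore survives unchanged under $\E_t$.

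Next, reading $F(t,m) = \sigma_0^2 \exp\{2\xi m + 2\xi^2 \hat r(u|t) - \alpha\xi^2 r(u)\}$ as a function of the spot variable $m = \hat m_t(u)$, the exponent is affine in $m$, so differentiation is immediate and gives $\partial_2 F(t,\hat m_t(u)) = 2\xi F(t,\hat m_t(u))$, the second identity. Substituting this into the general-model expression for $\d\la M_{\Cdot},W_{\Cdot}\ra_t$ produces the prefactor $2\sigma_0^2\xi$ and the stated integrand, while substituting it into $\d\la M_{\Cdot},M_{\Cdot}\ra_t$ — where $\partial_2 F$ appears at both $u_1$ and $u_2$ — produces the prefactor $4\sigma_0^4\xi^2$ and the product of the two exponential kernels, giving the last two identities.

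Since every step is a specialization or a direct substitution, I do not expect a genuine obstacle; the only points requiring care are the normalization $(2\xi)^2/2 = 2\xi^2$ in the variance contribution of the moment generating function, and the tracking of the deterministic $\alpha$-compensator, which appears once per $u$ in the covariation with $W$ and once for each of $u_1,u_2$ in the quadratic variation of $M$. One should also confirm that the interchanges of $\E_t$, $\int_0^T$ and the quadratic-covariation operator are legitimate, which is guaranteed by \eqref{A1} and the integrability of $\sigma_t$ built into \eqref{e:volprocess}.
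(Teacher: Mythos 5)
Your proposal is correct and is in substance the same proof as the paper's: both specialize the general-model lemma, using the conditional Gaussian law from Theorem \ref{t:prediction_law} to evaluate $F(t,\hat m_t(u))$ in closed form and to obtain $\partial_2 F(t,\hat{m}_t(u)) = 2\xi F(t,\hat{m}_t(u))$, after which the two bracket identities follow by direct substitution of $\partial_2 F$ into the general-model expressions for $\d\la M_{\Cdot},W_{\Cdot}\ra_t$ and $\d\la M_{\Cdot},M_{\Cdot}\ra_t$. The only cosmetic difference is the order of operations: you compute the closed form of $F$ first via the conditional moment generating function $\E_t\left[e^{2\xi Y_u}\right] = \exp\left\{2\xi\hat m_t(u) + 2\xi^2\hat r(u|t)\right\}$ (correctly pulling out the deterministic compensator $e^{-\alpha\xi^2 r(u)}$) and then differentiate the exponent, which is affine in $m$, whereas the paper differentiates the Gaussian-density integral representation of $F$ and invokes $\mathbb{E}\left[Z e^{aZ}\right] = a e^{a^2/2}$ after a change of variables --- the same Gaussian computation arranged slightly differently, with your ordering arguably the more streamlined of the two.
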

\begin{proof}
Let $\hat\varphi_t(u,z)$ be given by \eqref{eq:varphi}. Then
\begin{align*}
F(t,\hat m_t(u)) 
&=\int_{\mathbb{R}} g^2(u,z) \hat\varphi_t(u,z) \d z, \\
&=\sigma^{2}_{0} \NEW{\e^{-\alpha\xi^2 r(u)}} \int_{\mathbb{R}} \e^{2\xi z} \frac{1}{\sqrt{2\pi\hat{r}(u|t)}}\exp\left(-\frac{1}{2}\frac{(z-\hat m_t(u))^{2}}{\hat{r}(u|t)}\right) \d z.
\intertext{It is now easy to calculate the partial derivative $\p_2 F$. We get}
\partial_{2}F(t,\hat m_t(u)) &= \sigma^{2}_{0} \NEW{\e^{-\alpha\xi^2 r(u)}} \int_{\mathbb{R}} \e^{2\xi z} \frac{1}{\sqrt{2\pi\hat{r}(u|t)}}\exp\left(-\frac{1}{2}\frac{(z-\hat m_t(u))^{2}}{\hat{r}(u|t)}\right)\frac{z-\hat m_t(u)}{\hat{r}(u|t)} \d z.
\intertext{Changing variables $v=\frac{z-\hat m_t(u)}{\sqrt{\hat{r}(u|t)}}$ and $\d z=\sqrt{\hat{r}(u|t)} \d v$, we obtain}
\partial_{2}F(t,\hat m_t(u)) &= \frac{\sigma^{2}_{0} \NEW{\e^{-\alpha\xi^2 r(u)}} }{\sqrt{\hat{r}(u|t)}} e^{2\xi \hat m_t(u)} \int_{\mathbb{R}}  e^{2\xi \sqrt{\hat{r}(u|t)}v} v \phi(v) \d v\\
&= \frac{\sigma^{2}_{0} \NEW{\e^{-\alpha\xi^2 r(u)}} }{\sqrt{\hat{r}(u|t)}} e^{2\xi \hat m_t(u)} \E\left(e^{2\xi \sqrt{\hat{r}(u|t)}Z} Z \right),
\intertext{where $Z \sim \mathcal{N}(0,1)$. Using formula $\mathbb{E}[Ze^{\alpha Z}]=\alpha e^{\frac{\alpha^2}{2}}$, we get}
\partial_{2}F(t,\hat m_t(u)) &= 2\sigma^{2}_{0}\xi \exp \left\{ 2\xi\hat m_u(u) + 2\xi^{2}\hat{r}(u|t) \NEW{-\alpha\xi^2 r(u)} \right\} 
= 2\xi F(t,\hat{m}_t(u)).
\end{align*}
Remaining formulas follow accordingly.
\end{proof}

\begin{remark}
Using that $F(t,\hat{m}_{t}(u))=\E_{t}\left[\sigma^{2}_{u}\right]$, it is straightforward to see that

\begin{eqnarray}
\d M_{t} &=& 2\xi \left( \int^{T}_{t} \mathbb{E}_{t}\left[\sigma^{2}_{u}\right]K(u,t)\d u\right)\d W_{t} ,\label{e:dM}\\
\d\la M_{\Cdot},W_{\Cdot}\ra_{t} &=& 2\xi \int^T_0 \E_{t}\left[\sigma^{2}_{u}\right]K(u,t)\d u \d t,\\
\d\la M_{\Cdot},M_{\Cdot}\ra_{t} &=& 4\xi^2 \int^{T}_{0}\int^T_0 \E_{t}\left[\sigma^{2}_{u_{1}}\right]\E_{t}\left[\sigma^{2}_{u_{2}}\right]K(u_{1},t)K(u_{2},t) \d u_{1} \d u_{2}\d t.
\end{eqnarray}
\end{remark}

\begin{lemma} 
Let $\sigma_t$ be as in \eqref{e:expVolterra} and $0\leq t\leq u$. Then, we can re-write $F(t,\hat m_t(u))$ as
\begin{equation}\label{e:atu}
\E_{t}\left[\sigma^{2}_{u}\right]
=\sigma^{2}_{t}\exp\left\{-\alpha \xi^{2} (r(u)-r(t)) + 2\xi\int^{t}_{0}\left(K(u,z)-K(t,z)\right)dW_{z}+2\xi^{2}\hat{r}(u|t)\right\}.
\end{equation}

Moreover, we also have the following equalities
\begin{align*}
&\E_{t}\left[\sigma^{3}_{u}\exp\left\{2\xi\int^{u}_{0}\left(K(s,z)-K(u,z)\right)dW_{z}\right\}\right]\\
& =\sigma^{3}_{t}\exp\Bigl\{\NEW{-\frac{3}{2}\alpha \xi^{2}\left(r(u)-r(t)\right)}
+\xi\int^{t}_{0}\left(2K(s,z)+K(u,z)-3K(t,z)\right)dW_{z} \\
&\quad +\frac{\xi^{2}}{2}\int^{u}_{t}\left(2K(s,z)+K(u,z)\right)^{2}dz \Bigr\}
\end{align*}
and
\begin{align*}
&\E_{t}\left[\sigma^{4}_{u}\exp\left\{ 2\xi\int^{u}_{0}\left(K(s,z)+ K(v,z)-2K(u,z)\right)dW_{z} \right\}\right] \\
&=\sigma^{4}_{t}\exp\Bigl\{ \NEW{- 2\alpha\xi^{2}\left(r(u)-r(t)\right)}
+ 2\xi\int^{t}_{0}\left(K(s,z)+ K(v,z)-2K(t,z)\right)dW_{z} \\
&\quad + 2\xi^{2}\int^{u}_{t}\left(K(s,z)+ K(v,z)\right)^{2}dz \Bigr\}.
\end{align*}
\end{lemma}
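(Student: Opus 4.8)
The plan is to prove all three identities by one common three-step recipe, since each is of the form $\E_t[\sigma_u^k\,\mathcal{E}]$ where $\mathcal{E}$ is the exponential of a Wiener integral (with $k=2$ and $\mathcal{E}\equiv 1$ for the first, $k=3,4$ for the others). First I would substitute the explicit form $\sigma_u^k = \sigma_0^k\exp\{k\xi Y_u - \tfrac{k}{2}\alpha\xi^2 r(u)\}$, write $Y_u = \int_0^u K(u,z)\,\d W_z$, and merge its exponent with the exponent of $\mathcal{E}$. The whole integrand then collapses to a single exponential $\sigma_0^k\,\e^{-\frac{k}{2}\alpha\xi^2 r(u)}\exp\{\int_0^u h(z)\,\d W_z\}$ for an explicit deterministic integrand $h$ that collects $\xi$ and a linear combination of the kernels.

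Second, I would split $\int_0^u h\,\d W_z = \int_0^t h\,\d W_z + \int_t^u h\,\d W_z$. By assumption \eqref{A2} the first integral is $\F_t$-measurable, while the second is, by the prediction law (Theorem~\ref{t:prediction_law}), independent of $\F_t^W$ and centered Gaussian with variance $\int_t^u h^2(z)\,\d z$ (finite by \eqref{A1}). Hence the conditional expectation factorizes, and the Gaussian moment generating function $\E[\e^{N}]=\e^{\frac12\Var N}$ yields the deterministic factor $\exp\{\tfrac12\int_t^u h^2(z)\,\d z\}$, which is exactly the quadratic kernel integral appearing in each claimed formula (e.g.\ $\tfrac{\xi^2}{2}\int_t^u(2K(s,z)+K(u,z))^2\,\d z$ in the cubic case).

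Finally, I would re-express the prefactor through the current state using $\sigma_0^k = \sigma_t^k\exp\{-k\xi Y_t + \tfrac{k}{2}\alpha\xi^2 r(t)\}$ with $Y_t=\int_0^t K(t,z)\,\d W_z$. Absorbing the $-k\xi Y_t$ term into the surviving $\F_t$-measurable Wiener integral $\int_0^t h\,\d W_z$ produces the combined integrands $2K(s,z)+K(u,z)-3K(t,z)$ and $K(s,z)+K(v,z)-2K(t,z)$ stated in the lemma, while the two deterministic $\alpha$-exponents combine into $-\tfrac{k}{2}\alpha\xi^2\bigl(r(u)-r(t)\bigr)$.

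The only genuine computation — and the step that makes the identities take their clean form — is the cancellation of the $K(u,z)$ contributions when the exponents are merged in Step~1: in the cubic case $3K(u,z)-2K(u,z)=K(u,z)$ leaves the integrand $2K(s,z)+K(u,z)$, whereas in the quartic case $4K(u,z)-4K(u,z)=0$ removes $K(u,\cdot)$ altogether, leaving $2\bigl(K(s,z)+K(v,z)\bigr)$. The first identity is the degenerate case $\mathcal{E}\equiv 1$, for which this recipe reduces to the computation already performed in Lemma~\ref{l:expVolterra}: it is precisely $F(t,\hat m_t(u))=\E_t[\sigma_u^2]$ rewritten via $\hat r(u|t)=r(u)-\int_0^t K^2(u,z)\,\d z$ and the substitution $\sigma_0^2=\sigma_t^2\exp\{-2\xi Y_t+\alpha\xi^2 r(t)\}$.
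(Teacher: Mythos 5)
Your proposal is correct: the split of the merged Wiener integral at time $t$, the Gaussian moment identity $\E[\e^{N}]=\e^{\frac12\Var N}$ applied to the independent increment over $(t,u]$, and the substitution $\sigma_0^k=\sigma_t^k\exp\{-k\xi Y_t+\tfrac{k}{2}\alpha\xi^2 r(t)\}$ reproduce all three identities exactly, including the kernel cancellations $3K(u,\cdot)-2K(u,\cdot)=K(u,\cdot)$ and $4K(u,\cdot)-4K(u,\cdot)=0$ and the combined drift $-\tfrac{k}{2}\alpha\xi^2(r(u)-r(t))$. The paper dismisses the proof as ``straightforward'' calculations, and your recipe is precisely the computation it intends (the same conditional-Gaussian argument already used in Lemma~\ref{l:expVolterra}), so your write-up simply supplies the details the paper omits.
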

\begin{proof}
The calculations to obtain these statements are straightforward.
\end{proof}

\begin{proposition}[Terms in the approximation formula for the exponential Volterra volatility model]\label{p:expVolterra}
Let $\sigma_t$ be as in \eqref{e:expVolterra} and $0\leq t\leq u$.
Then
\begin{align}
U_t 
&= \rho\xi\sigma^{3}_{t} \int_t^T \int^T_0  \exp\left\{\NEW{-\frac{3}{2}\alpha \xi^{2}\left(r(u)-r(t)\right)}+\xi\int^{t}_{0}\left(2K(s,z)+K(u,z)-3K(t,z)\right)\d W_{z}\right\} \notag \\
&\quad \cdot \exp\Bigl\{\frac{\xi^{2}}{2}\int^{u}_{t}\left(2K(s,z)+K(u,z)\right)^{2}\d z \NEW{-\alpha \xi^{2}\left(r(s)-r(u)\right)} + 2\xi^{2}\hat{r}(s|u)\Bigr\} \notag \\ 
&\quad \cdot K(s,u) \d s \d u
\intertext{and}
R_t 
&= \frac12 \xi^{2}\sigma^{4}_{t}\, \int^{T}_{t}\int^{T}_{0}\int^{T}_{0} \exp\Bigl\{ \NEW{-\alpha\xi^{2}\left(r(s)+r(v)-2r(t)\right)  } + 2\xi^{2}\left(\hat{r}(s|u) + \hat{r}(v|u)\right) \nonumber \\
&\quad + 2\xi\int^{t}_{0}\left(K(s,z)+ K(v,z)-2K(t,z)\right)\d W_{z}+ 2\xi^{2}\int^{u}_{t}\left(K(s,z)+ K(v,z)\right)^{2}\d z\Bigr\} \notag \\ 
&\quad \cdot K(s,u)K(v,u)\d s\d v \d u
\intertext{In particular,}
U_0 
&= \rho\xi\sigma^{3}_{0}\, \int_0^T \int_0^T \exp\Bigl\{\frac{\xi^2}{2} \int_0^u [2K(s,z)+K(u,z)]^2 \d z \Bigr\}\cdot\notag \\
& \quad \cdot \exp\Bigl\{ 2\xi^{2}\hat{r}(s|u) \NEW{-\frac12 \alpha\xi^2 r(u) - \alpha\xi^2 r(s)} \Bigr\} K(s,u)\,\d s \, \d u
\intertext{and}
R_0 
&= \frac12\sigma^{4}_{0}\xi^2\, \int_0^T \int_0^T \int_0^T \exp\Bigl\{2\xi^2 \int_0^u [K(s,z)+K(v,z)]^2 \d z \Bigr\} \cdot\notag \\
&\quad \cdot \exp\Bigl\{2\xi^{2} \left(\hat{r}(s|u) + \hat{r}(v|u)\right) \NEW{-\alpha\xi^2 \left( r(s) + r(v)\right)} \Bigr\} \cdot \notag \\
&\quad \cdot K(s,u)K(v,u) \d s \d v \, \d u.
\end{align}
\end{proposition}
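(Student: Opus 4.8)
The plan is to substitute the definitions of $U_t$ and $R_t$, replace the quadratic (co)variations of $M$ by the explicit expressions obtained in the Remark following Lemma~\ref{l:expVolterra}, and then evaluate the resulting conditional expectations by means of the two moment identities established in the Lemma immediately preceding this Proposition.

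First I would write $U_t = \frac{\rho}{2}\E_t[\int_t^T \sigma_u \,\d\la M_{\Cdot},W_{\Cdot}\ra_u]$ and insert $\d\la M_{\Cdot},W_{\Cdot}\ra_u = 2\xi\int_0^T \E_u[\sigma_s^2]\,K(s,u)\,\d s\,\d u$ from the Remark. Since the kernel is of Volterra type, $K(s,u)$ vanishes for $s<u$, so the inner integral effectively runs over $s\ge u$. After a Fubini interchange (legitimate by the $L^1$ hypothesis on $\sigma$) this gives $U_t = \rho\xi\,\E_t[\int_t^T\int_0^T \sigma_u\,\E_u[\sigma_s^2]\,K(s,u)\,\d s\,\d u]$. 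Because $\sigma_u$ is $\F_u$-measurable, I would fold $\sigma_u\,\E_u[\sigma_s^2]$ into a single expression $\sigma_u^3\exp\{\,\cdots\}$ using the explicit form \eqref{e:atu} of $\E_u[\sigma_s^2]$.

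The key step is then to separate the deterministic factor $\e^{-\alpha\xi^2(r(s)-r(u))+2\xi^2\hat r(s|u)}$ — a nonrandom function of $s,u$ — from the genuinely stochastic factor $\sigma_u^3\exp\{2\xi\int_0^u(K(s,z)-K(u,z))\,\d W_z\}$, pull the former out of $\E_t$, and apply the third identity of the preceding Lemma to the latter. Collecting exponents yields the stated formula for $U_t$. The computation of $R_t$ is entirely analogous: I substitute $\d\la M_{\Cdot},M_{\Cdot}\ra_u$ from the Remark, observe that the prefactor is $\tfrac{1}{8}\cdot 4\xi^2=\tfrac{1}{2}\xi^2$, use \eqref{e:atu} twice to express $\E_u[\sigma_s^2]\,\E_u[\sigma_v^2]$ as $\sigma_u^4\exp\{\cdots\}$, peel off the deterministic exponential $\e^{-\alpha\xi^2(r(s)+r(v)-2r(u))+2\xi^2(\hat r(s|u)+\hat r(v|u))}$, and apply the fourth identity of that Lemma to $\E_t[\sigma_u^4\exp\{2\xi\int_0^u(K(s,z)+K(v,z)-2K(u,z))\,\d W_z\}]$.

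Finally, the special cases $U_0$ and $R_0$ follow by setting $t=0$: then $Y_0=0$, hence $r(0)=0$ and $\sigma_0$ equals the model parameter, every stochastic integral $\int_0^0(\cdots)\,\d W_z$ vanishes, and the two $\alpha$-dependent exponents collapse — for $U_0$ the term $-\tfrac{3}{2}\alpha\xi^2 r(u)$ from the moment identity combines with $-\alpha\xi^2(r(s)-r(u))$ to give $-\tfrac{1}{2}\alpha\xi^2 r(u)-\alpha\xi^2 r(s)$, and similarly for $R_0$. The main bookkeeping obstacle is tracking these $\alpha$-terms correctly across the two successive conditionings so that the exponents combine exactly as stated; the only genuinely analytic point requiring care is the Fubini interchange of $\E_t$ with the double, respectively triple, Lebesgue integrals, which is guaranteed by the integrability hypotheses on $\sigma_t$.
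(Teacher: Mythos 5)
Your proposal is correct and follows essentially the same route as the paper's proof: substitute the Remark's explicit expressions for $\d\la M,W\ra_u$ and $\d\la M,M\ra_u$ into the definitions of $U_t$ and $R_t$, interchange $\E_t$ with the Lebesgue integrals, rewrite $\sigma_u\,\E_u[\sigma_s^2]$ (resp.\ $\E_u[\sigma_s^2]\E_u[\sigma_v^2]$) via \eqref{e:atu}, pull out the deterministic exponentials, and close with the third and fourth moment identities of the preceding lemma, with the $t=0$ specializations obtained exactly as you describe. Your bookkeeping of the $\alpha$-terms (e.g.\ $-\tfrac32\alpha\xi^2 r(u)-\alpha\xi^2(r(s)-r(u))=-\tfrac12\alpha\xi^2 r(u)-\alpha\xi^2 r(s)$ for $U_0$) matches the paper's computation.
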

\begin{proof}
We have that
\begin{align*}
U_t 
&= \frac{\rho}{2}\,   \E_t\left[\int_t^T \sigma_{u}\d\la M_{\Cdot},W_{\Cdot}\ra_{u}\right] \notag \\
&= \rho\xi\,   \E_t\left[\int_t^T \sigma_{u}\left(\int^T_0 \E_{u}\left[\sigma^{2}_{s}\right]K(s,u)\d s\right) \d u\right] \notag \\
&= \rho\xi\,  \int_t^T  \E_t\left[\sigma_{u}\left(\int^T_0 \E_{u}\left[\sigma^{2}_{s}\right]K(s,u)\d s\right)\right] \d u \notag \\
&= \rho\xi\,   \int_t^T\int^T_0   \E_t\left[\sigma^{3}_{u}\exp\Bigl\{\NEW{-\alpha \xi^{2}\left(r(s)-r(u)\right)} + 2\xi \int^{u}_{0}\left(K(s,z) -K(u,z)\right)\d W_{z} + 2\xi^{2}\hat{r}(s|u)\Bigr\}K(s,u)\right] \d s \d u \\
&= \rho\xi\,   \int_t^T \int^T_0  \E_t\left[\sigma^{3}_{u} \exp\Bigl\{2\xi \int^{u}_{0}\left(K(s,z) -K(u,z)\right)\d W_{z}\Bigr\}\right]\\
&\cdot \exp\Bigl\{\NEW{-\alpha \xi^{2}\left(r(s)-r(u)\right)} + 2\xi^{2}\hat{r}(s|u)\Bigr\}K(s,u) \d s \d u \\
&= \rho\xi\sigma^{3}_{t}\,   \int_t^T \int^T_0  \exp\left\{\NEW{-\frac{3}{2}\alpha \xi^{2}\left(r(u)-r(t)\right)}+\xi\int^{t}_{0}\left(2K(s,z)+K(u,z)-3K(t,z)\right)\d W_{z}\right\}\\
&\cdot \exp\Bigl\{\frac{\xi^{2}}{2}\int^{u}_{t}\left(2K(s,z)+K(u,z)\right)^{2}\d z \NEW{-\alpha \xi^{2}\left(r(s)-r(u)\right)} + 2\xi^{2}\hat{r}(s|u)\Bigr\}K(s,u) \d s \d u 
\end{align*}
Similarly, we have that 
\begin{align*}
R_t 
&= \frac18\,\E_t\left[\int^{T}_{t}\d\la M_{\Cdot},M_{\Cdot}\ra_{u}\right] \notag\\
&=\frac12 \xi^{2}\, \E_t\left[\int^{T}_{t} \left(\int^{T}_{0}\E_{u}\left[\sigma^{2}_{s}\right]K(s,u)\d s\right)^{2}\d u\right] \notag\\
&=\frac12 \xi^{2}\, \int^{T}_{t} \E_t\left[\left(\int^{T}_{0}\E_{u}\left[\sigma^{2}_{s}\right]K(s,u)\d s\right)^{2}\right]\d u \notag\\
&=\frac12 \xi^{2}\, \int^{T}_{t} \E_t\left[\left(\int^{T}_{0}\int^{T}_{0}\E_{u}\left[\sigma^{2}_{s}\right]\E_{u}\left[\sigma^{2}_{v}\right]K(s,u)K(v,u)\d s \d v\right)\right]\d u \notag\\
&=\frac12 \xi^{2}\, \int^{T}_{t}\E_t\left[\int^{T}_{0}\int^{T}_{0} \sigma^{4}_{u}K(s,u)K(v,u)
\exp\left\{\NEW{-\alpha \xi^{2} (r(s)+r(v)-2r(u))} \right. \right. \\
 &+ \left. \left.  2\xi\int^{u}_{0}\left(K(s,z)+ K(v,z)-2K(u,z)\right)\d W_{z} + 2\xi^{2}\left(\hat{r}(s|u) + \hat{r}(v|u)\right)\right\}\d s\d v\right] \d u \\
&=\frac12 \xi^{2}\, \int^{T}_{t}\int^{T}_{0}\int^{T}_{0} \E_t\left[\sigma^{4}_{u} \exp\left\{ 2\xi\int^{u}_{0}\left(K(s,z)+ K(v,z)-2K(u,z)\right)\d W_{z} \right\}\right]
\\
 & \exp\left\{\NEW{-\alpha \xi^{2} (r(s)+r(v)-2r(u))} + 2\xi^{2}\left(\hat{r}(s|u) + \hat{r}(v|u)\right)\right\}K(s,u)K(v,u)\d s\d v \d u \\
&=\frac12 \xi^{2}\sigma^{4}_{t}\, \int^{T}_{t}\int^{T}_{0}\int^{T}_{0} \exp\Bigl\{ \NEW{-\alpha\xi^{2}\left(r(s)+r(v)-2r(t)\right)  } + 2\xi^{2}\left(\hat{r}(s|u) + \hat{r}(v|u)\right) \\
 &+  2\xi\int^{t}_{0}\left(K(s,z)+ K(v,z)-2K(t,z)\right)\d W_{z}+ 2\xi^{2}\int^{u}_{t}\left(K(s,z)+ K(v,z)\right)^{2}\d z\Bigr\} \\ &\cdot K(s,u)K(v,u)\d s\d v \d u .
\end{align*}
\end{proof}

For the exponential Volterra volatility model we can determine an upper error bound for the price approximation in the following way.

\begin{theorem}[Upper error bound for the exponential Volterra volatility model]
\label{t:upper_error_bound}
Let $X_t$ be a log-price process \eqref{log-model} with $\sigma_t$ being the exponential Volterra volatility process \eqref{e:expVolterra}. Then we can express the call option fair value $V_t$ using the processes $R_t, U_t$ from Proposition \ref{p:expVolterra}. In particular,  
\begin{align*}
V_{t} &= BS(t,X_{t},v_{t})\\
&\quad +\Lambda\Gamma BS(t,X_{t},v_{t})U_{t}\\
&\quad +\Gamma^{2} BS(t,X_{t},v_{t})R_{t} \\
&\quad + \epsilon_{t},
\end{align*}
where $\epsilon_{t}$ are error terms of order $O\left(\left(\xi^{2} + \rho\xi\right)^{2}\right)$. 
\end{theorem}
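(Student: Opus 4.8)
The plan is to start from the exact representation in Corollary \ref{BS Deco}, $V_t = BS(t,X_t,v_t) + (I) + (II)$, and to show that the two integral terms $(I)$ and $(II)$ each split into the announced leading term plus a remainder of the claimed order. Following the recipe indicated after that corollary, I would apply Theorem \ref{Generic Deco} a second time: to $(I)$ with $A(t,X_t,v^2_t) = \Lambda\Gamma BS(t,X_t,v_t)$ and $B_t = U_t$, and to $(II)$ with $A(t,X_t,v^2_t) = \Gamma^2 BS(t,X_t,v_t)$ and $B_t = R_t$. Two structural facts make this work: $U_T = R_T = 0$, so the left-hand side $\E_t[e^{-r(T-t)}A_TB_T]$ vanishes; and the finite-variation parts of $\d U_t$ and $\d R_t$ are exactly $-\tfrac{\rho}{2}\sigma_t\,\d\la W_{\Cdot},M_{\Cdot}\ra_t$ and $-\tfrac18\,\d\la M_{\Cdot},M_{\Cdot}\ra_t$, so (the discount factor being carried along consistently) the ``$\E_t[\int A\,\d B]$'' term of Theorem \ref{Generic Deco} reproduces $-(I)$ and $-(II)$. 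Rearranging gives $(I) = \Lambda\Gamma BS(t,X_t,v_t)\,U_t + \epsilon^{(I)}_t$ and $(II) = \Gamma^2 BS(t,X_t,v_t)\,R_t + \epsilon^{(II)}_t$, where $\epsilon^{(I)}_t$ and $\epsilon^{(II)}_t$ collect the remaining terms of each second application, and I would set $\epsilon_t := \epsilon^{(I)}_t + \epsilon^{(II)}_t$.

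The core of the argument is then to bound these remainders, for which I would use the orders supplied by Lemma \ref{l:expVolterra} and Proposition \ref{p:expVolterra}: $U_t = O(\rho\xi)$, $R_t = O(\xi^2)$, $\d\la W_{\Cdot},M_{\Cdot}\ra_t = O(\xi)\,\d t$ and $\d\la M_{\Cdot},M_{\Cdot}\ra_t = O(\xi^2)\,\d t$. A point requiring care is that the leading contributions to $U_t$ and $R_t$ are deterministic (the factors $\sigma_t$ and the exponentials contribute only $1+O(\xi)$ fluctuations), so the martingale parts of $U$ and $R$, and hence the brackets $\d\la W_{\Cdot},U_{\Cdot}\ra$, $\d\la M_{\Cdot},U_{\Cdot}\ra$, $\d\la W_{\Cdot},R_{\Cdot}\ra$, $\d\la M_{\Cdot},R_{\Cdot}\ra$, each carry one extra power of $\xi$ beyond the naive count. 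Combining these, every surviving term is a product of two factors drawn from $\{O(\rho\xi),O(\xi^2)\}$: for instance the term containing $\partial^2_y A$ pairs $U$ with $\d\la M_{\Cdot},M_{\Cdot}\ra$ to give $O(\rho\xi^3)$, while the terms containing $\partial^2_{x,y}A$ and $\d\la W_{\Cdot},U_{\Cdot}\ra$ each give $O(\rho^2\xi^2)$, all of which lie in $O((\xi^2+\rho\xi)^2)$. The Greeks $\Lambda^n\Gamma BS$ and $\Lambda^n\Gamma^2 BS$ that multiply these factors are controlled, after conditioning on $\mathcal{G}_t = \F_t\vee\F^W_T$, by Lemma \ref{lemaclau} and its natural analogues for higher powers of $\Gamma$, whose bounds $C(a^2_u)^{-\frac12(n+1)}$ produce only integrable singularities as $u\to T$.

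The one genuinely delicate step, which I would treat first, is that two of the remainder terms are a priori of order $O(\rho\xi^2)$, which is too large: the term carrying $\partial_y A\,\tfrac{1}{T-u}(v^2_u-\sigma^2_u)$ and the term carrying $\tfrac12\Gamma A\,(\sigma^2_u-v^2_u)$ (and their analogues in $(II)$). These must cancel, and they do so through the vega--gamma identity for the Black--Scholes price: reading $A$ as a function of the variance in its third slot, one has $\partial_y BS = y(T-t)\,\Gamma BS$, which propagates to $\partial_y A = \tfrac{T-u}{2}\,\Gamma A$ for $A\in\{\Lambda\Gamma BS,\Gamma^2 BS\}$. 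Substituting this into the first term turns it into $\tfrac12\E_t[\int^T_t\Gamma A\,B_u(v^2_u-\sigma^2_u)\,\d u]$, the exact negative of the second, so the two cancel identically and never contribute to $\epsilon_t$. Once this cancellation is secured, collecting the bounds of the previous paragraph yields $\epsilon_t = O((\xi^2+\rho\xi)^2)$. The main obstacle is precisely this combination — recognising the vega--gamma cancellation of the would-be first-order terms together with the extra $\xi$-smallness of the martingale parts of $U$ and $R$ — after which the Greek estimates via Lemma \ref{lemaclau} and the order bookkeeping are routine.
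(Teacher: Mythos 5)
Your proposal follows essentially the same route as the paper: the decompositions in Appendix~\ref{sec:error_terms} are exactly your second application of Theorem~\ref{Generic Deco} with $A=\Lambda\Gamma BS$, $B=U$ and $A=\Gamma^{2}BS$, $B=R$ (the vega--gamma cancellation of the $(v_u^2-\sigma_u^2)$ terms you isolate is precisely why no such terms appear there, and $U_T=R_T=0$ kills the left-hand side as you say), after which the paper, like you, bounds the remainders via Lemma~\ref{lemaclau} and the same $O(\rho\xi)$/$O(\xi^{2})$ bookkeeping for $U$, $R$ and the brackets. The one step the paper makes explicit and you gloss over is the justification that the conditional expectations involving the negative powers $(a_u^2)^{-\frac12(n+1)}$ are finite: the paper applies Jensen's inequality to \eqref{e:atu} to bound $(T-u)/a_u^{2}$ by the exponential of a Gaussian quantity, so that $1/a_u^{2}$ has finite moments of all orders --- your remark about ``integrable singularities as $u\to T$'' should be supported by this (or an equivalent) argument, since $a_u^{2}$ is random and not merely small near $T$.
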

\begin{proof}
Note that using \eqref{e:dM} we have that 
\begin{align*}
\d\left\langle M, M\right\rangle_{t}&=4\xi^{2}  \left(\int^{T}_{t} \mathbb{E}_{t}\left[\sigma^{2}_{u}\right] K(u,t)\d u\right)^{2} \d t, \\
\d\left\langle M, W\right\rangle_{t}&=2\xi \left(\int^{T}_{t} \mathbb{E}_{t}\left[\sigma^{2}_{u}\right] K(u,t)\d u\right) \d t.
\end{align*}

Applying the Jensen's inequality to \eqref{e:atu}, we can see that
\begin{align*}
a^{2}_{t}
&\geq\sigma^{2}_{t}(T-t) \exp\left\{\frac{1}{T-t}\int^{T}_{t} \left[ -\alpha \xi^{2} (r(u)-r(t)) + 2\xi\left(\hat{m}_{t}(u)-\hat{m}_{t}(t)\right) + 2\xi^{2}r(u|t) \right] \d u \right\}.
\intertext{Then, it is easy to find that}
\frac{T-t}{a^{2}_{t}} &\leq \frac1{\sigma^{2}_{0}} \exp\left\{-2\xi \hat{m}_{t}(t)+\alpha \xi^{2} r(t) - \frac{1}{T-t}\int^{T}_{t} \left[ -\alpha \xi^{2} (r(u)-r(t)) + 2\xi\left(\hat{m}_{t}(u)-\hat{m}_{t}(t)\right) + 2\xi^{2}r(u|t) \right] \d u\right\}
\end{align*}
where the exponent
\[
-2\xi \hat{m}_{t}(t)+\alpha \xi^{2} r(t) - \frac{1}{T-t}\int^{T}_{t}\left[-\alpha \xi^{2} (r(u)-r(t)) + 2\xi\left(\hat{m}_{t}(u)-\hat{m}_{t}(t)\right) + 2\xi^{2}r(u|t)\right] du.
\]
is a Gaussian process. Therefore $\frac{1}{a^{2}_{t}}$ has finite moments of all orders.

Using the error terms specified in the Appendix \ref{sec:error_terms} and Lemma \ref{lemaclau}, we find the following decompositions for each term

\begin{eqnarray}
\nonumber &&\left|\frac{1}{8}\mathbb{E}_{t}\left[\int^{T}_{t}e^{-r(u-t)}\Gamma^{2} BS(u,X_{u},v_{u})\d\left\langle M,M\right\rangle_{u}\right]-\Gamma^{2} BS(t,X_{t},v_{t})R_{t}\right|\\ \label{error_term_R}
&\leq&\frac{C}{8}\mathbb{E}_{t}\left[\int^{T}_{t}e^{-r(u-t)}\left(\frac{1}{a^7_{u}}+\frac{3}{a^6_{u}}+\frac{3}{a^5_{u}}+\frac{1}{a^4_{u}}\right)R_{u}\d\left\langle M,M\right\rangle_{u}\right]\\ \nonumber
&+&\frac{C\rho}{2}\mathbb{E}_{t}\left[\int^{T}_{t}e^{-r(u-t)}\left(\frac{1}{a^6_{u}}+\frac{2}{a^5_{u}}+\frac{1}{a^4_{u}}\right)R_{u}\sigma_{u}\d\left\langle W,M\right\rangle_{u}\right]\\ \nonumber
&+&C\rho\mathbb{E}_{t}\left[\int^{T}_{t}e^{-r(u-t)}\left(\frac{1}{a^4_{u}}+\frac{1}{a^3_{u}}\right)\sigma_{u} \d\left\langle W,R\right\rangle_{u}\right]\\ \nonumber
&+&\frac{C}{2}\mathbb{E}_{t}\left[\int^{T}_{t}e^{-r(u-t)}\left(\frac{1}{a^5_{u}}+\frac{2}{a^4_{u}}+\frac{1}{a^3_{u}}\right)\d\left\langle M,R\right\rangle_{u}\right].
\end{eqnarray}
and
\begin{eqnarray}
\nonumber &&\left|\frac{\rho}{2}\mathbb{E}_{t}\left[\int^{T}_{t}e^{-r(u-t)}\Lambda \Gamma BS(u,X_{u},v_{u})\sigma_{u}\d\left\langle W,M\right\rangle_{u}\right]-\Lambda \Gamma BS(t,X_{t},v_{t})U_{t}\right|\\ \label{error_term_U}
&\leq&\frac{C}{8}\mathbb{E}_{t}\left[\int^{T}_{t}e^{-r(u-t)}\left(\frac{1}{a^6_{u}}+\frac{2}{a^5_{u}}+\frac{1}{a^4_{u}}\right)U_{u}\d\left\langle M,M\right\rangle_{u}\right]\\ \nonumber
&+&\frac{C\rho}{2}\mathbb{E}_{t}\left[\int^{T}_{t}e^{-r(u-t)}\left(\frac{1}{a^5_{u}}+\frac{1}{a^4_{u}}\right)U_{u}\sigma_{u}\d\left\langle W,M\right\rangle_{u}\right]\\ \nonumber
&+&C\rho\mathbb{E}_{t}\left[\int^{T}_{t}e^{-r(u-t)}\frac{1}{a^3_{u}}\sigma_{u} \d\left\langle W,U\right\rangle_{u}\right]\\ \nonumber
&+&\frac{C}{2}\mathbb{E}_{t}\left[\int^{T}_{t}e^{-r(u-t)}\left(\frac{1}{a^4_{u}}+\frac{1}{a^3_{u}}\right)\d\left\langle M,U\right\rangle_{u}\right].
\end{eqnarray}
Since all previous conditional expectations are obviously finite, the error order is $O\left(\left(\xi^{2} + \rho\xi\right)^{2}\right)$.

\end{proof}

Further, we express differentials with respect to the $n^{\text{th}}$-power of the exponential Volterra volatility process when $Y_{t}$ is a semimartingale.

\begin{lemma}\label{l:d_sigma} 
Let $\sigma_t$ be as in \eqref{e:expVolterra} and $Y_{t}$ a semimartingale. Let $n\geq 1$, we have that
\begin{eqnarray}
\d\sigma^{n}_{t}&=&\sigma^{n}_{t}K(t,t)\left[n\xi \d W_{t} + \frac{n}{2} \xi^{2}K(t,t)\left(n-\alpha\right) \d t\right].
\end{eqnarray}
\end{lemma}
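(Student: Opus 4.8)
The plan is to recognise $\sigma_t^n$ as an exponential of the Gaussian Volterra process and then differentiate by It\^o's formula. Raising \eqref{e:expVolterra} to the $n$-th power gives
\[
\sigma_t^n = \sigma_0^n \exp\left\{ n\xi Y_t - \tfrac{n}{2}\alpha\xi^2 r(t)\right\},
\]
so that $\sigma_t^n = g_n(t,Y_t)$ with the $\mathcal{C}^{1,2}$ function $g_n(t,y)=\sigma_0^n\exp\{n\xi y - \tfrac{n}{2}\alpha\xi^2 r(t)\}$, whose relevant partial derivatives are $\partial_y g_n = n\xi\, g_n$, $\partial_{yy} g_n = n^2\xi^2\, g_n$ and $\partial_t g_n = -\tfrac{n}{2}\alpha\xi^2 r'(t)\, g_n$.

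Before differentiating I would record the two structural inputs that the semimartingale hypothesis on $Y$ provides. In the (semi)martingale regime the Volterra process has martingale dynamics $\d Y_t = K(t,t)\,\d W_t$, so that $\d\la Y,Y\ra_t = K^2(t,t)\,\d t$ by the It\^o isometry; correspondingly the variance function \eqref{e:r}, $r(t)=\int_0^t K^2(t,s)\,\d s$, satisfies $r'(t)=K^2(t,t)$. These are the only facts about the kernel needed below.

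Then I would apply It\^o's formula,
\[
\d\sigma_t^n = \partial_t g_n\,\d t + \partial_y g_n\,\d Y_t + \tfrac12\partial_{yy} g_n\,\d\la Y,Y\ra_t,
\]
and substitute. The martingale part is $n\xi\,\sigma_t^n K(t,t)\,\d W_t$, matching the claimed diffusion coefficient. The drift collapses to $\sigma_t^n\bigl(-\tfrac{n}{2}\alpha\xi^2 r'(t) + \tfrac12 n^2\xi^2 K^2(t,t)\bigr)\d t$, and inserting $r'(t)=K^2(t,t)$ turns this into $\tfrac{n}{2}\xi^2 K^2(t,t)(n-\alpha)\,\sigma_t^n\,\d t$. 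Factoring $\sigma_t^n K(t,t)$ out of both contributions gives exactly the stated identity.

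The main delicate point is the interplay between the explicit time dependence carried by the normalising term $-\tfrac{n}{2}\alpha\xi^2 r(t)$ and the second-order It\^o correction coming from $\partial_{yy}g_n$: it is precisely the combination $\tfrac12 n^2\xi^2 K^2(t,t) - \tfrac{n}{2}\alpha\xi^2 K^2(t,t)$ that produces the factor $(n-\alpha)$. I would therefore be careful to justify $r'(t)=K^2(t,t)$ and $\d\la Y,Y\ra_t = K^2(t,t)\,\d t$ under the semimartingale assumption (as holds in the Brownian regime of Section~\ref{ssec:fBm}, where the kernel is effectively time-homogeneous in its first argument so that the quadratic variation and the variance derivative coincide); once these are in place, the remaining algebra is a routine verification.
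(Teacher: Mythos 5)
Your computation is the same argument the paper intends: the paper's entire proof of Lemma \ref{l:d_sigma} is the single sentence that the formula is an immediate consequence of the It\^o formula, and your write-up makes that computation explicit --- write $\sigma_t^n = \sigma_0^n\exp\{n\xi Y_t - \frac{n}{2}\alpha\xi^2 r(t)\}$, apply It\^o, and note that the drift $-\frac{n}{2}\alpha\xi^2 r'(t) + \frac{n^2}{2}\xi^2 K^2(t,t)$ collapses to the factor $(n-\alpha)$ once $r'(t)=K^2(t,t)$ is inserted. The algebra is correct, and your quadratic-variation claim $\d\la Y,Y\ra_t = K^2(t,t)\,\d t$ is indeed valid for any Volterra semimartingale, since the finite-variation part contributes nothing to the bracket.

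However, one of your two ``structural inputs'' is false at the stated level of generality, and this is a genuine gap (shared, to be fair, by the lemma as the paper states it): semimartingality of $Y$ does not deliver the driftless dynamics $\d Y_t = K(t,t)\,\d W_t$, nor the identity $r'(t)=K^2(t,t)$. For a kernel differentiable in its first argument the canonical decomposition is $\d Y_t = K(t,t)\,\d W_t + \left(\int_0^t \p_1 K(t,s)\,\d W_s\right)\d t$, and $r'(t) = K^2(t,t) + 2\int_0^t K(t,s)\,\p_1 K(t,s)\,\d s$; both correction terms vanish precisely when $\p_1 K\equiv 0$. Your parenthetical appeal to Section \ref{ssec:fBm} does not repair this: the approximate-fBm kernel $\tilde K(t,s)=\sqrt{2H}(t-s+\varepsilon)^{H-1/2}$ with $\varepsilon>0$ and $H\neq\frac12$ gives a semimartingale, yet $K^2(t,t)=2H\varepsilon^{2H-1}$ while $r'(t)=2H(t+\varepsilon)^{2H-1}$, and the resulting random drift $n\xi\sigma_t^n\left(\int_0^t\p_1 K(t,s)\,\d W_s\right)\d t$ cannot cancel the deterministic discrepancy $\frac{n}{2}\alpha\xi^2\left(r'(t)-K^2(t,t)\right)\d t$. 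So the displayed formula for $\d\sigma_t^n$ holds exactly only under the additional hypothesis $\p_1 K\equiv 0$ (e.g.\ the Wiener case $K(t,s)=\1_{\{s\leq t\}}$ of Example \ref{ex:ExpWiener}); your proof should state this as an assumption rather than derive it from semimartingality.
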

\begin{proof}
The formula is an immediate consequence of the It\^o formula.
\end{proof}

\begin{lemma}\label{l:covariation}
Let $\sigma_t$ be as in \eqref{e:expVolterra} and $Y_{t}$ is a semimartingale. We can calculate $\d U_{t}$ and $\d R_{t}$. In order to simplify the notation, we define the
two following functions
\begin{align*}
\varphi(t,s,x,T) &:=\exp\left\{-\frac{3}{2}\alpha \xi^{2}\left(r(x)-r(u)\right)+\xi\int^{u}_{0}\left(2K(s,z)+K(x,z)-3K(u,z)\right)\d W_{z}\right\}\cdot \\
&\quad\cdot \exp\Bigl\{\frac{\xi^{2}}{2}\int^{x}_{u}\left(2K(s,z)+K(x,z)\right)^{2}\d z -\alpha \xi^{2}\left(r(s)-r(x)\right) + 2\xi^{2}\hat{r}(s|x)\Bigr\},\\
\psi(t,s,v,x,T) &:=\exp\Bigl\{ -\alpha\xi^{2}\left(r(s)+r(v)-2r(t)\right)   + 2\xi^{2}\left(\hat{r}(s|x) + \hat{r}(v|x)\right) \\  
&\quad +2\xi\int^{t}_{0}\left(K(s,z)+ K(v,z)-2K(t,z)\right)\d W_{z}+ 2\xi^{2}\int^{x}_{t}\left(K(s,z)+ K(v,z)\right)^{2}\d z\Bigr\}.
\intertext{Then }
&\d U_t = \rho\xi \sigma^{3}_{t}K(t,t)\left[3\xi \d W_{t} + \frac{3}{2} \xi^{2}K(t,t)\left(3-\alpha\right) \d t\right]\,   \int_t^T \int^T_0  \varphi(t,s,x,T) K(s,x) \d s \d x\\
&- \rho\xi \sigma^{3}_{t} \int^T_0  \varphi(t,s,t,T) K(s,x) \d s \d t\\
&+\rho\xi  \sigma^{3}_{t}\,   \int_t^T \int^T_0  \varphi(t,s,x,T)\Bigl\{ \frac{3}{2}\alpha \xi^{2} \d r(t) + \xi \left(2K(s,t)+K(x,t)-3K(t,t)\right)\d W_{t} \\
&\quad +\frac12\xi^{2} \left(2K(s,t)+K(x,t)-3K(t,t)\right)^{2}\d t - \frac{\xi^{2}}{2}\left(2K(s,t)+K(x,t)\right)^{2}\d t\Bigr\} K(s,x) \d s \d x.\\
\intertext{and }
&\d R_t = \frac12 \xi^{2} \sigma^{4}_{t}K(t,t)\left[4\xi \d W_{t} + 2 \xi^{2}K(t,t)\left(4-\alpha\right) \d t\right]\\
&\, \int^{T}_{t}\int^{T}_{0}\int^{T}_{0}\Phi(t,s,v,x,T)\cdot K(s,x)K(v,x) \d s\d v \d x\\
&-\frac12 \xi^{2} \sigma^{4}_{t}\int^{T}_{0}\int^{T}_{0}\Phi(t,s,v,t,T) K(s,t)K(v,t) \d s\d v \d t\\
&+\frac12 \xi^{2} \sigma^{4}_{t}\, \int^{T}_{t}\int^{T}_{0}\int^{T}_{0}  \Phi(t,s,v,x,T)K(s,x)K(v,x)\\
&\quad \Bigl\{ 2\alpha\xi^{2}\d r(t) + 2\xi\left(K(s,t)+ K(v,t)-2K(t,t)\right)\d W_{t}  \\
&\quad +  2\xi^{2} \left(K(s,t)+ K(v,t)-2K(t,t)\right)^{2} dt - 2\xi^{2}\left(K(s,t)+ K(v,t)\right)^{2}\d t\Bigr\} \d s\d v \d x\\
\intertext{We define the following auxiliary function }
&\zeta(t,T) := \int^{T}_{t} \mathbb{E}_{t}\left[\sigma^{2}_{z}\right]K(z,t)\d z. \\
\intertext{Then, it is easier to see that the covariations are the following}
\d\la U,W\ra_t &= 3\rho\xi^{2} \sigma^{3}_{t} K(t,t)  \int_t^T \int^T_0  \varphi(t,s,x,T) K(s,x) \d s \d x\,\d t\\
&\quad +\rho\xi^{2}  \sigma^{3}_{t} \int_t^T \int^T_0  \varphi(t,s,x,T) \left(2K(s,t)+K(x,t)-3K(t,t)\right)\, K(s,x) \d s \d x\, \d t,\\
\d\la U,M\ra_t &=6\rho\xi^{3} \sigma^{3}_{t}K(t,t) \zeta(t,T) \int_t^T \int^T_0  \varphi(t,s,x,T) K(s,x) \d s \d x\,\d t\\
&+2\rho\xi^{3}  \sigma^{3}_{t} \zeta(t,T) \int_t^T \int^T_0  \varphi(t,s,x,T) \left(2K(s,t)+K(x,t)-3K(t,t)\right) K(s,x) \d s \d x\,\d t,\\
\d\la R, W\ra_{t} &=2 \xi^{3} \sigma^{4}_{t}K(t,t) \, \int^{T}_{t}\int^{T}_{0}\int^{T}_{0}\psi(t,s,v,x,T)\cdot K(s,x)K(v,x) \d s\d v \d x\,\d t\\
&+\xi^{3} \sigma^{4}_{u}  \, \int^{T}_{u}\int^{T}_{0}\int^{T}_{0}  \psi(t,s,v,x,T)K(s,x)K(v,x)\left(K(s,u)+ K(v,u)-2K(u,u)\right) \d s\d v \d x\,\d t\\
\intertext{and}
\d\la R, M\ra_{t} &=4 \xi^{4} \sigma^{4}_{t}K(t,t) \zeta(t,T) \int^{T}_{t}\int^{T}_{0}\int^{T}_{0}\psi(t,s,v,x,T)\cdot K(s,x)K(v,x) \d s\d v \d x\,\d t\\
&+2\xi^{4} \sigma^{4}_{t}  \zeta(t,T) \int^{T}_{t}\int^{T}_{0}\int^{T}_{0}  \psi(t,s,v,x,T)K(s,x)K(v,x)\left(K(s,t)+ K(v,t)-2K(t,t)\right) \d s\d v \d x\, \d t.
\end{align*}
\end{lemma}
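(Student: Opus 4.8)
The plan is to differentiate the closed forms of $U_t$ and $R_t$ from Proposition \ref{p:expVolterra} directly, exploiting the fact that each is a product of a power of $\sigma_t$ with an iterated integral whose integrand is, for fixed outer variables, the exponential of a Gaussian semimartingale in the time variable $t$. Concretely, I would write
$$U_t = \rho\xi\sigma_t^3\, I_t, \qquad I_t := \int_t^T\int_0^T \varphi(t,s,x,T)\,K(s,x)\,\d s\,\d x,$$
$$R_t = \tfrac12\xi^2\sigma_t^4\, J_t, \qquad J_t := \int_t^T\int_0^T\int_0^T \psi(t,s,v,x,T)\,K(s,x)K(v,x)\,\d s\,\d v\,\d x,$$
and apply the It\^o product rule $\d(\sigma_t^n G_t)=G_t\,\d\sigma_t^n+\sigma_t^n\,\d G_t+\d\langle\sigma^n,G\rangle_t$ to each.

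There are three ingredients. First, $\d\sigma_t^n$ is supplied directly by Lemma \ref{l:d_sigma}, which uses the semimartingale hypothesis on $Y$ (so that $\d Y_t=K(t,t)\,\d W_t$ and $\d r(t)=K^2(t,t)\,\d t$) and yields the first lines of the claimed expressions for $\d U_t$ and $\d R_t$. Second, I would differentiate the outer integral $\int_t^T(\cdot)\,\d x$ by a stochastic Leibniz rule: the moving lower limit produces the boundary terms $-\rho\xi\sigma_t^3\int_0^T\varphi(t,s,t,T)K(s,t)\,\d s\,\d t$ and $-\tfrac12\xi^2\sigma_t^4\int_0^T\int_0^T\psi(t,s,v,t,T)K(s,t)K(v,t)\,\d s\,\d v\,\d t$, while the explicit $t$-dependence of the integrand is handled by a stochastic Fubini argument that lets me carry $\d_t$ inside the $\d s,\d v,\d x$ integrals. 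Third, and most importantly, I would apply It\^o's formula to $\varphi$ and $\psi$ regarded as processes in $t$. Writing each as $\exp\{E(t)\}$, the $t$-dependent exponent $E(t)$ contains $r(t)$ (contributing $\tfrac32\alpha\xi^2\,\d r(t)$, resp. $2\alpha\xi^2\,\d r(t)$), a lower-limit integral $\int_t^x(\cdot)^2\,\d z$ (contributing the $-\tfrac{\xi^2}{2}(2K(s,t)+K(x,t))^2\,\d t$, resp. $-2\xi^2(K(s,t)+K(v,t))^2\,\d t$ terms), and a stochastic integral of the form $\int_0^t(2K(s,z)+K(x,z)-3K(t,z))\,\d W_z$. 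The latter must be split as $\int_0^t(2K(s,z)+K(x,z))\,\d W_z-3\int_0^t K(t,z)\,\d W_z$, where the second piece equals $3Y_t$ and is differentiated using $\d Y_t=K(t,t)\,\d W_t$; this produces the martingale coefficient $\xi(2K(s,t)+K(x,t)-3K(t,t))$ and, via the It\^o correction $\tfrac12\,\d\langle E\rangle_t$, the $+\tfrac12\xi^2(2K(s,t)+K(x,t)-3K(t,t))^2\,\d t$ term. Collecting all contributions gives the stated $\d U_t$ and $\d R_t$.

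For the covariations I would only need the martingale (the $\d W_t$-) part of $\d U_t$ and $\d R_t$, since quadratic covariation annihilates finite-variation terms; in particular the cross term $\d\langle\sigma^n,G\rangle_t$ and all drift contributions play no role. Reading off the $\d W_t$-coefficients and recalling from \eqref{e:dM} that $\d M_t=2\xi\,\zeta(t,T)\,\d W_t$, I would obtain $\d\langle U,W\rangle_t$ and $\d\langle R,W\rangle_t$ by multiplying the respective coefficient by $\d t$, and $\d\langle U,M\rangle_t$, $\d\langle R,M\rangle_t$ by further multiplying by $2\xi\,\zeta(t,T)$. This reproduces the four covariation formulas.

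The main obstacle I expect is the rigorous justification of differentiating the iterated stochastic integrals: one must validate the stochastic Leibniz/Fubini interchange (passing $\d_t$ through the $\d s,\d v,\d x$ integrations) under the integrability ensured by \eqref{A1} together with the finiteness of exponential moments of the underlying Gaussian integrals, and one must treat with care the single genuinely $t$-dependent stochastic integral $\int_0^t K(t,z)\,\d W_z=Y_t$, whose differentiation is exactly where the semimartingale assumption on $Y$ enters. The remaining manipulations are bookkeeping of the many kernel terms.
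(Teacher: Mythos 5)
Your proposal follows essentially the same route as the paper's own proof: it rewrites $U_t=\rho\xi\sigma_t^3\int_t^T\int_0^T\varphi\,K\,\d s\,\d x$ and $R_t=\frac12\xi^2\sigma_t^4\int_t^T\int_0^T\int_0^T\psi\,KK\,\d s\,\d v\,\d x$, differentiates via the product/Leibniz rule with the boundary terms from the moving lower limit, applies It\^o's formula to the exponentials $\varphi,\psi$ using Lemma \ref{l:d_sigma} for $\d\sigma_t^n$, and reads off the $\d W_t$-coefficients together with $\d M_t=2\xi\zeta(t,T)\,\d W_t$ to get the four covariations. If anything, your write-up is slightly more careful than the paper's, since you make explicit both the cross-variation term $\d\langle\sigma^n,G\rangle_t$ in the product rule (and correctly note it is irrelevant for the covariations) and the stochastic Fubini/Leibniz justification, which the paper passes over silently.
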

\begin{proof}
\begin{align*}
\intertext{Now, we can re-write $U_{t}$ as }
&U_t=\rho\xi\sigma^{3}_{t}\,   \int_t^T \int^T_0  \varphi(t,s,x,T) K(s,x) \d s \d x
\intertext{and $R_{t}$ as}
&R_t = \frac12 \xi^{2}\sigma^{4}_{t}\, \int^{T}_{t}\int^{T}_{0}\int^{T}_{0}\psi(t,s,v,x,T)\cdot K(s,x)K(v,x) \d s\d v \d x.
\intertext{We have that}
&\d U_t = \rho\xi \d \sigma^{3}_{t}\,   \int_t^T \int^T_0  \varphi(t,s,x,T) K(s,x) \d s \d x\\
&- \rho\xi \sigma^{3}_{t} \int^T_0  \varphi(t,s,t,T) K(s,x) \d s \d t\\
&+\rho\xi \sigma^{3}_{t}\,   \int_t^T \int^T_0  \varphi(t,s,x,T) \Bigl\{ \frac{3}{2}\alpha \xi^{2} \d r(t) + \xi \left(2K(s,t)+K(x,t)-3K(t,t)\right)\d W_{t}  \\
&\quad + \xi^{2} \left(2K(s,t)+K(x,t)-3K(t,t)\right)^{2}\d t - \frac{\xi^{2}}{2}\left(2K(s,t)+K(x,t)\right)^{2}\d t\Bigr\} K(s,x) \d s \d x.\\
\intertext{Using Lemma \ref{l:d_sigma}, we obtain}
&\d U_t = \rho\xi \sigma^{3}_{t}K(t,t)\left[3\xi \d W_{t} + \frac{3}{2} \xi^{2}K(t,t)\left(3-\alpha\right) \d t\right]\,   \int_t^T \int^T_0  \varphi(t,s,x,T) K(s,x) \d s \d x\\
&- \rho\xi \sigma^{3}_{t} \int^T_0  \varphi(t,s,t,T) K(s,x) \d s \d t\\
&+\rho\xi  \sigma^{3}_{t}\,   \int_t^T \int^T_0  \varphi(t,s,x,T)\Bigl\{ \frac{3}{2}\alpha \xi^{2} \d r(t) + \xi \left(2K(s,t)+K(x,t)-3K(t,t)\right)\d W_{t} \\
&\quad +\frac12\xi^{2} \left(2K(s,t)+K(x,t)-3K(t,t)\right)^{2}\d t - \frac{\xi^{2}}{2}\left(2K(s,t)+K(x,t)\right)^{2}\d t\Bigr\} K(s,x) \d s \d x.\\
\intertext{We have that}
&\d R_t = \frac12 \xi^{2} \d \sigma^{4}_{t}\, \int^{T}_{t}\int^{T}_{0}\int^{T}_{0}\Phi(t,s,v,x,T)\cdot K(s,x)K(v,x) \d s\d v \d x\\
&-\frac12 \xi^{2} \sigma^{4}_{t}\int^{T}_{0}\int^{T}_{0}\Phi(t,s,v,t,T) K(s,t)K(v,t) \d s\d v \d t\\
&+\frac12 \xi^{2} \sigma^{4}_{t}\, \int^{T}_{t}\int^{T}_{0}\int^{T}_{0}  \Phi(t,s,v,x,T)K(s,x)K(v,x)\\
&\quad \Bigl\{ 2\alpha\xi^{2}\d r(t) + 2\xi\left(K(s,t)+ K(v,t)-2K(t,t)\right)\d W_{t}  \\
&\quad + 2\xi^{2} \left(K(s,t)+ K(v,t)-2K(t,t)\right)^{2} dt - 2\xi^{2}\left(K(s,t)+ K(v,t)\right)^{2}\d t\Bigr\}\d s\d v \d x.\\
\intertext{Using Lemma \ref{l:d_sigma}, we obtain}
&\d R_t = \frac12 \xi^{2} \sigma^{4}_{t}K(t,t)\left[4\xi \d W_{t} + 2 \xi^{2}K(t,t)\left(4-\alpha\right) \d t\right]\\
&\, \int^{T}_{t}\int^{T}_{0}\int^{T}_{0}\Phi(t,s,v,x,T)\cdot K(s,x)K(v,x) \d s\d v \d x\\
&-\frac12 \xi^{2} \sigma^{4}_{t}\int^{T}_{0}\int^{T}_{0}\Phi(t,s,v,t,T) K(s,t)K(v,t) \d s\d v \d t\\
&+\frac12 \xi^{2} \sigma^{4}_{t}\, \int^{T}_{t}\int^{T}_{0}\int^{T}_{0}  \Phi(t,s,v,x,T)K(s,x)K(v,x)\\
&\quad \Bigl\{ 2\alpha\xi^{2}\d r(t) + 2\xi\left(K(s,t)+ K(v,t)-2K(t,t)\right)\d W_{t}  \\
&\quad +  2\xi^{2} \left(K(s,t)+ K(v,t)-2K(t,t)\right)^{2} dt - 2\xi^{2}\left(K(s,t)+ K(v,t)\right)^{2}\d t\Bigr\} \d s\d v \d x.\\
\end{align*}
\end{proof}

\begin{remark}
Note that the Theorem \ref{t:upper_error_bound} requires that the conditional expectations of (\ref{error_term_R}) and (\ref{error_term_U}) be finite. In the case where $\sigma_t$ is as in \eqref{e:expVolterra} and $Y_{t}$ is a semimartingale, it easy to see that these conditions are met using Lemma \ref{l:covariation}.
\end{remark}

\subsection{Exponential fractional volatility model}\label{ssec:fBm}

Let us now focus on a very important example of Gaussian Volterra processes, the \emph{fractional Brownian motion} (fBm), which is a process with a Hurst parameter $H\in(0,1)$, with covariance function 
\begin{equation}\label{e:fBm_cov}
r(t,s):=\E[B^H_t B^H_s] = \frac12 \left( t^{2H} + s^{2H} - |t-s|^{2H}\right), \quad t,s\geq 0,
\end{equation} 
and, in particular, with variance
\begin{equation}\label{e:fBm_acov}
r(t):=r(t,t) = t^{2H},\quad t\geq0. 
\end{equation}

One of the first applications of fractional Brownian motion is credited to \cite{Hurst51} who modelled the long term storage capacity of reservoirs along the Nile river. However, the idea of this concept goes back to \cite{Kolmogorov40}, who studied Wiener spirals and some other curves in Hilbert spaces. Later, \cite{Levy53} used the Riemann--Liouville fractional integral to define the process as
\[ 
\tilde{B}^H_t := \frac{1}{\Gamma(H+1/2)} \int\limits_0^t (t-s)^{H-1/2} \d W_s,
\] 
where $H$ may be any positive number. This type of integral turned out to be ill-suited to applications of fractional Brownian motion because of its over-emphasis on the origin for many applications. 
In their highly cited work, \cite{Mandelbrot68} introduced the Weyl's representation of the fractional Brownian motion:    
\begin{equation}\label{e:sfBm}
B^H_{t}:=\frac{1}{\Gamma(H+1/2)} \left[ Z_t + \int\limits_0^t (t-s)^{H-1/2} \d W_s \right],
\end{equation}
where 
$$Z_t:=\int\limits_{-\infty}^0\left[(t-s)^{H-1/2}-(-s)^{H-1/2}\right] \d W_s$$
and $W_t$ is the standard Wiener process. Nowadays, the most widely used representation of fBm is the one by \cite{Molchan69} 
\begin{align}
B^H_t &:= \int\limits_0^t K_H(t,s) \d W_s, \label{e:Molchan_fBm}
\intertext{where}
K_H(t,s) &:= C_H \Biggl[ \left( \frac{t}{s} \right)^{H-\frac12} (t-s)^{H-\frac12} 
- \left(H-\frac12\right) s^{H-\frac12} \int_s^t z^{H-\frac32} (z-s)^{H-\frac12} \d z \Biggr] \label{e:Molchan_kernel} \\
C_H &:= \sqrt{\frac{2 H \Gamma\left(\frac32-H\right)}{\Gamma\left(H+\frac12\right)\Gamma\left(2-2H\right)} }. \notag
\end{align}
To understand the connection between Molchan-Golosov and Mandelbrot-Van Ness representations of fBm we refer readers to the paper by \cite{Jost08}.

Despite of the above mentioned arguments, \cite{Alos00} proposed to consider a process $\hat B_t = \int_0^t (t-s)^{H-1/2} \d W_s$ instead of $B^H_t$ in fractional stochastic calculus, since $Z_t$ has absolutely continuous trajectories. Since $B^H_t$ is not a semimartingale, the process $\hat B_t = \Gamma(H+1/2)B^H_t - Z_t$ is also not a semimartingale. Later on, \cite{Thao06} introduced the so called approximate fractional Brownian motion process as 
\begin{equation*}
\hat B^\varepsilon_{t} = \int\limits_0^t (t-s+\varepsilon)^{H-1/2} \d W_s, \quad H\in(0,1), H\ne\frac12, \varepsilon>0,
\end{equation*}
and showed that for every $\varepsilon>0$ the process $\hat B^\varepsilon_t$ is a semimartingale and it converges to $\hat B_t$ in $L^2(\Omega)$ when $\varepsilon$ tends to zero. This convergence is uniform with respect to $t\in[0,T]$ \citep[Theorem 2.1]{Thao06}.

Let us now consider the \emph{exponential fractional volatility process} 
\begin{equation}\label{e:aRFSV}
\sigma_t := \sigma_{0}\exp\left\{\xi B^H_t \NEW{- \frac12 \alpha\xi^2 r(t)} \right\}, \quad t\geq 0,
\end{equation}
where $(B^H_t,t\geq 0)$ is one of the above mentioned representations of fBm. We are especially interested in the "rough" models, i.e. when $H<1/2$. In this case, we call the model \eqref{model} with volatility process \eqref{e:aRFSV} the \emph{rough fractional stochastic volatility model} ($\alpha$RFSV). Note that if $\alpha=1$, we get the rBergomi model \citep{Bayer16,Gatheral18}, if $\alpha=0$, we get the original exponential fractional volatility model. Values of $\alpha$ between 0 and 1 give us a new degree of freedom that can be viewed as a weight between these two models.

\begin{example}[Volatility driven by the approximate fractional Brownian motion]\label{ex:afbm} 
Let us consider model \eqref{log-model} with volatility process 
\begin{align*}
\sigma_t &= \sigma_{0} \exp\left\{\xi \tilde B_t \NEW{- \frac12 \alpha\xi^2 r(t)} \right\}, 
\intertext{where}
\tilde B_t &= \int_0^t \tilde K(t,s) \d W_s 
\intertext{and}
\tilde K(t,s) &= \sqrt{2H}(t-s+\varepsilon)^{H-1/2}, \quad s\leq t, \varepsilon\geq 0, H\in(0,1). \notag
\intertext{Then}
r(t,s) &= \int_0^{t\wedge s} \tilde K(t,v) \tilde K(s,v) \d v, \\
r(t)   &= \int_0^t \tilde K^2(t,v) \d v = 2H \int_0^t (t-v+\varepsilon)^{2H-1} \d v = (t+\varepsilon)^{2H} - \varepsilon^{2H}.
\intertext{Note that if $\varepsilon=0$, we get exactly the variance $r(t)=t^{2H}$ , that it is the variance of the standard fractional Brownian motion. Further we have}
\hat r(t|u) &= r(t) - \int_0^u \tilde K^2(t,v) \d v = r(t) - 2H \int_0^u (t-v+\varepsilon)^{2H-1} \d v = (t-u+\varepsilon)^{2H} - \varepsilon^{2H}
\intertext{and thus}
U_0 &= \rho \sigma^{3}_{0}\xi\sqrt{2H} \int_0^T \int_0^T \exp\Bigl\{ \xi^2 H \int_0^u [(u-v+\varepsilon)^{H-1/2} + 2(s-v+\varepsilon)^{H-1/2}]^2 \d v \Bigr\} \cdot \\
&\qquad\qquad \cdot \exp\Bigl\{ 2\xi^2 [(s-u+\varepsilon)^{2H} - \varepsilon^{2H}] \Bigr\} \cdot \\
&\qquad\qquad \cdot \NEW{ \exp\Bigl\{ -\frac12 \alpha\xi^2 [(u+\varepsilon)^{2H} + 2(s+\varepsilon)^{2H} - 3\varepsilon^{2H}] \Bigr\} \cdot } \\
&\qquad\qquad \cdot (s-u+\varepsilon)^{H-1/2} \d s \d u \\
R_0 &=   \sigma^{4}_{0}\xi^2 H \int_0^T \int_0^T \int_0^T \exp\Bigl\{ 4 \xi^2 H \int_0^u [(t_1-v+\varepsilon)^{H-1/2} + (t_2-v+\varepsilon)^{H-1/2}]^2 \d v \Bigr\} \cdot \\
&\qquad\qquad \cdot \exp\Bigl\{ 2 \xi^2 [(t_1-u+\varepsilon)^{2H} + (t_2-u+\varepsilon)^{2H}-2\varepsilon^{2H}] \Bigr\} \cdot \\
&\qquad\qquad \cdot \NEW{ \exp\Bigl\{ - \alpha \xi^2 [(t_1+\varepsilon)^{2H} + (t_2+\varepsilon)^{2H}-2\varepsilon^{2H}] \Bigr\} \cdot} \\
&\qquad\qquad \cdot (t_1-u+\varepsilon)^{H-1/2} (t_2-u+\varepsilon)^{H-1/2} \d t_1 \d t_2 \d u.
\end{align*}
\end{example}

\begin{example}[Volatility driven by the standard Wiener process]\label{ex:ExpWiener}
If in the previous Example~\ref{ex:afbm} we take $H=1/2$ and $\varepsilon=0$, we get model \eqref{log-model} with exponential Wiener volatility process 
\begin{align}
\sigma_t &= \sigma_0\exp\left\{\xi \tilde W_t \NEW{- \frac12 \alpha \xi^2 r(t)} \right\}, \label{e:Wienervol} 
\intertext{where}
\tilde W_t &= \int_0^t \tilde K(t,s) \d W_s \notag
\intertext{is the standard Wiener process, i.e. where $\tilde K(t,s) = \1_{\{s\leq t\}}$. In this case, we have that }
v^{2}_{t}&=\frac{\sigma^{2}_t}{ \left(2-\alpha\right) \xi^2 (T-t)} \left[ \exp\{\left(2-\alpha\right)\xi^2 (T-t)\} -1\right],\notag\\
r(t,s) &= \int_0^{t\wedge s} \tilde K(t,v) \tilde K(s,v) \d v = t\wedge s, \notag\\
r(t)   &= \int_0^t \tilde K^2(t,v) \d v = t.\notag
\intertext{Define}
\phi(t,T\NEW{,\alpha})&:=\int^{T}_{t} \exp\left\{(2\NEW{-\alpha})\xi^{2}  (s-t)\right\} \d s.\label{e:phi} \\
\intertext{It is easy to see that}
\d M_{t} &= 2\xi \sigma^{2}_{t} \d W_{t} \phi(t,T\NEW{,\alpha}),\notag
\intertext{and thus}
U_0 &=  \rho \sigma^{3}_{0}\xi \int_0^T \int_0^T \exp\Bigl\{ \frac12 \xi^2 \int_0^u [\1_{\{v\leq u\}} + 2\cdot\1_{\{v\leq s\}}]^2 \d v \Bigr\} \cdot \notag\\
&\qquad\qquad \cdot \exp\Bigl\{ 2\xi^2 (s-u) \NEW{-\frac12 \alpha \xi^2 u - \alpha \xi^2 s} \Bigr\} \1_{\{u\leq s\}} \d s \d u \notag\\
&= \rho\sigma^{3}_{0}\xi \int_0^T \int_0^T \exp\Bigl\{ \frac92 \xi^2 u \Bigr\} \exp\Bigl\{ \frac12 \xi^2 [(4\NEW{-2\alpha})s - (4\NEW{+\alpha})u ] \Bigr\} \1_{\{u\leq s\}} \d s \d u \\
&= \frac{2\rho\sigma^{3}_{0}}{3(2\NEW{-\alpha})(3\NEW{-\alpha})(5\NEW{-\alpha})\xi^{3}} \left[2(2\NEW{-\alpha})\exp\Bigl\{ \frac32 \xi^2 (3\NEW{-\alpha}) T \Bigr\} -3(3\NEW{-\alpha})\exp\Bigl\{ \xi^2 (2\NEW{-\alpha})T \Bigr\}+5\NEW{-\alpha}\right] \label{e:WienerU0}
\intertext{and}
R_0 
&= \frac12 \sigma^{4}_{0}\xi^2 \int_0^T \int_0^T \int_0^T \exp\Bigl\{ 2 \xi^2 \int_0^u [ \1_{\{v\leq t_1\}} + \1_{\{v\leq t_2\}} ]^2 \d v \Bigr\} \cdot \notag\\
&\qquad\qquad \cdot \exp\Bigl\{ 2\xi^2 [ t_1 + t_2 - 2u ] \NEW{-\alpha\xi^2[t_1+t_2]} \Bigr\} \1_{\{u\leq t_1\}} \1_{\{u\leq t_2\}} \d t_1 \d t_2 \d u \notag\\
&= \frac12 \sigma^{4}_{0}\xi^2 \int_0^T \int_0^T \int_0^T \exp\Bigl\{ 8 \xi^2 u\Bigr\} \exp\Bigl\{ 2\xi^2 [ t_1 + t_2 - 2u ]\NEW{-\alpha\xi^2[t_1 + t_2]} \Bigr\} \1_{\{u\leq t_1\}} \1_{\{u\leq t_2\}} \d t_1 \d t_2 \d u \\
&=\frac{\sigma^{4}_{0}}{8(2\NEW{-\alpha})^2(4\NEW{-\alpha})(6\NEW{-\alpha})\xi^{4}} \Bigl[ 
(2\NEW{-\alpha})^2 \exp\Bigl\{2(4\NEW{-\alpha})\xi^2 T\Bigr\} \notag\\
&\qquad\qquad - (4\NEW{-\alpha})(6\NEW{-\alpha}) \exp\Bigl\{2(2\NEW{-\alpha})\xi^2 T\Bigr\} \notag\\
&\qquad\qquad + 8(4\NEW{-\alpha}) \exp\Bigl\{(2\NEW{-\alpha})\xi^2 T\Bigr\} \label{e:WienerR0}
- 2(6\NEW{-\alpha}) \Bigr].
\intertext{For a model without exponential drift ($\alpha=0$) these formulas simplify to}
U_0 &= \frac{\rho\sigma^{3}_{0}}{45\xi^{3}}  \left[4\exp\Bigl\{ \frac92 \xi^2 T \Bigr\} -9\exp\Bigl\{ 2 \xi^2 T \Bigr\}+5\right] \notag\\
R_0 &=\frac{\sigma^{4}_{0}}{192\xi^{4}} \left[\exp\Bigl\{2 \xi^{2}T\Bigr\} -1 \right]^3 \left[\exp\Bigl\{ 2 \xi^2 T \Bigr\} + 3 \right] \notag
\intertext{and for the classical Bergomi model ($\alpha=1$) we get}
U_0 &= \frac{\rho\sigma^{3}_{0}}{6\xi^{3}}  \left[\exp\Bigl\{ 3\xi^2 T \Bigr\} -3\exp\Bigl\{ \xi^2 T \Bigr\}+2\right] \notag\\
R_0 &=\frac{\sigma^{4}_{0}}{120\xi^{4}} \left[\exp\Bigl\{6 \xi^{2}T\Bigr\} -15\exp\Bigl\{ 2 \xi^2 T \Bigr\} + 24\exp\Bigl\{ \xi^2 T \Bigr\} -10\right]. \notag
\intertext{For matter of convenience, we define the functions}
\psi(t,T\NEW{,\alpha}) &= \int^{T}_{t}\exp\left\{(8\NEW{-2\alpha})\xi^{2} (s-t) \right\} \left[\exp\left\{(2\NEW{-\alpha})\xi^{2}  (T-s)\right\}-1\right]^{2} \d s
\intertext{and}
\zeta(t,T\NEW{,\alpha}) &= \int^{T}_{t} \exp\left\{\frac{1}{2}(9-3\NEW{\alpha})\xi^{2} (s-t) \right\} \left[\exp\left\{(2\NEW{-\alpha})\xi^{2}  (T-s)\right\}-1\right] \d s.
\intertext{We can re-write $U_{t}$ and $R_{t}$ as}
U_{t} &= \frac{\rho\sigma^{3}_{t}}{(2\NEW{-\alpha})\xi}\zeta(t,T\NEW{,\alpha})
\intertext{and }
R_{t} &= \frac{\sigma^{4}_{t}}{2(2\NEW{-\alpha})^{2}\xi^{2}}\psi(t,T\NEW{,\alpha}).
\intertext{It is easy to find the $\d U_{t}$ and $\d R_{t}$,}
\d U_{t} 
&= \frac{\rho \d\sigma^{3}_{t}}{(2\NEW{-\alpha})\xi}\zeta(t,T\NEW{,\alpha}) + \frac{\rho\sigma^{3}_{t}}{(2\NEW{-\alpha})\xi}\zeta ' (t,T\NEW{,\alpha}) \d t \\
&= \frac{\rho \left(3\xi\sigma^{3}_{t} \d W_{t} + \frac{1}{2}(18-3\NEW{\alpha})\xi^{2} \sigma^{3} \d t\right)}{(2\NEW{-\alpha})\xi}\zeta(t,T\NEW{,\alpha}) + \frac{\rho\sigma^{3}_{t}}{(2\NEW{-\alpha})\xi}\zeta ' (t,T\NEW{,\alpha}) \d t \\
\intertext{and}
\d R_{t} 
&= \frac{\d\sigma^{4}_{t}}{2(2\NEW{-\alpha})^{2}\xi^{2}}\psi(t,T\NEW{,\alpha}) + \frac{\sigma^{4}_{t}}{2(2\NEW{-\alpha})^{2}\xi^{2}}\psi ' (t,T\NEW{,\alpha}) \d t \\
&= \frac{4\xi\sigma^{4}_{t}\d W_{t} + 2(8\NEW{-\alpha})\sigma^{4}_{t} \d t}{2(2\NEW{-\alpha})^{2}\xi^{2}}\psi(t,T\NEW{,\alpha}) + \frac{\sigma^{4}_{t}}{2(2\NEW{-\alpha})^{2}\xi^{2}}\psi ' (t,T\NEW{,\alpha}) \d t.
\end{align}

\end{example}
\begin{remark}
We can do a Taylor expansion of $U_{0}$ and $R_{0}$ to understand better their dependencies, doing that we obtain
\begin{align*}
U_{0}&\sim\rho \xi   T^2 \sigma^{3}_{t} \Biggl(\frac{1}{2} +\frac{1}{12} (13-5\NEW{\alpha} ) \xi^2  T+\frac{1}{96}   (\NEW{\alpha}  (19 \NEW{\alpha} -100)+133) \xi^4 T^2 \\
 &\qquad -\frac{1}{960}  \left(5 \NEW{\alpha} -13\right) \left(\NEW{\alpha} (13 \NEW{\alpha} -70)+97 \right)\xi^6 T^3+O\left(T^4\right)\Biggr)\\
\intertext{and }
R_{0}&\sim  \xi^2  T^3 \sigma^4_{t}\left(-\frac{1}{6}  \left(\NEW{\alpha}-2\right) + \frac{1}{24} (\NEW{\alpha} -2) (5\NEW{\alpha} -14) \xi^2 T\right. \\
 &- \left.\frac{1}{120}  \left(\NEW{\alpha} -2\right) \left(17 \NEW{\alpha}^2-96\NEW{\alpha} +140\right)\xi^4T^2 + O\left(T^3\right)\right).
\end{align*}
\end{remark}

\begin{theorem}[Decomposition formula for exponential Wiener volatility model]
\label{t:expWiener}
Let $X_t$ be the log-price process \eqref{log-model} with $\sigma_t$ being the exponential Wiener volatility process defined in \eqref{e:Wienervol}. Assuming without any loss of generality that the options starts at time 0, then we can express the call option fair value $V_0$ using the processes $U_0, R_0$ from \eqref{e:WienerU0} and \eqref{e:WienerR0} respectively. In particular,
\begin{align*}
V_{0} &= BS(0,X_{0},v_{0})\\
&\quad +\Lambda\Gamma BS(0,X_{0},v_{0})U_{0}\\
&\quad +\Gamma^{2} BS(0,X_{0},v_{0})R_{0} \\
&\quad + \epsilon.
\end{align*}
where $\epsilon$ denotes error terms and for $\alpha\geq0$, $\left|\epsilon\right|$ is at most of the order $C\xi (\sqrt{T}+\rho \xi^{2})T^{3/2}\Pi(\alpha,T, \xi, \rho).$ The exact bound is given in Appendix \ref{sec:Upp_bound_Wiener}.
\end{theorem}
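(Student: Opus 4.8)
The decomposition itself is not new: it is exactly Theorem~\ref{t:upper_error_bound} specialised to the Wiener kernel $K(t,s)=\1_{\{s\le t\}}$ of Example~\ref{ex:ExpWiener} (i.e.\ $H=\tfrac12$, $\varepsilon=0$) and evaluated at $t=0$, the closed forms of $U_0$ and $R_0$ being \eqref{e:WienerU0} and \eqref{e:WienerR0}. So the only genuinely new content is the explicit error bound, and the plan is to take $\epsilon$ to be the sum of the two exact remainders of Corollary~\ref{BS Deco},
\[
\epsilon = \big[(I) - \Lambda\Gamma BS(0,X_0,v_0)\,U_0\big] + \big[(II) - \Gamma^2 BS(0,X_0,v_0)\,R_0\big],
\]
and to bound the first bracket by the estimate \eqref{error_term_U} and the second by \eqref{error_term_R}. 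Finiteness of all the conditional expectations appearing there is guaranteed in the present semimartingale setting by the Remark following Lemma~\ref{l:covariation}.

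The first step is to make every quantity in \eqref{error_term_U}--\eqref{error_term_R} explicit. In the Wiener case Example~\ref{ex:ExpWiener} gives the exact identity $\d M_t = 2\xi\sigma^2_t\phi(t,T,\alpha)\,\d W_t$ (a specialisation of \eqref{e:dM}), whence
\[
\d\la M,M\ra_t = 4\xi^2\sigma^4_t\phi^2(t,T,\alpha)\,\d t, \qquad \d\la W,M\ra_t = 2\xi\sigma^2_t\phi(t,T,\alpha)\,\d t,
\]
together with the exact representation $a^2_t=\sigma^2_t\phi(t,T,\alpha)$, so that each factor $a_u^{-n}$ equals $\sigma_u^{-n}\phi^{-n/2}(u,T,\alpha)$. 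Using $U_t=\rho\sigma^3_t\zeta(t,T,\alpha)/((2-\alpha)\xi)$ and $R_t=\sigma^4_t\psi(t,T,\alpha)/(2(2-\alpha)^2\xi^2)$ from Example~\ref{ex:ExpWiener}, together with the differentials $\d U_t,\d R_t$ computed there, the mixed covariations $\d\la W,U\ra$, $\d\la M,U\ra$, $\d\la W,R\ra$, $\d\la M,R\ra$ become explicit products of powers of $\sigma_u$, of $\phi,\zeta,\psi$ and of $\zeta',\psi'$. Substituting these, the negative powers of $\sigma_u$ carried by the $a_u^{-n}$ factors cancel against the positive powers in the covariations, leaving each of the eight terms as a single expectation of the shape $\E_0\big[\int_0^T(\text{deterministic kernel})\,\sigma_u^{k}\,\d u\big]$ with $\sigma_u$ lognormal.

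The second step is to extract the stated order. I would bound $\phi(u,T,\alpha)$ below by $(T-u)$ (legitimate since the exponent $(2-\alpha)\xi^2(s-u)\ge0$) and the exponential factors inside $\zeta,\psi$ above by their endpoint values, so that each deterministic kernel is dominated by an explicit power of $(T-u)$ times a factor exponential in $\xi^2T$; that factor, the finite lognormal moments $\E_0[\sigma_u^{k}]$ and all numerical constants — which stay bounded uniformly over the admissible range $\alpha\in[0,1]$, where $2-\alpha\in[1,2]$ is bounded away from zero — are collected into the single function $\Pi(\alpha,T,\xi,\rho)$. Integrating the surviving powers of $(T-u)$ against the $\phi^{-n/2}$ singularities then produces the $T^{3/2}$ and $\sqrt{T}$ scalings, while the terms carrying the correlation $\rho$ and one extra power of vol-of-vol assemble into the $\rho\xi^2$ summand; collecting everything gives the advertised envelope $C\xi(\sqrt{T}+\rho\xi^2)T^{3/2}\Pi(\alpha,T,\xi,\rho)$, with the precise $\Pi$ deferred to Appendix~\ref{sec:Upp_bound_Wiener}.

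The step I expect to be the main obstacle is the bookkeeping of the four mixed covariations $\d\la W,U\ra,\d\la M,U\ra,\d\la W,R\ra,\d\la M,R\ra$: each requires the full differentials $\d U_t,\d R_t$, whose drift and diffusion parts involve $\zeta',\psi'$ and the kernel combinations $2K(s,t)+K(x,t)-3K(t,t)$ and $K(s,t)+K(v,t)-2K(t,t)$ specialised to $K(t,s)=\1_{\{s\le t\}}$, and each contribution must be tracked to its leading power of $\xi,\rho$ and $T$. Verifying that the numerical constants remain uniformly bounded in $\alpha$, and that the lognormal moments that appear grow only through the factor $\Pi$, is the delicate part; once the exact Wiener-case covariations above are in hand, the remaining singular-integral estimates are routine.
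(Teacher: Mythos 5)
Your proposal is correct and follows essentially the same route as the paper's Appendix~\ref{sec:Upp_bound_Wiener}: you bound the exact remainders \eqref{error_term_U} and \eqref{error_term_R} via Lemma~\ref{lemaclau}, exploit the Wiener-case identities $a_u^2=\sigma_u^2\phi(u,T,\alpha)$, $\d M_t=2\xi\sigma_t^2\phi(t,T,\alpha)\,\d W_t$ and the explicit $U_t,R_t,\d U_t,\d R_t$ so that the $a_u^{-n}$ factors cancel against the covariations, then control the remaining lognormal moments by a single exponential factor and bound the $\chi$-type integrals for $\alpha\geq 0$, exactly as the paper does. The only cosmetic divergence is at the final step, where you lower-bound $\phi(u,T,\alpha)$ by $T-u$ and integrate powers directly, while the paper keeps $\phi$ exact and reads off the $T^{3/2}$ and $\sqrt{T}$ scalings by Taylor-expanding one representative term of each explicit bound --- both yield the stated order.
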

\begin{proof}
The detailed proof is given in Appendix \ref{sec:Upp_bound_Wiener}, where we also examine the order of magnitude by the first Taylor term of the integrals.
\end{proof}

\begin{remark}
It is worth to mention that the order of the error bound from Theorem~\ref{t:expWiener} is better than the general estimate from Theorem~\ref{t:upper_error_bound}, where the time dependency is not considered. To get finer estimates also for the exponential fractional model (case $H\ne 1/2$), a proof similar to the one in Appendix \ref{sec:Upp_bound_Wiener} would have to be performed with more complicated but still tractable calculations.
\end{remark}

\begin{example}[Volatility driven by the standard fractional Brownian motion] Let us consider a model with volatility process 
$$\sigma_t = \exp\{\xi B^H_t \NEW{-\frac12\alpha\xi^2 r(t) } \},$$ 
where $B^H_t$ is the standard fractional Brownian motion as defined in \eqref{e:Molchan_fBm}, i.e. with the Molchan-Golosov kernel \eqref{e:Molchan_kernel}. Then, the formulas for $U_0$ and $R_0$ are given in Proposition \ref{p:expVolterra} with particular kernel \eqref{e:Molchan_kernel}, autocovariance function \eqref{e:fBm_acov} and $\hat{r}(t,s|u)$ as in Theorem \ref{t:prediction_law}. In this case, we do not give the formulas for $U_0$ and $R_0$ after substituting the Molchan-Golosov kernel, since these formulas are too long. However, it is worth to mention that the formulas are explicit and numerical evaluation requires only the computation of some multiple Gaussian integrals.  

\end{example}


\section{Numerical comparison of approximation formula}\label{sec:numerics}

In this section, we focus on numerical aspects of the introduced approximation formula. We detail on its numerical implementation and a comparison with the Monte Carlo (MC) simulation framework introduced by \cite{Bennedsen17} will be provided.

In the second part of this section, we also introduce two interesting outcome analysis for rBergomi model. In particular, we show how the model can be efficiently calibrated using the approximation formula to short maturity smiles. We remark that classical SV models (e.g. Heston model) might fail to fit the short term smiles, unless they exploit high volatility of volatility levels for which they would be typically inconsistent with the long term skew of the volatility surface. 

In what follows, we will inspect the approximation quality for rBergomi model and time to maturity / volatility of volatility $\xi$ scenarios. Based on the nature of error terms (see Appendix \ref{sec:error_terms}) those two factors should play prominent role when it comes to approximation quality.

\subsection{On implementation of the approximation formula}
We note that for the models studied in this paper, we have obtained either a semi-closed form or analytical formula for standard Wiener case (H=0.5). Moreover, for the class of exponential fractional models -- represented by the $\alpha$RFSV model -- we only need to numerically evaluate multiple integrals in $R_0$ and $U_0$.

In our case, this was done using a trapezoid quadrature routine -- not necessarily the most efficient approach, but easy to implement. We used a discretisation\footnote{Typically we used from 1000 up to 27000 points for 3D integrals.} of integrands such that the numerical error doesn't affect the results in a significant way. I.e. to be lower than standard MC errors when comparing to simulated prices or lower than the expected approximation error.

For benchmarking we use a first-order hybrid MC scheme introduced by \cite{Bennedsen17} alongside 50000 MC sample paths. Similarly to the implementation of the approximation formula, we remark that this scheme could be also improved as described in \cite{McCrickerd18}.

\subsection{Sensitivity analysis for rBergomi ($\alpha=1$) approximation w.r.t. increasing $\xi$ and time to maturity $\tau$}
In this section, we illustrate the approximation quality for European call options under various model regimes / data set properties as described in Table \ref{Tbl:num_sen}. We use option expiries up to 1Y -- we are expecting a loss of approximation quality, based on the nature of the approximation formula. Since we utilized a first order approximation arguments with respect to volatility of volatility, we are also expecting more pronounced differences between MC simulations and the formula for large values of $\xi$.
\begin{table}[h]
\centering
\begin{tabular}{cc}
\toprule
Model params & Values \\
\midrule
$\xi$ & $\lbrace 10\% , 50\% , 100\% \rbrace$\\
$\sigma_0$ & 8\% \\
$\rho$ &  -20\% \\
$H$ & 0.1 \\
\toprule
Data set specifics & Values \\
\midrule
Moneyness & 70\% -- 130\% with 5\% step\\
Time to maturity & \{1M, 3M, 6M, 1Y\}\\
\bottomrule
\end{tabular}
\caption{Model / data settings for sensitivity analyses.}\label{Tbl:num_sen}
\end{table}

\begin{figure}[ht]
\begin{center}
\includegraphics[width=1.05\textwidth]{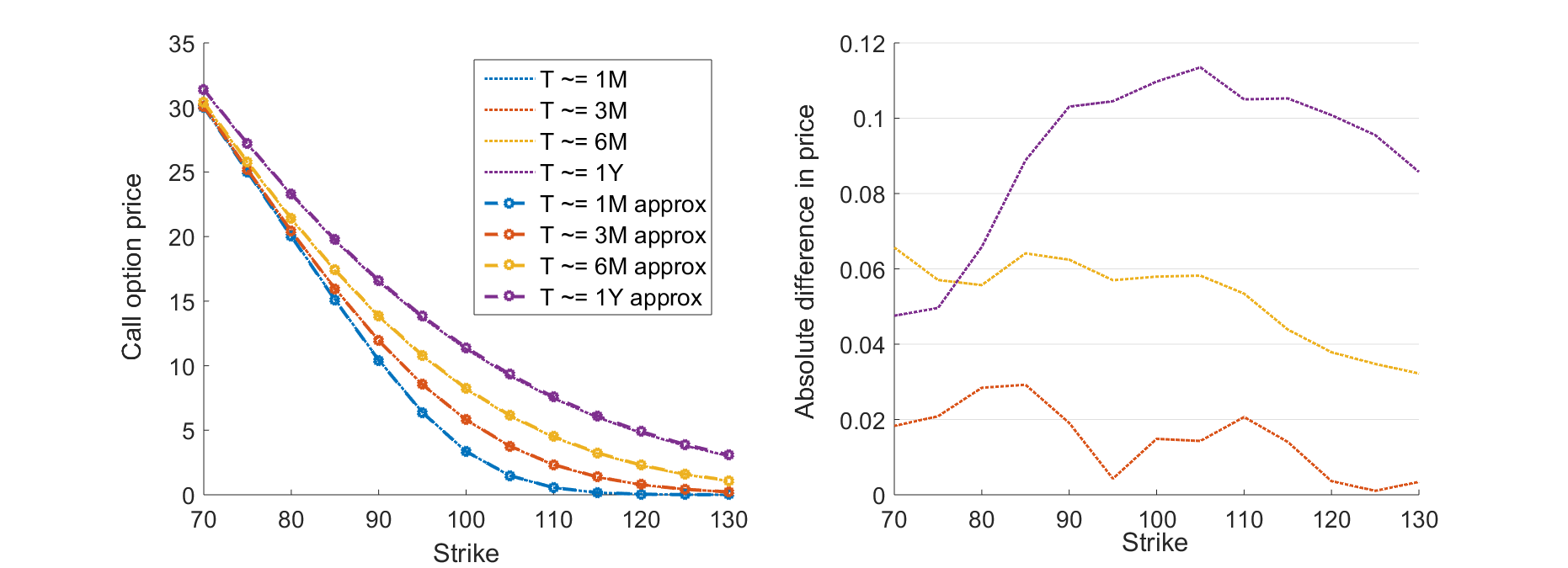}
\caption{Comparison of call option fair values calculated by MC simulations and by the approximation formula for the rBergomi model. (Example~\ref{ex:afbm} with $\alpha=1$ and $\varepsilon = 0$. Data and parameter values are: $v_0 = 8\%,~\xi=10\%, \rho = -20\%, H=0.1$) }\label{fig:rBergomi}
\end{center}
\end{figure}

In Figure \ref{fig:rBergomi}, we illustrate the approximation quality of the rBergomi approximation for low $\xi$ values. We can observe an expected behaviour: a very good match upto 3M expiry and almost linear deterioration of the quality with increasing $\tau$. Also the approximation formula provides a similar scale of errors across the tested moneyness.

For different moneyness regimes and 1M time to maturity, we obtained the following discrepancies between the MC trials and the introduced formula, measured in the relative option fair value (FV)\footnote{Relative FV is the absolute option fair value divided by the initial spot price.}:

\begin{center}

\begin{tabular}{c|ccc}
\toprule
&  \multicolumn{3}{c}{Differences in relative FVs} \\
\midrule
Spot moneyness & $\xi =10\%$ &$\xi = 50\%$ & $\xi = 100\%$ \\
\midrule
80\% & 4.5e-04 & -8.1e-05 & -0.29928 \\
90\% &3.9e-04 & 2.6e-05 & -0.02797 \\
100\% &2.3e-04 & 7.2e-04 & 0.95436 \\
110\% &-1.5e-05 & -7.7e-05 & 0.09690 \\ 
120\% &-1.2e-05 & -2.7e-04 & -0.46417 \\
\bottomrule

\end{tabular} 

\end{center}

In the table above, we can see reasonable approximation error measures which fell below standard 1 MC error for $\xi =10\%$ and $\xi =50\%$ regimes. Due to the theoretical properties of the approximation formula, we observe significant deterioration for high volatility of volatility regimes. This also depends on the time to maturity of the approximated option -- the shorter maturity we have, the higher $\xi$ we can allow to obtain reasonable approximation errors (i.e. of the order 1e-04 and lower in terms of FV).

Although the introduced approximation is typically not suitable for calibrations to the whole volatility surface -- due to the deterioration of approximation quality when increasing time to maturity -- we will illustrate how it can significantly speed-up MC calibration to the provided forward at-the-money (ATMF) backbone.

\subsection{Short-tenor calibration and a hybrid calibration to ATMF backbone}

Unlike previous analyses, which were based on artificial data / model parameter values, we inspect an application fo the formula on the calibration to real option market data. In particular, we utilize four data sets of AAPL options which were analysed in detail by \cite{PospisilSobotkaZiegler18ee}. Descriptive statistics of the data sets are provided in Table \ref{Tbl:data}. The following calibration test trials will be considered. 

\begin{itemize}
\item Calibration to short maturity smiles: \\

This should illustrate how well the model can fit short maturity smiles using the introduced approximation formula without exploiting too high volatility of volatility values ($\xi$). For each data set we selected the shortest maturity slice with more than one traded option. The values were not interpolated by any model, i.e. we calibrated to discrete close mid-prices of traded options. We also confirm, that both MC simulation and the formula reprice the smile with the final calibrated parameters without significant differences.

\item Hybrid calibration to the ATMF backbone: \\

In the second trial, we calibrate to each data set ATMF backbone. We note that because we have only a discrete set of traded options we might not have for each maturity an option with strike equal to the corresponding forward. Hence, we take an option with the closest strike to the forward value for each expiry. We use the proposed approximation formula only for $\tau < 0.2$, for longer time to maturities we price by MC simulations.

\end{itemize}

In both cases, the calibration routine was formulated as a standard least-square optimization problem. I.e. to obtain calibrated parameters, we numerically evaluated
\begin{equation}
\hat{\Theta} = \arg \min f(\Theta) = \arg \min \sum\limits_{i=1}^{N} \left[ \text{Mid}_i -\text{rBergomi}_i(\Theta) \right]^2,
\end{equation}
where $N$ is the total number of contracts for the calibration, $\text{Mid}_i$ is the mid price of the $i^{th}$ option and $\text{rBergomi}_i(\Theta)$ represents the corresponding model price based on parameter set $\Theta$. The model price is either obtained by the approximation formula or by means of MC simulations otherwise. The optimization is performed using Matlab's local search trust region optimizer which also needs an initial guess to start with.
\begin{table}[h]
\centering
{\small
\begin{tabular}{lcccc}
\toprule
 & Data  \# 1 & Data \# 2 & Data  \# 3& Data \# 4\\
 \midrule
 Date (all EOD)& 1-Apr-2015 & 15-Apr-2015 & 1-May-2015 & 15-May-2015 \\
 Moneyness range & 34\%--157\%  & 45\%--154\%  & 31\%--151\% & 33\%--151\% \\
 Time to maturity [Yrs] & 0.12-1.81  & 0.08--1.77 & 0.04--1.73 & 0.02--1.69  \\
 Total nb. of contracts & 113  & 158  & 201  & 194  \\

\bottomrule
\end{tabular}
\caption{Data on AAPL options used in calibration trials}\label{Tbl:data}
}
\end{table}

All following results will be quoted in relative FV: e.g. $\text{rBergomi}_i(\Theta)/S_0$ and also differences between market and the calibrated model will be denoted using this measure. For the calibration to the whole surface of European options, errors in FV below $0.5\%$ are typically considered to be acceptable, whereas anything exceeding $1\%$ difference is considered as a significant model inconsistency.

Firstly, we display the results for the short-maturity calibration. In Figure \ref{fig:rBergomi_shorT_1}, we illustrate that even with a not well suited initial guess for Data \# 4, we can obtain satisfactory results (i.e. errors significantly below 0.5\% mark). In fact, for all tested data sets, obtained values of the calibrated parameters were not very sensitive to the initial guess (only a number of iterations differed). This is a desired feature which typically is not present under classical SV models, see e.g. \cite{MrazekPospisilSobotka16ejor}. For two other sets we have obtained qualitatively similar fitting errors, however for the data set \# 3 we have retrieved 4 errors (out of 22) with absolute FV difference greater than 0.5\% and two even greater than 1\%, see Figure \ref{fig:rBergomi_shorT_2}. We conclude that this was partly caused due to high $\xi$ values compared to other three calibration trials and also slightly longer maturity  -- the data set \#3 includes only one option at the shortest maturity, hence the second shortest was used. Still we can conclude that the obtained errors (also verified by using MC simulations) are overall acceptable, although might not be optimal.

\begin{figure}[ht]
    \centering
    \begin{subfigure}[b]{1\textwidth}
        \includegraphics[width=\textwidth]{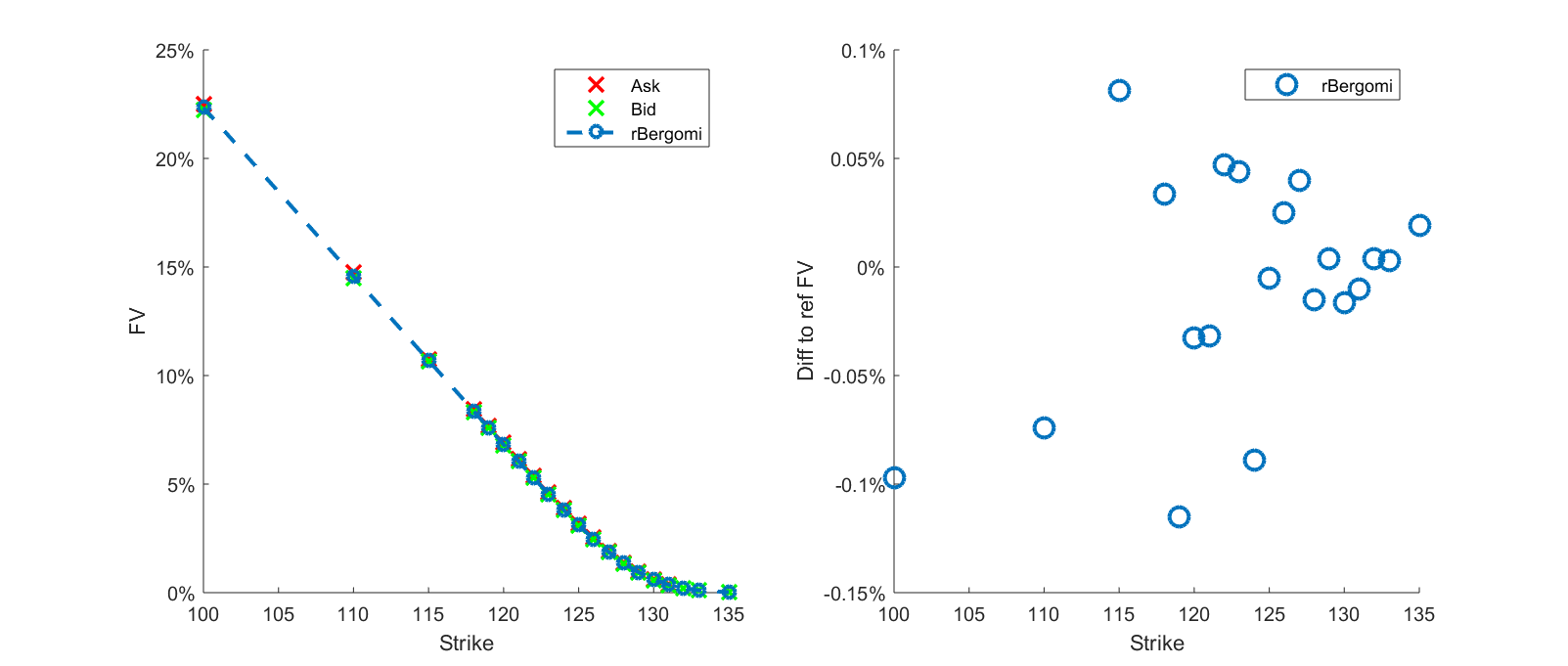}
        \caption{Comparison of calibrated rBergomi and market data (Data \# 4)}
        \label{fig:Calib_May15}
    \end{subfigure}
    ~ 
    \begin{subfigure}[b]{0.51\textwidth}
        \includegraphics[width=\textwidth]{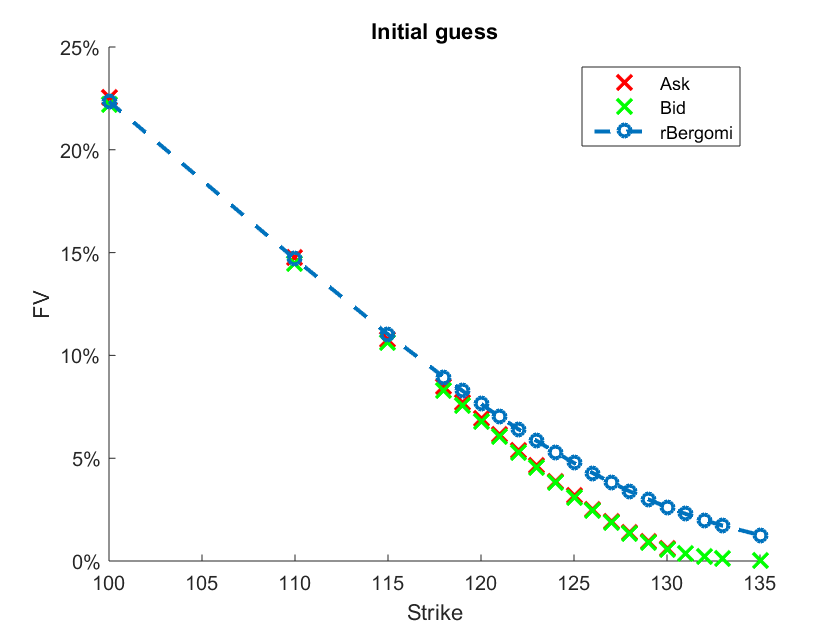}
        \caption{Errors with respect to the initial guess}
        \label{fig:init_May15}
    \end{subfigure}
    \begin{subfigure}[b]{0.38\textwidth}     

        \begin{minipage}{\textwidth}
        \vspace*{-5cm} 
        \textbf{Calibrated params:} \\
        \vspace*{0.1cm} \hrule
        \centering \vspace*{0.1cm}
        $\sigma_0 = 3.41 \%$ \\
        $\xi =  39.45 \%$ \\
        $\rho = -98.80 \%$ \\
        $H = 0.3153$ \\
        \vspace*{0.1cm} \hrule
        \vspace*{0.1cm}
        \textbf{MSE = 0.0883}
\end{minipage}         

    \end{subfigure}
    \caption{Calibration results for rBergomi - short maturity smiles (Data \# 4)}\label{fig:rBergomi_shorT_1}
\end{figure}

We have observed that at least short maturity smiles (< 1M) can be efficiently calibrated using the proposed approximation formula, while for expiries greater than 1M, one would need to stay within a low volatility of volatility regime, otherwise the discrepancies would lead to a non-optimal solution when recomputed using a more precise (but much more costly) MC simulations. To illustrate efficiency of the approximation will show how the approximation can speed-up ATMF-backbone calibration.

\begin{figure}[ht]
    \centering
    \begin{subfigure}[b]{1\textwidth}
        \includegraphics[width=\textwidth]{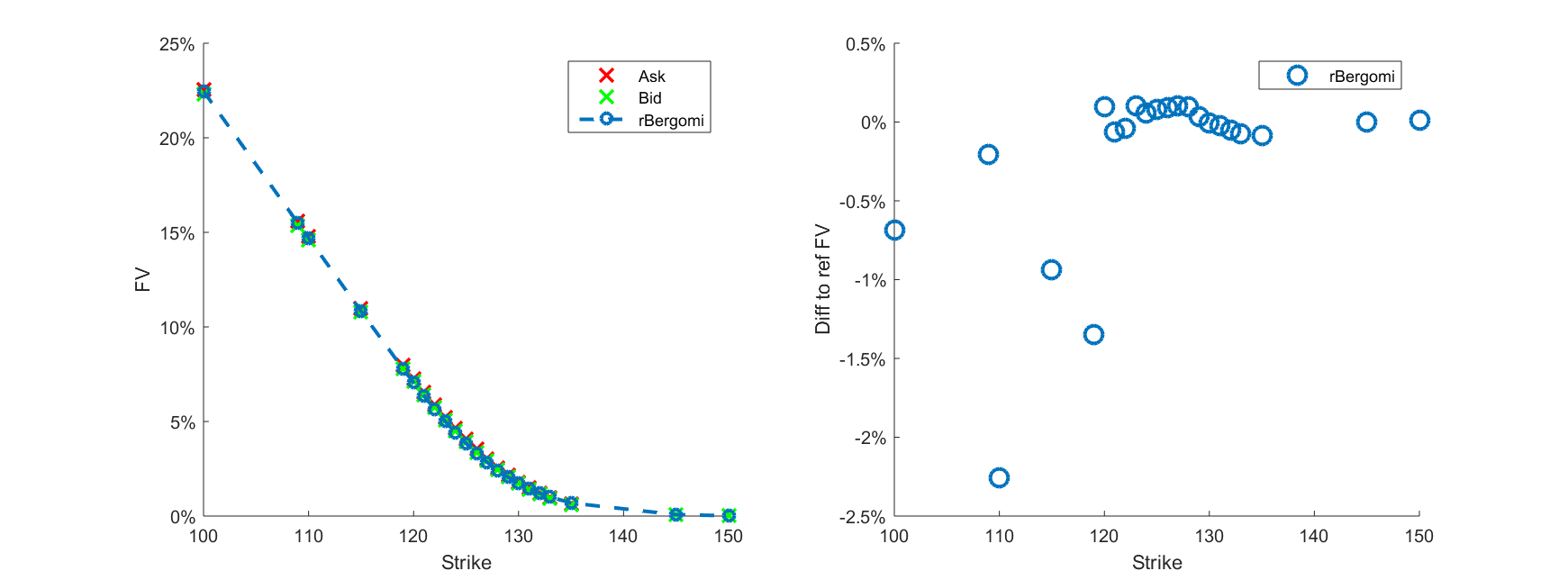}
        \caption{Comparison of calibrated rBergomi and market data (Data \# 3)}
        \label{fig:Calib_May1}
    \end{subfigure}
    ~ 
    \begin{subfigure}[b]{0.51\textwidth}
        \includegraphics[width=\textwidth]{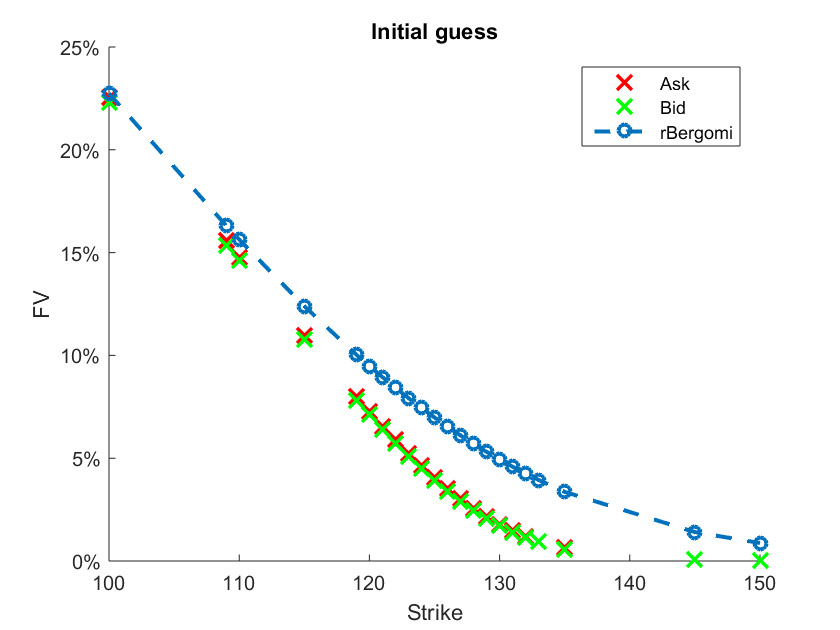}
        \caption{Errors with respect to the initial guess}
        \label{fig:init_May1}
    \end{subfigure}
    \begin{subfigure}[b]{0.38\textwidth}     

        \begin{minipage}{\textwidth}
        \vspace*{-5cm} 
        \textbf{Calibrated params:} \\
        \vspace*{0.1cm} \hrule
        \centering \vspace*{0.1cm}
        $\sigma_0 = 4.57 \%$ \\
        $\xi =  81.36 \%$ \\
        $\rho = -98.99 \%$ \\
        $H = 0.3936$ \\
        \vspace*{0.1cm} \hrule
        \vspace*{0.1cm}
        \textbf{MSE = 13.9538}
\end{minipage}         

    \end{subfigure}
    \caption{Calibration results for rBergomi - short maturity smiles (Data \# 3)}\label{fig:rBergomi_shorT_2}
\end{figure}

We will now inspect the hybrid calibration where we switch between the approximation and a MC pricer based on properties of options being priced. In particular, we focus on data set from 15$^{th}$ May (Data \# 4) and for $\tau < 0.2$ we will use the approximation formula and MC simulations otherwise. For the calibrated parameters, we will also measure the time spent computing FVs by each pricer. For completeness, we remark that both implementations are not perfect. I.e. for MC simulations under the rBergomi model, the "turbocharging" improvements were introduced by \cite{McCrickerd18} recently. On the other hand, numerical integrations in the approximation could be performed by some adaptive quadrature and could be vectorized to improve the computation speed. 

In Figure \ref{fig:rBergomiATMF}, we illustrate calibration fit to the ATMF backbone of the option price surface. We conclude that we have retrieved similar errors for both the prices computed using the proposed approximation and the longer maturity option prices quantified by MC simulations. The final fit of the calibrated model (recomputed by MC simulations) is very good, especially considering that the studied model has only 4 parameters. Moreover, only a fraction of the time spent by MC pricer was needed to computed all FV using the approximation formula. In particular, $98.43\%$ of the pricing time\footnote{Excluding any data loading / manipulation routines.} we were computing MC simulation estimates of FVs. We also note that $7/9$ of total evaluations were computed by the approximation formula.

\begin{figure}[ht]
\begin{center}
\includegraphics[width=1.1\textwidth]{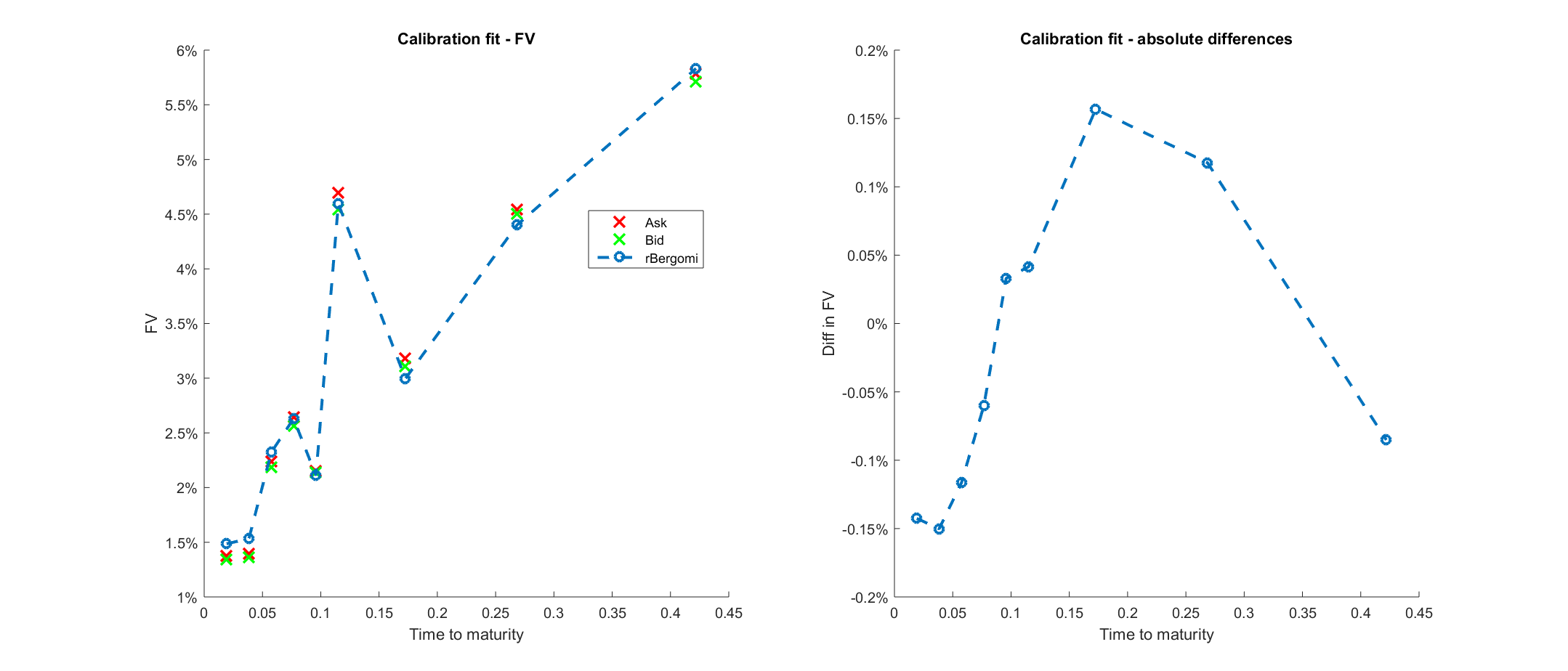}
\caption{ATMF calibration results when combining approximation formula ($\tau <0.2$) and MC simulations.}\label{fig:rBergomiATMF}
\end{center}
\end{figure}
\newpage
\section{Conclusion}\label{sec:conclusion}

In previous sections we studied an approximation approach for the pricing of European options under rough stochastic volatility dynamics. Our approach is based on the option price decomposition results obtained by \cite{Alos12} for the standard Heston SV model and on its recent generalisation to other SV models by \cite{MerinoPospisilSobotkaVives18ijtaf}. 
The main contribution of our research is to derive pricing formulas suitable for various practical applications by applying the general decomposition on a class of Volterra volatility models. Our main focus is laid on the rough volatility models introduced by \cite{Gatheral18} and \cite{Bayer16}. 

In particular, a prediction law for Gaussian Volterra processes was proved and an adapted projection of future volatility was obtained in Section \ref{sec:model} for the class of exponential Volterra volatility models, where the volatility process can be expressed as a positive $L^1$ function of the time and the Volterra process $Y_t$, i.e. $\sigma_t = g(t,Y_t)$. We focused on one particular example of Gaussian Volterra processes, namely on a (rough) fractional Brownian motion $B^H_t$.
"Roughness" of sample paths is determined by the Hurst parameter, $H \in (0,1/2]$, where for $H=0.5$ we would recover a standard Wiener process.

The pricing formula for European options, which is numerically tractable, is then derived under a newly introduced $\alpha$RFSV model, i.e. for 
$$\sigma_t =\sigma_{0} \exp\left\{\xi B^H_t \NEW{- \frac12 \alpha\xi^2 r(t)}\right\},$$ 
where $\alpha \in [0,1]$ and $r(t)$ is the corresponding autocovariance function defined in Section \ref{sec:model}. This newly introduced model seems to have interesting special cases: 
\begin{itemize}
\item for $\alpha = 0$ it reverts to a simple exponential RFSV model
\item and for $\alpha=1$ to rBergomi model respectively.
\end{itemize}

Both special cases of the $\alpha$RFSV model are studied and praised for their surprising consistency with various financial markets in \citep{Bayer16} who use the the Bergomi--Guyon expansion to study the ATM skew approximation of the impied volatilities. However, the authors admit that \emph{``the Bergomi--Guyon expansion does not converge with values of $\eta$ consistent with the SPX volatility surface, so the Bergomi--Guyon expansion is not useful in practice for calibration of the rBergomi model.''} This motived us to derive an approximation formula that will be useful in practice for calibration of the studied models to real market data.
 
In Section \ref{sec:numerics}, the newly obtained approximation formula was compared to the MC pricing approach introduced by \cite{Bennedsen17} and \cite{McCrickerd18}. This enabled us to numerically verify the obtained solution, to quantify its approximation errors under various settings and, last but not least, to comment on suitability of the rBergomi model for calibration tasks to real market data based on AAPL stock options\footnote{These data sets were analysed and described in \cite{PospisilSobotkaZiegler18ee}}. The following conclusions were drawn:

\begin{itemize}
\item The approximation error is well behaved for short maturities (typically for less than 1M) and the error increases with time to maturity and $\xi$ parameter.
\item For medium-term expiries we are able to obtain well approximated prices only under low $\xi$ regimes.
\item Although, the approximation under a rough Volterra process involves several numerical integration procedures, it is much faster than MC simulation approach implemented in the same environment\footnote{The numerical trials were implemented in MATLAB environment, see \url{https://www.mathworks.com}.}. For the standard Wiener case (and in particular for the original Bergomi model), we can have an analytical pricing formula, but also more efficient and well developed MC simulation schemes.
\item Considering that the modelling approach studied has only few parameters, we were able to fit the sample market data surprisingly well. For calibration to short maturity smiles, we can use just the approximation formula -- this was verified by recomputing calibration errors using MC simulations. For calibration to the whole surface, one can utilize a newly introduced hybrid scheme which consist of a combination of approximation and simulation techniques. The idea is quite simple -- to use approximation formula for low maturities or for low $\xi$ values and MC simulations for the remaining computations. Suitability of this scheme was judged by a simple calibration to an ATMF-like backbone. We retrieved a well calibrated model, while saving a significant computational time compared to the calibration based on MC simulations only.
\end{itemize}

However, we remark that implementation of the approximation could be further improved -- for simplicity we used a simple trapezoidal quadrature to numerically evaluate integrals appearing in $U_t$ and $R_t$ expressions. 

Another possible improvement could focus on postulating variance dynamics in terms of a forward variance curve -- to ensure consistency with variance swaps. This is out-of-scope for the current submission and left for further research. However, some of the main ingredients for forward variance curves under the $\alpha$RFSV model are already introduced in this manuscript.

Also to fit some of the pronounced forward variance curves, one might need a more complex term structure -- a simple exponential drift term under the rBergomi model might not be sufficiently flexible. In this case, one can utilize our approach to obtain the approximation formula for a new rough volatility model. In fact, once the new volatility function $g$ is postulated it should be only a matter of algebraic operations to obtain the corresponding approximation formula. 

\section*{Funding}

The work of Jan Posp\'{\i}\v{s}il was partially supported by the Czech Science Foundation (GA\v{C}R) grant no. GA18-16680S ``Rough models of fractional stochastic volatility''.
The work of Josep Vives was partially supported by Spanish grant MEC MTM 2016-76420-P. 

\section*{Acknowledgements}

Computational resources were provided by the CESNET LM2015042 and the CERIT Scientific Cloud LM2015085, provided under the programme ``Projects of Large Research, Development, and Innovations Infrastructures''.

\appendix
\section{Decomposition formulas in the general model}\label{sec:error_terms}

In this appendix, we give the error terms for a decomposition of the general model.

The term (I) can be decomposed as
\begin{eqnarray*}
&&\frac{\rho}{2}\mathbb{E}\left[\int^{T}_{t}e^{-r(u-t)}\Lambda \Gamma BS(u,X_{u},v_{u})\sigma_{u}\d\la W,M\ra_{u}\right]-\Lambda \Gamma BS(t,X_{t},v_{t})U_{t}\\
&=&\frac{1}{8}\mathbb{E}\left[\int^{T}_{t}e^{-r(u-t)}\Lambda \Gamma^{3} BS(u,X_{u},v_{u})U_{u}\d\la M,M\ra_{u}\right]\\ 
&+&\frac{\rho}{2}\mathbb{E}\left[\int^{T}_{t}e^{-r(u-t)}\Lambda^{2} \Gamma^{2} BS(u,X_{u},v_{u})U_{u}\sigma_{u}\d\la W,M\ra_{u}\right]\\ 
&+&\rho\mathbb{E}\left[\int^{T}_{t}e^{-r(u-t)}\Lambda^{2} \Gamma BS(u,X_{u},v_{u})\sigma_{u} \d\la W,U\ra_{u}\right]\\ 
&+&\frac{1}{2}\mathbb{E}\left[\int^{T}_{t}e^{-r(u-t)}\Lambda \Gamma^{2} BS(u,X_{u},v_{u})\d\la M,U\ra_{u}\right].
\end{eqnarray*}

The term (II) can be decomposed as
\begin{eqnarray*}
&&\frac{1}{8}\mathbb{E}\left[\int^{T}_{t}e^{-r(u-t)}\Gamma^{2} BS(u,X_{u},v_{u})\d\la M,M\ra_{u}\right]-\Gamma^{2} BS(t,X_{t},v_{t})R_{t}\\ 
&=&\frac{1}{8}\mathbb{E}\left[\int^{T}_{t}e^{-r(u-t)}\Gamma^{4} BS(u,X_{u},v_{u})R_{u}\d\la M,M\ra_{u}\right]\\ 
&+&\frac{\rho}{2}\mathbb{E}\left[\int^{T}_{t}e^{-r(u-t)}\Lambda \Gamma^{3} BS(u,X_{u},v_{u})R_{u}\sigma_{u}\d\la W,M\ra_{u}\right]\\ 
&+&\rho\mathbb{E}\left[\int^{T}_{t}e^{-r(u-t)}\Lambda\Gamma^{2} BS(u,X_{u},v_{u})\sigma_{u} \d\la W,R\ra_{u}\right]\\ 
&+&\frac{1}{2}\mathbb{E}\left[\int^{T}_{t}e^{-r(u-t)}\Gamma^{3} BS(u,X_{u},v_{u})\d\la M,R\ra_{u}\right].
\end{eqnarray*}

\section{Upper-bound for decomposition formula for the exponential Wiener volatility model}\label{sec:Upp_bound_Wiener}

In this appendix we obtain the upper-bound for the decomposition formula for the exponential Wiener volatility model in Example \ref{ex:ExpWiener}. 

\subsection{Upper-bound for term (I)}\label{Upp_bound_Wiener_(I)}
For matter of convenience, we define the function
\begin{align*}
\chi_1(t,T\NEW{,\alpha}) &:= \int^{T}_{t} \exp\left\{\frac{1}{2}(9-3\NEW{\alpha})\xi^{2} (s-t) \right\} \left[\exp\left\{(2\NEW{-\alpha})\xi^{2}  (T-s)\right\}-1\right] \d s.
\intertext{We can rewrite $U_{t}$ as}
U_{t} &= \frac{\rho\sigma^{3}_{t}}{(2\NEW{-\alpha})\xi}\chi_1(t,T\NEW{,\alpha}).
\intertext{It is easy to find that}
\d U_{t} 
&= \frac{\rho \d\sigma^{3}_{t}}{(2\NEW{-\alpha})\xi}\chi_1(t,T\NEW{,\alpha}) + \frac{\rho\sigma^{3}_{t}}{(2\NEW{-\alpha})\xi}\chi_1 ' (t,T\NEW{,\alpha}) \d t \\
&= \frac{\rho \left(3\xi\sigma^{3}_{t} \d W_{t} + \frac{1}{2}(18-3\NEW{\alpha})\xi^{2} \sigma^{3} \d t\right)}{(2\NEW{-\alpha})\xi}\chi_1(t,T\NEW{,\alpha}) + \frac{\rho\sigma^{3}_{t}}{(2\NEW{-\alpha})\xi}\chi_1 ' (t,T\NEW{,\alpha}) \d t .
\end{align*}
If $\alpha\geq 0$, we can find an upper-bound for $\chi_1(t,T\NEW{,\alpha})$ which is
\begin{eqnarray*}
\chi_1(t,T\NEW{,\alpha})&\leq& \int^{T}_{t} \exp\left\{\frac{9}{2}\xi^{2} (s-t) \right\} \left[\exp\left\{2\xi^{2}  (T-s)\right\}-1\right] \d s\\
&=&\frac{2}{45 \xi ^2} \left(-9 \exp\left\{2 \xi ^2 (T-t) \right\}+4 \exp\left\{\frac{9}{2} \xi ^2 (T-t) \right\}+5\right).
\end{eqnarray*}
We can re-write the decomposition formula as
\begin{eqnarray*}
&&\frac{\rho}{2}\mathbb{E}_{t}\left[\int^{T}_{t}e^{-r(u-t)}\Lambda \Gamma BS(u,X_{u},v_{u})\sigma_{u}\d\left\langle W,M\right\rangle_{u}\right]-\Lambda \Gamma BS(t,X_{t},v_{t})U_{t}\\ 
&=&\frac{1}{8}\mathbb{E}_{t}\left[\int^{T}_{t}e^{-r(u-t)}\left(\partial^{5}_{x}-2\partial^{4}_{x}+\partial^{3}_{x}\right) \Gamma BS(u,X_{u},v_{u})U_{u}\d\left\langle M,M\right\rangle_{u}\right]\\ 
&+&\frac{\rho}{2}\mathbb{E}_{t}\left[\int^{T}_{t}e^{-r(u-t)}\left(\partial^{4}_{x}-\partial^{3}_{x}\right) \Gamma BS(u,X_{u},v_{u})U_{u}\sigma_{u}\d\left\langle W,M\right\rangle_{u}\right]\\ 
&+&\rho\mathbb{E}_{t}\left[\int^{T}_{t}e^{-r(u-t)}\partial^{2}_{x}\Gamma BS(u,X_{u},v_{u})\sigma_{u} \d\left\langle W,U\right\rangle_{u}\right]\\ 
&+&\frac{1}{2}\mathbb{E}_{t}\left[\int^{T}_{t}e^{-r(u-t)}\left(\partial^{3}_{x}-\partial^{2}_{x}\right)\Gamma BS(u,X_{u},v_{u})\d\left\langle M,U\right\rangle_{u}\right].
\end{eqnarray*}
Applying Lemma \ref{lemaclau} 
and using the definition of $a_{u}$, we obtain
\begin{eqnarray*}
&&\left|\frac{\rho}{2}\mathbb{E}_{t}\left[\int^{T}_{t}e^{-r(u-t)}\Lambda \Gamma BS(u,X_{u},v_{u})\sigma_{u}\d\left\langle W,M\right\rangle_{u}\right]-\Lambda \Gamma BS(t,X_{t},v_{t})U_{t}\right|\\ 
&\leq&\frac{C}{8}\mathbb{E}_{t}\left[\int^{T}_{t}e^{-r(u-t)}\left(\frac{1}{a^6_{u}}+\frac{2}{a^5_{u}}+\frac{1}{a^4_{u}}\right)U_{u}\d\left\langle M,M\right\rangle_{u}\right]\\ 
&+&\frac{C\rho}{2}\mathbb{E}_{t}\left[\int^{T}_{t}e^{-r(u-t)}\left(\frac{1}{a^5_{u}}+\frac{1}{a^4_{u}}\right)U_{u}\sigma_{u}\d\left\langle W,M\right\rangle_{u}\right]\\ 
&+&C\rho\mathbb{E}_{t}\left[\int^{T}_{t}e^{-r(u-t)}\frac{1}{a^3_{u}}\sigma_{u} \d\left\langle W,U\right\rangle_{u}\right]\\ 
&+&\frac{C}{2}\mathbb{E}_{t}\left[\int^{T}_{t}e^{-r(u-t)}\left(\frac{1}{a^4_{u}}+\frac{1}{a^3_{u}}\right)\d\left\langle M,U\right\rangle_{u}\right].
\end{eqnarray*}
Noting that $a_{u}=\sigma_{u}\phi^{1/2}(u,T,\alpha)$, where $\phi$ was defined in \eqref{e:phi},
\begin{eqnarray*}
&&\left|\frac{\rho}{2}\mathbb{E}_{t}\left[\int^{T}_{t}e^{-r(u-t)}\Lambda \Gamma BS(u,X_{u},v_{u})\sigma_{u}\d\left\langle W,M\right\rangle_{u}\right]-\Lambda \Gamma BS(t,X_{t},v_{t})U_{t}\right|\\ 
&\leq&\frac{C\rho\xi}{2(2-\alpha)}\mathbb{E}_{t}\left[\int^{T}_{t}e^{-r(u-t)}\left(\frac{\sigma_{u}}{\phi(u,T,\alpha)}+\frac{2\sigma^{2}_{u}}{\phi^{1/2}(u,T,\alpha)}+\sigma^{3}_{u}\right)\chi_1(u,T,\alpha) \d u\right]\\ 
&+&\frac{C\rho^{2}}{(2-\alpha)}\mathbb{E}_{t}\left[\int^{T}_{t}e^{-r(u-t)}\left(\frac{\sigma_{u}}{\phi^{3/2}(u,T,\alpha)}+\frac{\sigma^{2}_{u}}{\phi(u,T,\alpha)}\right)\chi_1(u,T,\alpha) \d u\right]\\
&+&\frac{C\rho^{2} 3 }{(2-\alpha)}\mathbb{E}_{t}\left[\int^{T}_{t}e^{-r(u-t)}\frac{\sigma_{u}}{\phi^{3/2}(u,T,\alpha)}\chi_1(u,T,\alpha) \d u\right]\\ 
&+&\frac{3C\rho\xi}{(2-\alpha)}\mathbb{E}_{t}\left[\int^{T}_{t}e^{-r(u-t)}\left(\frac{\sigma_{u}}{\phi(u,T,\alpha)}+\frac{\sigma^{2}_{u}}{\phi^{1/2}(u,T,\alpha)}\right)\chi_1(u,T,\alpha) \d u\right].
\end{eqnarray*}
Being $\sigma_{u}$ the only stochastic component, we can get the expectation inside. Each power of $\sigma_{u}$ has a different forward value, in this case, we can bound all terms by $\exp\left\{\frac{9}{2}\xi^{2} (u-t) \right\}$. We have that
\begin{eqnarray*}
&&\left|\frac{\rho}{2}\mathbb{E}_{t}\left[\int^{T}_{t}e^{-r(u-t)}\Lambda \Gamma BS(u,X_{u},v_{u})\sigma_{u}\d\left\langle W,M\right\rangle_{u}\right]-\Lambda \Gamma BS(t,X_{t},v_{t})U_{t}\right|\\ 
&\leq&\frac{C\rho\xi}{2(2-\alpha)}\int^{T}_{t}e^{-r(u-t)}\exp\left\{\frac{9}{2}\xi^{2}(u-t)\right\}\left(\frac{\sigma_{t}}{\phi(u,T,\alpha)}+\frac{2\sigma^{2}_{t}}{\phi^{1/2}(u,T,\alpha)}+\sigma^{3}_{t}\right)\chi_1(u,T,\alpha) \d u\\ 
&+&\frac{C\rho^{2}}{(2-\alpha)}\int^{T}_{t}e^{-r(u-t)}\exp\left\{\frac{9}{2}\xi^{2} (u-t) \right\}\left(\frac{\sigma_{t}}{\phi^{3/2}(u,T,\alpha)}+\frac{\sigma^{2}_{t}}{\phi(u,T,\alpha)}\right)\chi_1(u,T,\alpha) \d u\\ 
&+&\frac{C\rho^{2} 3 }{(2-\alpha)}\int^{T}_{t}e^{-r(u-t)}\exp\left\{\frac{9}{2}\xi^{2} (u-t) \right\}\frac{\sigma_{t}}{\phi^{3/2}(u,T,\alpha)}\chi_1(u,T,\alpha) \d u\\ 
&+&\frac{3C\rho\xi}{(2-\alpha)}\int^{T}_{t}e^{-r(u-t)}\exp\left\{\frac{9}{2}\xi^{2} (u-t) \right\}\left(\frac{\sigma_{t}}{\phi(u,T,\alpha)}+\frac{\sigma^{2}_{t}}{\phi^{1/2}(u,T,\alpha)}\right)\chi_1(u,T,\alpha) \d u.
\end{eqnarray*}
Substituting $\phi(u,T,\alpha)$ and using the upper-bound for $\chi_1(u,T,\alpha)$ when $\alpha\geq0$, we have
\begin{eqnarray*}
&&\left|\frac{\rho}{2}\mathbb{E}_{t}\left[\int^{T}_{t}e^{-r(u-t)}\Lambda \Gamma BS(u,X_{u},v_{u})\sigma_{u}\d\left\langle W,M\right\rangle_{u}\right]-\Lambda \Gamma BS(t,X_{t},v_{t})U_{t}\right|\\ \nonumber
&\leq&\frac{C\rho\xi}{45(2-\alpha)\xi^2}\int^{T}_{t}e^{-r(u-t)}\exp\left\{\frac{9}{2}\xi^{2}(u-t)\right\}\\
&\qquad&\Bigl(\sigma_{t}\frac{2\xi^{2}}{\left[\exp\left\{2\xi^{2}  (T-u)\right\}-1\right]}+2\sigma^{2}_{t}\frac{\sqrt{2}\xi}{\left[\exp\left\{(2-\alpha)\xi^{2}  (T-u)\right\}-1\right]^\frac{1}{2}}+\sigma^{3}_{t}\Bigr)\\
&\qquad& \left(-9 \exp\left\{2 \xi ^2 (T-u)\right\}+4 \exp\left\{\frac{9}{2} \xi ^2 (T-u)\right\}+5\right) \d u\\
&+&\frac{2C\rho^{2}}{45(2-\alpha) \xi^2}\int^{T}_{t}e^{-r(u-t)}\exp\left\{\frac{9}{2}\xi^{2} (u-t) \right\}\\
&\qquad&\Bigl(\sigma_{t}\frac{2^\frac{3}{2}\xi^{3}}{\left[\exp\left\{(2-\alpha)\xi^{2}  (T-u)\right\}-1\right]^\frac{3}{2}}+\sigma^{2}_{t}\frac{2\xi^{2}}{\left[\exp\left\{(2-\alpha)\xi^{2}  (T-u)\right\}-1\right]}\Bigr)\\
&\qquad&\left(-9 \exp\left\{2 \xi ^2 (T-u)\right\}+4 \exp\left\{\frac{9}{2} \xi ^2 (T-u)\right\}+5\right) \d u\\
&+&\frac{6C\rho^{2} }{45(2-\alpha)\xi^2}\int^{T}_{t}e^{-r(u-t)}\exp\left\{\frac{9}{2}\xi^{2} (u-t) \right\}\\
&\qquad&\sigma_{t}\frac{2^\frac{3}{2}\xi^{3}}{\left[\exp\left\{(2-\alpha)\xi^{2}  (T-u)\right\}-1\right]^\frac{3}{2}} \left(-9 \exp\left\{2 \xi ^2 (T-u)\right\}+4 \exp\left\{\frac{9}{2} \xi ^2 (T-u)\right\}+5\right) \d u\\
&+&\frac{6C\rho\xi}{45(2-\alpha)\xi^2}\int^{T}_{t}e^{-r(u-t)}\exp\left\{\frac{9}{2}\xi^{2} (u-t) \right\}\\
&\qquad&\left(\sigma_{t}\frac{2\xi^{2}}{\left[\exp\left\{(2-\alpha)\xi^{2}  (T-u)\right\}-1\right]}+\sigma^{2}_{t}\frac{\sqrt{2}\xi}{\left[\exp\left\{(2-\alpha)\xi^{2}  (T-u)\right\}-1\right]^\frac{1}{2}}\right)\\
&\qquad& \left(-9 \exp\left\{2 \xi ^2 (T-u)\right\}+4 \exp\left\{\frac{9}{2} \xi ^2 (T-u)\right\}+5\right) \d u.
\end{eqnarray*}
The above upper-bound error is difficult to interpret. In order to do this, we derive a Taylor expansion for one of the terms. Then, the following error behaviour is retrieved:
$$C\frac{ \rho \xi^3   \sigma_{t}  T^{3/2}}{6 (\alpha -2)^2} \left(\frac{32 \sqrt{2} \rho }{\sqrt{-(\alpha -2) \xi ^2}}+21 \sqrt{T}\right).$$

\subsection{Upper-bound for term (II)}\label{Upp_bound_Wiener_(II)}
For matter of convenience, we define the function
\begin{align*}
\chi_2(t,T\NEW{,\alpha}) &:= \int^{T}_{t}\exp\left\{(8\NEW{-2\alpha})\xi^{2} (s-t) \right\} \left[\exp\left\{(2\NEW{-\alpha})\xi^{2}  (T-s)\right\}-1\right]^{2} \d s
\intertext{We can re-write $R_{t}$ as}
R_{t} &= \frac{\sigma^{4}_{t}}{2(2\NEW{-\alpha})^{2}\xi^{2}}\chi_2(t,T\NEW{,\alpha}).
\intertext{It is easy to find that}
\d R_{t} 
&= \frac{\d\sigma^{4}_{t}}{2(2\NEW{-\alpha})^{2}\xi^{2}}\chi_2(t,T\NEW{,\alpha}) + \frac{\sigma^{4}_{t}}{2(2\NEW{-\alpha})^{2}\xi^{2}}\chi_2 ' (t,T\NEW{,\alpha}) \d t \\
&= \frac{4\xi\sigma^{4}_{t}\d W_{t} + 2(8\NEW{-\alpha})\sigma^{4}_{t} \d t}{2(2\NEW{-\alpha})^{2}\xi^{2}}\chi_2(t,T\NEW{,\alpha}) + \frac{\sigma^{4}_{t}}{2(2\NEW{-\alpha})^{2}\xi^{2}}\chi_2 ' (t,T\NEW{,\alpha}) \d t.
\end{align*}
If $\alpha\geq 0$, we can find an upper-bound for $\chi_2(t,T,\alpha)$ which is
\begin{eqnarray*}
\chi_2(t,T,\alpha)&\leq& \int^{T}_{t}\exp\left\{8\xi^{2} (s-t) \right\} \left[\exp\left\{2\xi^{2}  (T-s)\right\}-1\right]^{2} \d s\\
&=&\frac{1}{24\xi^{2}}\left(\exp\left\{2\xi^{2}  (T-t)\right\}-1\right)^{3}\left(\exp\left\{2\xi^{2}  (T-t)\right\}+3\right).
\end{eqnarray*}
We can re-write the decomposition formula as
\begin{eqnarray*}
&&\frac{1}{8}\mathbb{E}_{t}\left[\int^{T}_{t}e^{-r(u-t)}\Gamma^{2} BS(u,X_{u},v_{u})\d\left\langle M,M\right\rangle_{u}\right]-\Gamma^{2} BS(t,X_{t},v_{t})R_{t}\\
&=&\frac{1}{8}\mathbb{E}_{t}\left[\int^{T}_{t}e^{-r(u-t)}\left(\partial^{6}_{x}-3\partial^{5}_{x}+3\partial^{4}_{x}- \partial^{3}_{x}\right)\Gamma BS(u,X_{u},v_{u})R_{u}\d\left\langle M,M\right\rangle_{u}\right]\\ 
&+&\frac{\rho}{2}\mathbb{E}_{t}\left[\int^{T}_{t}e^{-r(u-t)}\left(\partial^{5}_{x}-2\partial^{4}_{x}+\partial^{3}_{x}\right) \Gamma BS(u,X_{u},v_{u})R_{u}\sigma_{u}\d\left\langle W,M\right\rangle_{u}\right]\\ 
&+&\rho\mathbb{E}_{t}\left[\int^{T}_{t}e^{-r(u-t)}\left(\partial^{3}_{x}-\partial^{2}_{x}\right) \Gamma BS(u,X_{u},v_{u})\sigma_{u} \d\left\langle W,R\right\rangle_{u}\right]\\ 
&+&\frac{1}{2}\mathbb{E}_{t}\left[\int^{T}_{t}e^{-r(u-t)}\left(\partial^{4}_{x}-2\partial^{3}_{x}+\partial^{2}_{x}\right)\Gamma BS(u,X_{u},v_{u})\d\left\langle M,R\right\rangle_{u}\right].
\end{eqnarray*}
Applying Lemma \ref{lemaclau} 
and using the definition of $a_{u}$, we obtain
\begin{eqnarray*}
&&\left|\frac{1}{8}\mathbb{E}_{t}\left[\int^{T}_{t}e^{-r(u-t)}\Gamma^{2} BS(u,X_{u},v_{u})\d\left\langle M,M\right\rangle_{u}\right]-\Gamma^{2} BS(t,X_{t},v_{t})R_{t}\right|\\ 
&\leq&\frac{C}{8}\mathbb{E}_{t}\left[\int^{T}_{t}e^{-r(u-t)}\left(\frac{1}{a^7_{u}}+\frac{3}{a^6_{u}}+\frac{3}{a^5_{u}}+\frac{1}{a^4_{u}}\right)R_{u}\d\left\langle M,M\right\rangle_{u}\right]\\ 
&+&\frac{C\rho}{2}\mathbb{E}_{t}\left[\int^{T}_{t}e^{-r(u-t)}\left(\frac{1}{a^6_{u}}+\frac{2}{a^5_{u}}+\frac{1}{a^4_{u}}\right)R_{u}\sigma_{u}\d\left\langle W,M\right\rangle_{u}\right]\\ 
&+&C\rho\mathbb{E}_{t}\left[\int^{T}_{t}e^{-r(u-t)}\left(\frac{1}{a^4_{u}}+\frac{1}{a^3_{u}}\right)\sigma_{u} \d\left\langle W,R\right\rangle_{u}\right]\\ 
&+&\frac{C}{2}\mathbb{E}_{t}\left[\int^{T}_{t}e^{-r(u-t)}\left(\frac{1}{a^5_{u}}+\frac{2}{a^4_{u}}+\frac{1}{a^3_{u}}\right)\d\left\langle M,R\right\rangle_{u}\right].
\end{eqnarray*}
Noting that $a_{u}=\sigma_{u}\phi^{1/2}(u,T,\alpha)$, we have
\begin{eqnarray*}
&&\left|\frac{1}{8}\mathbb{E}_{t}\left[\int^{T}_{t}e^{-r(u-t)}\Gamma^{2} BS(u,X_{u},v_{u})d\left\langle M,M\right\rangle_{u}\right]-\Gamma^{2} BS(t,X_{t},v_{t})R_{t}\right|\\ 
&\leq&\frac{C}{4(2-\alpha)^{2}}\mathbb{E}_{t}\left[\int^{T}_{t}e^{-r(u-t)}\left(\frac{\sigma_{u}}{\phi^{3/2}(u,T,\alpha)}+\frac{3\sigma^{2}_{u}}{\phi(u,T,\alpha)}+\frac{3\sigma^{3}_{u}}{\phi^{1/2}(u,T,\alpha)}+\sigma^{4}_{u}\right)\chi_2(u,T,\alpha)  \d u\right]\\ 
&+&\frac{C\rho}{2(2-\alpha)^{2}\xi}\mathbb{E}_{t}\left[\int^{T}_{t}e^{-r(u-t)}\left(\frac{\sigma_{u}}{\phi^{2}(u,T,\alpha)}+\frac{2\sigma^{2}_{u}}{\phi^{3/2}(u,T,\alpha)}+\frac{\sigma^{3}_{u}}{\phi(u,T,\alpha)}\right)\chi_2(u,T,\alpha) \d u\right]\\
&+&\frac{2C\rho }{(2-\alpha)^{2}\xi}\mathbb{E}_{t}\left[\int^{T}_{t}e^{-r(u-t)}\left(\frac{\sigma_{u}}{\phi^{2}(u,T,\alpha)}+\frac{\sigma^{2}_{u}}{\phi^{3/2}(u,T,\alpha)}\right) \chi_2(u,T,\alpha) \d u\right]\\ 
&+&\frac{2C }{(2-\alpha)^{2}}\mathbb{E}_{t}\left[\int^{T}_{t}e^{-r(u-t)}\left(\frac{\sigma_{u}}{\phi^{3/2}(u,T,\alpha)}+\frac{2\sigma^{2}_{u}}{\phi(u,T,\alpha)}+\frac{\sigma^{3}_{u}}{\phi^{1/2}(u,T,\alpha)}\right)\chi_2(u,T,\alpha) \d u\right].
\end{eqnarray*}
Being $\sigma_{u}$ the only stochastic component, we can get the expectation inside. Each power of $\sigma_{u}$ has a different forward value, in this case, we can bound all terms by $\exp\left\{8\xi^{2} (u-t) \right\}$. We have that
\begin{eqnarray*}
&&\left|\frac{1}{8}\mathbb{E}_{t}\left[\int^{T}_{t}e^{-r(u-t)}\Gamma^{2} BS(u,X_{u},v_{u})\d\left\langle M,M\right\rangle_{u}\right]-\Gamma^{2} BS(t,X_{t},v_{t})R_{t}\right|\\ 
&\leq&\frac{C}{4(2-\alpha)^{2} \xi^{2}}\int^{T}_{t}e^{-r(u-t)}\exp\left\{8\xi^{2} (u-t) \right\}\\
&\qquad&\left(\frac{\sigma_{t}}{\phi^{3/2}(u,T,\alpha)}+\frac{3\sigma^{2}_{t}}{\phi(u,T,\alpha)}+\frac{3\sigma^{3}_{t}}{\phi^{1/2}(u,T,\alpha)}+\sigma^{4}_{u}\right)\chi_2(u,T,\alpha) \d u\\ 
&+&\frac{C\rho}{2(2-\alpha)^{2}\xi^3}\int^{T}_{t}e^{-r(u-t)}\exp\left\{8\xi^{2} (u-t) \right\}\\
&\qquad&\left(\frac{\sigma_{t}}{\phi^{2}(u,T,\alpha)}+\frac{2\sigma^{2}_{t}}{\phi^{3/2}(u,T,\alpha)}+\frac{\sigma^{3}_{t}}{\phi(u,T,\alpha)}\right)\chi_2(u,T,\alpha)  \d u\\ 
&+&\frac{2C\rho }{(2-\alpha)^{2}\xi^3}\int^{T}_{t}e^{-r(u-t)}\exp\left\{8\xi^{2} (u-t) \right\}\\
&\qquad&\left(\frac{\sigma_{t}}{\phi^{2}(u,T,\alpha)}+\frac{\sigma^{2}_{t}}{\phi^{3/2}(u,T,\alpha)}\right) \chi_2(u,T,\alpha)\d u\\ 
&+&\frac{2C }{(2-\alpha)^{2}\xi^{2}}\int^{T}_{t}e^{-r(u-t)}\exp\left\{8\xi^{2} (u-t) \right\}\\
&\qquad&\left(\frac{\sigma_{t}}{\phi^{3/2}(u,T,\alpha)}+\frac{2\sigma^{2}_{t}}{\phi(u,T,\alpha)}+\frac{\sigma^{3}_{t}}{\phi^{1/2}(u,T,\alpha)}\right)\chi_2(u,T,\alpha) \d u.
\end{eqnarray*}
Substituting $\phi(u,T,\alpha)$ and using the upper-bound for $\chi_2(u,T,\alpha)$ when $\alpha\geq0$, we have
\begin{eqnarray*}
&&\left|\frac{1}{8}\mathbb{E}_{t}\left[\int^{T}_{t}e^{-r(u-t)}\Gamma^{2} BS(u,X_{u},v_{u})\d\left\langle M,M\right\rangle_{u}\right]-\Gamma^{2} BS(t,X_{t},v_{t})R_{t}\right|\\ 
&\leq&\frac{C}{96(2-\alpha)^{2} \xi^{4}}\int^{T}_{t}e^{-r(u-t)}\exp\left\{8\xi^{2} (u-t) \right\}\\
&\qquad&\Bigl(\sigma_{t}\frac{(2-\alpha)^\frac{3}{2}\xi^{3}}{\left[\exp\left\{(2-\alpha)\xi^{2}  (T-u)\right\}-1\right]^\frac{3}{2}}\\
&\qquad& +3\sigma^{2}_{t}\frac{(2-\alpha)\xi^{2}}{\left[\exp\left\{(2-\alpha)\xi^{2}  (T-u)\right\}-1\right]}+3\sigma^{3}_{t}\frac{(2-\alpha)^\frac{1}{2}\xi}{\left[\exp\left\{(2-\alpha)\xi^{2}  (T-u)\right\}-1\right]^\frac{1}{2}}+\sigma^{4}_{u}\Bigr)\\
&\qquad&\left(\exp\left\{2\xi^{2}  (T-u)\right\}-1\right)^{3}\left(\exp\left\{2\xi^{2}  (T-u)\right\}+3\right) \d u\\ \nonumber
&+&\frac{C\rho}{48(2-\alpha)^{2}\xi^5}\int^{T}_{t}e^{-r(u-t)}\exp\left\{8\xi^{2} (u-t) \right\}\\
&\qquad&\Bigl(\sigma_{t}\frac{(2-\alpha)^{2}\xi^{4}}{\left[\exp\left\{(2-\alpha)\xi^{2}  (T-u)\right\}-1\right]^{2}}+2\sigma^{2}_{t}\left(\frac{(2-\alpha)^\frac{3}{2}\xi^{3}}{\left[\exp\left\{(2-\alpha)\xi^{2}  (T-u)\right\}-1\right]}\right)^\frac{3}{2}\\
&\qquad&+\sigma^{3}_{t}\frac{(2-\alpha)\xi^{2}}{\left[\exp\left\{(2-\alpha)\xi^{2}  (T-u)\right\}-1\right]}\Bigr)\\
&\qquad&\left(\exp\left\{2\xi^{2}  (T-u)\right\}-1\right)^{3}\left(\exp\left\{2\xi^{2}  (T-u)\right\}+3\right)  \d u\\ \nonumber
&+&\frac{C\rho }{12(2-\alpha)^{2}\xi^5}\int^{T}_{t}e^{-r(u-t)}\exp\left\{8\xi^{2} (u-t) \right\}\\
&\qquad&\Bigl(\sigma_{t}\left(\frac{(2-\alpha)^{2}\xi^{4}}{\left[\exp\left\{(2-\alpha)\xi^{2}  (T-u)\right\}-1\right]^{2}}\right)+\sigma^{2}_{t}\left(\frac{(2-\alpha)^\frac{3}{2}\xi^{3}}{\left[\exp\left\{(2-\alpha)\xi^{2}  (T-u)\right\}-1\right]^\frac{3}{2}}\right)\Bigr)\\
&\qquad& \left(\exp\left\{2\xi^{2}  (T-u)\right\}-1\right)^{3}\left(\exp\left\{2\xi^{2}  (T-u)\right\}+3\right) \d u\\ \nonumber
&+&\frac{C }{12(2-\alpha)^{2}\xi^{4}}\int^{T}_{t}e^{-r(u-t)}\exp\left\{8\xi^{2} (u-t) \right\}\\
&\qquad&\Bigl(\sigma_{t}\left(\frac{(2-\alpha)^\frac{3}{2}\xi^{3}}{\left[\exp\left\{(2-\alpha)\xi^{2}  (T-u)\right\}-1\right]^\frac{3}{2}}\right)+2\sigma^{2}_{t}\frac{(2-\alpha)\xi^{2}}{\left[\exp\left\{(2-\alpha)\xi^{2}  (T-u)\right\}-1\right]}\\
&\qquad&+\sigma^{3}_{t}\left(\frac{(2-\alpha)^\frac{1}{2}\xi}{\left[\exp\left\{(2-\alpha)\xi^{2}  (T-u)\right\}-1\right]^\frac{1}{2}}\right)\Bigr)\\
&\qquad&\left(\exp\left\{2\xi^{2}  (T-u)\right\}-1\right)^{3}\left(\exp\left\{2\xi^{2}  (T-u)\right\}+3\right) \d u.
\end{eqnarray*}
The above upper-bound error is difficult to interpret. In order to this, we do a Taylor analysis of one term. Then, the following error behavior is retrieved
$$C \frac{ \xi  \sigma_{t}  T^2}{15 (\alpha -2)^4} \left(5 \left(20 \rho +\sqrt{2}\right)+32 \sqrt{2} \sqrt{T} \sqrt{-(\alpha -2) \xi ^2}\right).$$
\clearpage

\bibliographystyle{references/styles/jp}

\begin{thebibliography}{42}
\providecommand{\natexlab}[1]{#1}
\providecommand{\url}[1]{\texttt{#1}}
\providecommand{\urlprefix}{URL }
\expandafter\ifx\csname urlstyle\endcsname\relax
  \providecommand{\doi}[1]{doi:\discretionary{}{}{}#1}\else
  \providecommand{\doi}{doi:\discretionary{}{}{}\begingroup
  \urlstyle{rm}\Url}\fi
\providecommand{\eprint}[2][]{\url{#2}}

\bibitem[{Albrecher, Mayer, Schoutens, and Tistaert(2007)}]{Albrecher07}
\textsc{Albrecher, H., Mayer, P., Schoutens, W., and Tistaert, J.} (2007).
\newblock \emph{The little {H}eston trap}.
\newblock Wilmott Magazine 2007(January/February), 83--92.

\bibitem[{Alfonsi(2010)}]{Alfonsi10}
\textsc{Alfonsi, A.} (2010).
\newblock \emph{High order discretization schemes for the {CIR} process:
  {A}pplication to affine term structure and {H}eston models}.
\newblock Math. Comp. 79(269), 209--237.
\newblock ISSN 0025-5718.
\newblock \doi{10.1090/S0025-5718-09-02252-2}.

\bibitem[{Al{\`o}s(2012)}]{Alos12}
\textsc{Al{\`o}s, E.} (2012).
\newblock \emph{A decomposition formula for option prices in the {H}eston model
  and applications to option pricing approximation}.
\newblock Finance Stoch. 16(3), 403--422.
\newblock ISSN 0949-2984.
\newblock \doi{10.1007/s00780-012-0177-0}.

\bibitem[{Al{\`{o}}s, Chatterjee, Tudor, and Wang(2019)}]{Alos19}
\textsc{Al{\`{o}}s, E., Chatterjee, R., Tudor, S., and Wang, T.-H.} (2019).
\newblock \emph{Target volatility option pricing in lognormal fractional {SABR}
  model}.
\newblock Quant. Finance 19(8), 1339--1356.
\newblock ISSN 1469-7688.
\newblock \doi{10.1080/14697688.2019.1574021}.

\bibitem[{Al{\`o}s, de~Santiago, and Vives(2015)}]{Alos15}
\textsc{Al{\`o}s, E., de~Santiago, R., and Vives, J.} (2015).
\newblock \emph{Calibration of stochastic volatility models via second-order
  approximation: The {H}eston case}.
\newblock Int. J. Theor. Appl. Finance 18(6), 1--31.
\newblock ISSN 0219-0249.
\newblock \doi{10.1142/S0219024915500363}.

\bibitem[{Al{\`{o}}s, Le{\'{o}}n, and Vives(2007)}]{Alos07}
\textsc{Al{\`{o}}s, E., Le{\'{o}}n, J.~A., and Vives, J.} (2007).
\newblock \emph{On the short-time behavior of the implied volatility for
  jump-diffusion models with stochastic volatility}.
\newblock Finance Stoch. 11(4), 571--589.
\newblock ISSN 0949-2984.
\newblock \doi{10.1007/s00780-007-0049-1}.

\bibitem[{Al\`os, Mazet, and Nualart(2000)}]{Alos00}
\textsc{Al\`os, E., Mazet, O., and Nualart, D.} (2000).
\newblock \emph{Stochastic calculus with respect to fractional {B}rownian
  motion with {H}urst parameter lesser than {$\frac 12$}}.
\newblock Stochastic Process. Appl. 86(1), 121--139.
\newblock ISSN 0304-4149.
\newblock \doi{10.1016/S0304-4149(99)00089-7}.

\bibitem[{Attari(2004)}]{Attari04}
\textsc{Attari, M.} (2004).
\newblock \emph{Option pricing using {F}ourier transforms: A numerically
  efficient simplification}.
\newblock \doi{10.2139/ssrn.520042}.
\newblock Available at SSRN: \url{http://ssrn.com/abstract=520042}.

\bibitem[{Bates(1996)}]{Bates96}
\textsc{Bates, D.~S.} (1996).
\newblock \emph{Jumps and stochastic volatility: Exchange rate processes
  implicit in {D}eutsche mark options}.
\newblock Rev. Financ. Stud. 9(1), 69--107.
\newblock \doi{10.1093/rfs/9.1.69}.

\bibitem[{Baustian, Mr\'{a}zek, Posp\'{\i}\v{s}il, and
  Sobotka(2017)}]{BaustianMrazekPospisilSobotka17asmb}
\textsc{Baustian, F., Mr\'{a}zek, M., Posp\'{\i}\v{s}il, J., and Sobotka, T.}
  (2017).
\newblock \emph{Unifying pricing formula for several stochastic volatility
  models with jumps}.
\newblock Appl. Stoch. Models Bus. Ind. 33(4), 422--442.
\newblock ISSN 1524-1904.
\newblock \doi{10.1002/asmb.2248}.

\bibitem[{Bayer, Friz, and Gatheral(2016)}]{Bayer16}
\textsc{Bayer, C., Friz, P., and Gatheral, J.} (2016).
\newblock \emph{Pricing under rough volatility}.
\newblock Quant. Finance 16(6), 887--904.
\newblock ISSN 1469-7688.
\newblock \doi{10.1080/14697688.2015.1099717}.

\bibitem[{Benhamou, Gobet, and Miri(2010)}]{Benhamou10}
\textsc{Benhamou, E., Gobet, E., and Miri, M.} (2010).
\newblock \emph{Time dependent {H}eston model}.
\newblock SIAM J. Finan. Math. 1(1), 289--325.
\newblock ISSN 1945-497X.
\newblock \doi{10.1137/090753814}.

\bibitem[{Bennedsen, Lunde, and Pakkanen(2017)}]{Bennedsen17}
\textsc{Bennedsen, M., Lunde, A., and Pakkanen, M.~S.} (2017).
\newblock \emph{Hybrid scheme for {B}rownian semistationary processes}.
\newblock Finance Stoch. 21(4), 931--965.
\newblock ISSN 0949-2984.
\newblock \doi{10.1007/s00780-017-0335-5}.

\bibitem[{Comte, Coutin, and Renault(2012)}]{Comte12}
\textsc{Comte, F., Coutin, L., and Renault, E.} (2012).
\newblock \emph{Affine fractional stochastic volatility models}.
\newblock Ann. Finance 8(2--3), 337--378.
\newblock ISSN 1614-2446.
\newblock \doi{10.1007/s10436-010-0165-3}.

\bibitem[{Comte and Renault(1998)}]{ComteRenault98}
\textsc{Comte, F. and Renault, E.~M.} (1998).
\newblock \emph{Long memory in continuous-time stochastic volatility models}.
\newblock Math. Finance 8(4), 291--323.
\newblock ISSN 0960-1627.
\newblock \doi{10.1111/1467-9965.00057}.

\bibitem[{Cont(2001)}]{Cont01}
\textsc{Cont, R.} (2001).
\newblock \emph{Empirical properties of asset returns: stylized facts and
  statistical issues}.
\newblock Quant. Finance 1(2), 223--236.
\newblock ISSN 1469-7688.
\newblock \doi{10.1080/713665670}.

\bibitem[{Duffie, Pan, and Singleton(2000)}]{Duffie00}
\textsc{Duffie, D., Pan, J., and Singleton, K.} (2000).
\newblock \emph{Transform analysis and asset pricing for affine
  jump-diffusions}.
\newblock Econometrica 68(6), 1343--1376.
\newblock ISSN 0012-9682.
\newblock \doi{10.1111/1468-0262.00164}.

\bibitem[{El~Euch, Fukasawa, Gatheral, and Rosenbaum(2019)}]{ElEuchEtAl19}
\textsc{El~Euch, O., Fukasawa, M., Gatheral, J., and Rosenbaum, M.} (2019).
\newblock \emph{Short-term at-the-money asymptotics under stochastic volatility
  models}.
\newblock SIAM J. Finan. Math. 10(2), 491--511.
\newblock ISSN 1945-497X.
\newblock \doi{10.1137/18M1167565}.

\bibitem[{El~Euch and Rosenbaum(2019)}]{ElEuchRosenbaum19}
\textsc{El~Euch, O. and Rosenbaum, M.} (2019).
\newblock \emph{The characteristic function of rough {H}eston models}.
\newblock Math. Finance 29(1), 3--38.
\newblock ISSN 0960-1627.
\newblock \doi{10.1111/mafi.12173}.

\bibitem[{Elices(2008)}]{Elices08}
\textsc{Elices, A.} (2008).
\newblock \emph{{Models with time-dependent parameters using transform methods:
  application to Heston's model}}.
\newblock Available at arXiv: \url{https://arxiv.org/abs/0708.2020}.

\bibitem[{Funahashi and Kijima(2017)}]{Funahashi17}
\textsc{Funahashi, H. and Kijima, M.} (2017).
\newblock \emph{Does the {H}urst index matter for option prices under
  fractional volatility?}
\newblock Ann. Finance 13(1), 55--74.
\newblock ISSN 1614-2446.
\newblock \doi{10.1007/s10436-016-0289-1}.

\bibitem[{Gatheral, Jaisson, and Rosenbaum(2014)}]{Gatheral14}
\textsc{Gatheral, J., Jaisson, T., and Rosenbaum, M.} (2014).
\newblock \emph{Volatility is rough}.
\newblock \doi{10.2139/ssrn.2509457}.
\newblock Available at arXiv: \url{https://arxiv.org/abs/1410.3394}.

\bibitem[{Gatheral, Jaisson, and Rosenbaum(2018)}]{Gatheral18}
\textsc{Gatheral, J., Jaisson, T., and Rosenbaum, M.} (2018).
\newblock \emph{Volatility is rough}.
\newblock Quant. Finance 18(6), 933--949.
\newblock ISSN 1469-7688.
\newblock \doi{10.1080/14697688.2017.1393551}.

\bibitem[{Heston(1993)}]{Heston93}
\textsc{Heston, S.~L.} (1993).
\newblock \emph{A closed-form solution for options with stochastic volatility
  with applications to bond and currency options}.
\newblock Rev. Financ. Stud. 6(2), 327--343.
\newblock ISSN 0893-9454.
\newblock \doi{10.1093/rfs/6.2.327}.

\bibitem[{Hull and White(1987)}]{HullWhite87}
\textsc{Hull, J.~C. and White, A.~D.} (1987).
\newblock \emph{The pricing of options on assets with stochastic volatilities}.
\newblock J. Finance 42(2), 281--300.
\newblock ISSN 1540-6261.
\newblock \doi{10.1111/j.1540-6261.1987.tb02568.x}.

\bibitem[{Hurst(1951)}]{Hurst51}
\textsc{Hurst, H.~E.} (1951).
\newblock \emph{Long-term storage capacity of reservoirs}.
\newblock Trans. Amer. Soc. Civ. Eng. 116(1), 770--799.

\bibitem[{Jost(2008)}]{Jost08}
\textsc{Jost, C.} (2008).
\newblock \emph{On the connection between {M}olchan-{G}olosov and {M}andelbrot-
  van {N}ess representations of fractional {B}rownian motion}.
\newblock J. Integral Equations Appl. 20(1), 93--119.
\newblock ISSN 0897-3962.
\newblock \doi{10.1216/JIE-2008-20-1-93}.

\bibitem[{Kahl and J{\"a}ckel(2006)}]{KahlJackel06}
\textsc{Kahl, C. and J{\"a}ckel, P.} (2006).
\newblock \emph{Fast strong approximation monte carlo schemes for stochastic
  volatility models}.
\newblock Quant. Finance 6(6), 513--536.
\newblock ISSN 1469-7688.
\newblock \doi{10.1080/14697680600841108}.

\bibitem[{Kolmogorov(1940)}]{Kolmogorov40}
\textsc{Kolmogorov, A.~N.} (1940).
\newblock \emph{Wienersche {S}piralen und einige andere interessante {K}urven
  im {H}ilbertschen {R}aum}.
\newblock Dokl. Akad. Nauk SSSR 26, 115--118.

\bibitem[{L\'evy(1953)}]{Levy53}
\textsc{L\'evy, P.} (1953).
\newblock \emph{Random functions: {G}eneral theory with special references to
  {L}aplacian random functions}.
\newblock Univ. California Publ. Statist. 1, 331--390.

\bibitem[{Lewis(2000)}]{Lewis00}
\textsc{Lewis, A.~L.} (2000).
\newblock Option Valuation Under Stochastic Volatility: With {M}athematica
  code.
\newblock Finance Press, Newport Beach, CA.
\newblock ISBN 9780967637204.

\bibitem[{Mandelbrot and Van~Ness(1968)}]{Mandelbrot68}
\textsc{Mandelbrot, B. and Van~Ness, J.} (1968).
\newblock \emph{Fractional {B}rownian motions, fractional noises and
  applications.}
\newblock SIAM Rev. 10(4), 422--437.
\newblock \doi{10.1137/1010093}.

\bibitem[{McCrickerd and Pakkanen(2018)}]{McCrickerd18}
\textsc{McCrickerd, R. and Pakkanen, M.~S.} (2018).
\newblock \emph{{Turbocharging Monte Carlo pricing for the rough Bergomi
  model}}.
\newblock Quant. Finance 18(11), 1877--1886.
\newblock ISSN 1469-7688.
\newblock \doi{10.1080/14697688.2018.1459812}.

\bibitem[{Merino, Posp\'{\i}\v{s}il, Sobotka, and
  Vives(2018)}]{MerinoPospisilSobotkaVives18ijtaf}
\textsc{Merino, R., Posp\'{\i}\v{s}il, J., Sobotka, T., and Vives, J.} (2018).
\newblock \emph{Decomposition formula for jump diffusion models}.
\newblock Int. J. Theor. Appl. Finance 21(8), 1850052.
\newblock ISSN 0219-0249.
\newblock \doi{10.1142/S0219024918500528}.

\bibitem[{Merino and Vives(2015)}]{MerinoVives15}
\textsc{Merino, R. and Vives, J.} (2015).
\newblock \emph{A generic decomposition formula for pricing vanilla options
  under stochastic volatility models}.
\newblock Int. J. Stoch. Anal. pp. Art. ID 103647, 11.
\newblock ISSN 2090-3332.
\newblock \doi{10.1155/2015/103647}.

\bibitem[{Mikhailov and N{\"o}gel(2003)}]{MikhailovNogel03}
\textsc{Mikhailov, S. and N{\"o}gel, U.} (2003).
\newblock \emph{{H}eston's stochastic volatility model - implementation,
  calibration and some extensions.}
\newblock Wilmott magazine 2003(July), 74--79.

\bibitem[{Molchan and Golosov(1969)}]{Molchan69}
\textsc{Molchan, G.~M. and Golosov, J.~I.} (1969).
\newblock \emph{Gaussian stationary processes with asymptotic power spectrum}.
\newblock Sov. Math. Dokl. 10, 134--137.
\newblock ISSN 0197-6788.
\newblock Translation from Dokl. Akad. Nauk SSSR 184, 546--549 (1969).

\bibitem[{Mr{\'a}zek, Posp\'{\i}{\v{s}}il, and
  Sobotka(2016)}]{MrazekPospisilSobotka16ejor}
\textsc{Mr{\'a}zek, M., Posp\'{\i}{\v{s}}il, J., and Sobotka, T.} (2016).
\newblock \emph{On calibration of stochastic and fractional stochastic
  volatility models}.
\newblock European J. Oper. Res. 254(3), 1036--1046.
\newblock ISSN 0377-2217.
\newblock \doi{10.1016/j.ejor.2016.04.033}.

\bibitem[{Posp\'{\i}\v{s}il and Sobotka(2016)}]{PospisilSobotka16amf}
\textsc{Posp\'{\i}\v{s}il, J. and Sobotka, T.} (2016).
\newblock \emph{Market calibration under a long memory stochastic volatility
  model}.
\newblock Appl. Math. Finance 23(5), 323--343.
\newblock ISSN 1350-486X.
\newblock \doi{10.1080/1350486X.2017.1279977}.

\bibitem[{Posp\'{\i}\v{s}il, Sobotka, and
  Ziegler(2018)}]{PospisilSobotkaZiegler18ee}
\textsc{Posp\'{\i}\v{s}il, J., Sobotka, T., and Ziegler, P.} (2018).
\newblock \emph{Robustness and sensitivity analyses for stochastic volatility
  models under uncertain data structure}.
\newblock Empir. Econ. ISSN 0377-7332.
\newblock \doi{10.1007/s00181-018-1535-3}.
\newblock In press (published online first 2018-07-31).

\bibitem[{Sottinen and Viitasaari(2017)}]{SottinenViitasaari2017}
\textsc{Sottinen, T. and Viitasaari, L.} (2017).
\newblock \emph{Prediction law of fractional {B}rownian motion}.
\newblock Statist. Probab. Lett. 129(Supplement C), 155--166.
\newblock ISSN 0167-7152.
\newblock \doi{10.1016/j.spl.2017.05.006}.

\bibitem[{Thao(2006)}]{Thao06}
\textsc{Thao, T.~H.} (2006).
\newblock \emph{An approximate approach to fractional analysis for finance}.
\newblock Nonlinear Anal. Real World Appl. 7(1), 124--132.
\newblock ISSN 1468-1218.
\newblock \doi{10.1016/j.nonrwa.2004.08.012}.

\end{thebibliography}


\end{document}